%% file: main.tex
\pdfoutput=1
\documentclass[runningheads]{llncs}
\usepackage[T1]{fontenc}
\usepackage{graphicx}
\usepackage{orcidlink}
\usepackage{hyperref}
\hypersetup{
    colorlinks=true,
    linkcolor=blue,
    filecolor=blue,      
    urlcolor=blue,
    citecolor=blue,
}

\usepackage{amsmath}
\usepackage{amssymb,amsfonts}
\usepackage{verbatim}  %for \begin{comment}\end{comment}
\usepackage{stmaryrd}
\usepackage[inline]{enumitem}
\usepackage{mathpartir}
\usepackage{xspace}
\usepackage{subfigure}
\usepackage{macros}
\usepackage{algorithm}
\usepackage{algorithmic}
\usepackage{multicol}
\usepackage{adjustbox}
\usepackage{booktabs}
\usepackage{listings}
\usepackage{wrapfig}
\usepackage{siunitx}

\lstdefinestyle{compactC}{
  language=C,
  basicstyle=\ttfamily\small,
  numbers=left,
  numberstyle=\tiny,
  stepnumber=1,
  numbersep=5pt,
  tabsize=1,
  xleftmargin=2pt,
  xrightmargin=2pt,
  showspaces=false,
  showstringspaces=false,
  breaklines=true
}

\makeatletter
\providecommand*{\cupdot}{%
  \mathbin{%
    \mathpalette\@cupdot{}%
  }%
}
\newcommand*{\@cupdot}[2]{%
  \ooalign{%
    $\m@th#1\cup$\cr
    \hidewidth$\m@th#1\cdot$\hidewidth
  }%
}
\providecommand*{\bigcupdot}{%
  \mathbin{%
    \mathpalette\@bigcupdot{}%
  }%
}
\newcommand*{\@bigcupdot}[2]{%
  \ooalign{%
    $\m@th#1\bigcup$\cr
    \hidewidth$\m@th#1\cdot$\hidewidth
  }%
}
\makeatother
\begin{document}
\title{Bit-Precise CHC Satisfiability Using Theory-Modular Reasoning}

%
%\titlerunning{Abbreviated paper title}
% If the paper title is too long for the running head, you can set
% an abbreviated paper title here
%
\author{Omer Rappoport\,\orcidlink{0009-0003-1518-4033} \and
Orna Grumberg\,\orcidlink{0009-0005-9682-3312} \and
Yakir Vizel\,\orcidlink{0000-0002-5655-1667}}
% %
\authorrunning{O. Rappoport et al.}
% % First names are abbreviated in the running head.
% % If there are more than two authors, 'et al.' is used.
% %
\institute{Technion - Israel Institute of Technology, Haifa, Israel\\
\email{\{omer.r,orna,yvizel\}@cs.technion.ac.il}}
%\author{}
%\institute{}
%
\maketitle              % typeset the header of the contribution
\vspace{-15pt}
\input{sections/abstract}
\input{sections/intro_new}
\input{sections/preliminaries}

\input{sections/theory_transformations}
\input{sections/problem_definition}

\input{sections/algorithm}

\input{sections/implementation}
\input{sections/evaluation}
\input{sections/conclusion}
%
%
%
%
% ---- Bibliography ----
%
% BibTeX users should specify bibliography style 'splncs04'.
% References will then be sorted and formatted in the correct style.
%
\bibliographystyle{splncs04}
\bibliography{refs}
\clearpage
\appendix
\input{appendix/proofs}
\input{appendix/benchmarks}

\end{document}

%% file: sections/abstract.tex
\begin{abstract}

Deciding satisfiability of Constrained Horn Clauses (CHCs) modulo the theory of fixed-size bit-vectors ($\theoryBV$) is fundamental to bit-precise program verification.
However, state-of-the-art {\solver}s often struggle with $\theoryBV$, limiting scalability in bit-precise reasoning.

We present \mths, a theory-modular framework for deciding satisfiability of CHCs modulo $\theoryBV$ by combining reasoning in $\theoryBV$ and the theory of Integer Arithmetic ($\theoryIA$).
Given a CHC set modulo $\theoryBV$, \mths partitions it into two fragments interpreted over $\theoryBV$ and $\theoryIA$.
Moreover, it implements an algorithm that reasons about the fragments in a modular fashion, exchanges information between them via sound translations across theories, and determines satisfiability w.r.t. the original CHC set.

We implemented a prototype of \mths using Z3 and \spacer and evaluated it on bit-manipulating benchmarks.
Our evaluation shows that \mths significantly outperforms \spacer on these benchmarks.
\end{abstract}

\begin{comment}
Deciding satisfiability of Constrained Horn Clauses (CHCs) modulo the background theory of Fixed-Size Bit-Vectors ($\theoryBV$) is essential for bit-precise program safety verification. However, state-of-the-art {\solver}s often struggle when $\theoryBV$ is used, making such solvers impractical for bit-precise program safety verification.

In this paper, we introduce \mths, a novel theory-modular framework that allows deciding satisfiability of CHCs modulo $\theoryBV$ by using both $\theoryBV$ and the theory of Integer Arithmetic ($\theoryIA$). \mths allows to fragment a set of CHCs into two sets, one over $\theoryBV$ while the other is over $\theoryIA$. Moreover, it implements an algorithm that can reason about the different fragments and decide satisfiability w.r.t. the original set of clauses modulo $\theoryBV$.

We evaluated a prototype of \mths on a benchmark of bit-manipulating programs. Our evaluation shows that \mths significantly outperforms the state-of-the-art \solver \spacer on our benchmark.
\end{comment}

%% file: sections/intro_new.tex
\section{Introduction}

Constrained Horn Clauses (CHCs) is a fragment of First Order Logic (FOL) with many practical applications in Formal Methods~\cite{DBLP:conf/birthday/BjornerGMR15,DBLP:conf/nfm/RothenbergGVS23,DBLP:conf/esop/ItzhakySV24}. 
In particular, program safety can be reduced to satisfiability of a set of CHCs. Often, the reduction of program safety to CHCs is done w.r.t. the theory of Integer Arithmetic ($\theoryIA$). Since {\solver}s perform well when Integer Arithmetic is used as a background theory, this results in efficient program safety verification.
However, when program operations are required to be represented soundly relative to the low-level bit representation of data, \solver{s} do not perform as efficiently.

Encoding program safety verification w.r.t. bit-precise semantics can be naturally achieved through a reduction to %Constrained Horn Clauses (CHCs)
CHCs modulo the theory of Fixed-Size Bit-Vectors ($\theoryBV$). 
Alternatively, bit-precision can be modeled using $\theoryIA$ despite the fact that the domain is infinite. This is done by representing bit-precise operations and modular arithmetic via multiplication, division, and modulo constraints~\cite{DBLP:conf/vmcai/ZoharIMNNPRBT22}\footnote{Note that~\cite{DBLP:conf/vmcai/ZoharIMNNPRBT22} does not consider CHCs. The primary focus of that work is on solving formulas over $\qfBV$ using a bit-precise encoding  w.r.t. $\qfIA$.}. 
However, neither approach consistently yields an efficient verification procedure. 
This encoding choice presents a fundamental trade-off: $\theoryBV$ formulations often hinder the generalization capabilities of \solver{s}, whereas bit-precise $\theoryIA$ encodings produce complex constraints that are computationally expensive to process, particularly in the presence of bit-wise operations.

% Next I want to say that CHC solvers do not perform well for IA with bit-precise encoding because bit-wise operations generate complex terms, while CHC with BV struggle to generalize. We then need to say that our suggested framework solves this by utilizing the efficiency of CHC solvers wrt Integer Arithmetic, while keeping the inefficient part of bit-wise operations in the BV world.

In this paper, we present \mths, a novel framework designed to efficiently decide the satisfiability of CHCs modulo $\theoryBV$ by leveraging the complementary strengths of both $\theoryBV$ and $\theoryIA$.
Given a set of CHCs $\chcs(\theoryBV)$, \mths first partitions the input into two distinct fragments, $(\BVfragment(\theoryBV), \IAfragment(\theoryIA))$, each interpreted under its respective background theory. 
\mths then employs a dual-reasoning algorithm that processes both sets in a modular fashion to determine the satisfiability of the original $\chcs(\theoryBV)$.
The core intuition behind \mths is to exploit the fact that \solver{s} handle CHCs modulo $\theoryIA$ effectively when limited to arithmetic operations. 
By isolating purely arithmetic constraints in $\IAfragment(\theoryIA)$ while retaining bit-wise operations in $\BVfragment(\theoryBV)$, \mths bridges the aforementioned gap, effectively ``enjoying the best of both worlds'' to achieve superior performance.

%In this paper, we overcome this limitation by introducing \mths, a novel framework that allows deciding satisfiability of CHCs using multiple theories. 
%First, instead of considering a monolithic set of CHCs $\chcs(\theory)$, \mths allows encoding the satisfiability of $\chcs(\theory)$ as a tuple of $k$ distinct fragments $( \chcs_1(\theory_1),\ldots,\chcs_k(\theory_k))$. Namely, each fragment $\chcs_i$ is interpreted w.r.t. some background theory $\theory_i$.
%where $\theory_1 = \theory$ \yv{not sure we need that last part here}. 
%Second, \mths implements an algorithm that reasons about the tuple of fragments, determining the satisfiability of the input set $\chcs(\theory)$.
%Despite using different background theories, \mths determines satisfiability w.r.t. one, pre-determined, background theory.
%\yv{Should we say there are two main parts: the encoding of the problem, which is this fragmentation into $k$ sets, and the algorithmic part which operates on the suggested encoding?}

Deciding the satisfiability of $\chcs(\theoryBV)$ using the decomposed sets $(\BVfragment(\theoryBV), \IAfragment(\theoryIA))$ is a non-trivial task that requires bridging the differences in signatures and semantics of $\theoryBV$ and $\theoryIA$ through sound transformations. To enable modular reasoning, \mths integrates three essential components: 
\begin{enumerate*}[label={(\roman*)}]
    \item a \emph{theory transformer} to translate formulas and structures between $\theoryBV$ and $\theoryIA$;
    \item an \emph{interface} to facilitate communication of reasoning steps between $\BVfragment$ and $\IAfragment$; and 
    \item an \emph{algorithm} that leverages these components to analyze both fragments and decide the satisfiability of the original $\chcs(\theoryBV)$ while ensuring soundness.
\end{enumerate*}

Decomposing $\chcs(\theoryBV)$ into $(\BVfragment,\IAfragment)$ is done in three steps.
%The division of $\chcs(\theoryBV)$ into $(\BVfragment,\IAfragment)$ is done in three steps.
First, the set $\chcs$ is partitioned into two distinct subsets $(\BVfragment,\IAfragment)$ such that their union is $\chcs$. 
While the partitioning can be done arbitrarily, in our prototype implementation it is performed according to the constraints that appear in the body of CHCs in $\chcs$. Specifically, CHCs that include only arithmetic operations are placed in $\IAfragment$, while CHCs that include bit-wise operations are placed in $\BVfragment$. 
Next, $\IAfragment$ is re-encoded w.r.t. $\theoryIA$ using a theory transformer, which translates constraints over $\theoryBV$ to constraints over $\theoryIA$. 
%Since CHCs are formulas, this step requires a transformation of formulas over the signature of $\theoryBV$, to formulas over the signature of $\theoryIA$. 
Lastly, $\BVfragment$ and $\IAfragment$ share uninterpreted predicate symbols. 
For every shared predicate symbol $p$, two copies, $p^\BVshort$ and $p^\IAshort$, are created. 
Then, $p$ is substituted with $p^\BVshort$ and $p^\IAshort$ in $\BVfragment$ and $\IAfragment$, respectively. 
If $p$ is an \emph{interface predicate}, namely, w.l.o.g. $p$ appears in the head of a clause in $\BVfragment$ and in the body of a clause in $\IAfragment$, it is essential to maintain the ``logical connection'' between $\BVfragment$ and $\IAfragment$. 
To achieve that, a set of \emph{interface constraints} $\interfaces$ is generated for all such interface predicates. 
These constraints ensure consistency w.r.t. interpretations of the newly created copies of $p$.
The result of these steps is the tuple $\encoding=(\BVfragment,\IAfragment,\interfaces)$, called a \emph{theory-modular encoding}.

Reasoning over a theory-modular encoding $\encoding=(\BVfragment,\IAfragment,\interfaces)$ is performed by \solve.
The procedure relies on existing {\solver}s for $\theoryBV$ and $\theoryIA$.
It generates single-theory queries, delegates them to the corresponding {\solver}s, and coordinates the exchange of information between fragments as dictated by the interface constraints.
%an algorithm that generalizes the \spacer~\cite{DBLP:journals/fmsd/KomuravelliGC16} algorithm for CHC solving. Recall that \spacer itself is a generalization of Property Directed Reachability (PDR)~\cite{DBLP:conf/vmcai/Bradley11}. Hence, similarly to \spacer, \solve starts from the query and performs a backward search over the structure of the CHCs. In the case of \mths, since the structure is ``fragmented'', \solve needs to move between the different sets of CHCs. To do so, it uses the interface constraints. Given a \emph{proof obligation} (POB) generated in $\chcs_j$, in order to continue the backward traversal through $\chcs_i$, the POB, which is a formula in $\theory_j$ must be translated to a formula in $\theory_i$. This is done using a \emph{theory transformer} and the interface constraints between $\chcs_i$ and $\chcs_j$. Dually, when \solve learns a lemma, which serves as a candidate interpretation for an uninterpreted interface predicate $p^i$ in $\chcs_i$, it uses theory transformation and the interface constraints in order to also add the appropriate lemma to $p^j$ in $\chcs_j$.

We implemented a \emph{proof-of-concept} of \mths using Z3~\cite{DBLP:conf/tacas/MouraB08} and its \solver \spacer~\cite{DBLP:journals/fmsd/KomuravelliGC16}. For the evaluation, we created a benchmark with several examples that highlight the benefits of \mths over a standard \solver. This benchmark set consists of small realistic programs that use bit-manipulating, as well as arithmetic, operations. Our evaluation shows that while \spacer cannot handle these examples, even when considering bit-vectors of a relatively small size, 
%(e.g. bit-vectors of size 8), 
\mths easily solves these instances for much larger bit-vector sizes.% (up to 63).

\input{benchmarks/opposite_signs}

\paragraph{\bf Motivating Example.}
Our motivation comes from program safety verification. Assume that the reduction to CHCs is done modulo $\theoryBV$ in order to capture program safety soundly relative to the low-level bit representation of data. Using $\theoryBV$ as a background theory, however, limits the kind of programs that can be analyzed since {\solver}s often do not perform well when $\theoryBV$ is used. 
%As an alternative, one could use $\theoryIA$ as a background theory. Note that {\solver}s perform well for the theory of {$\theoryIA$}. However, if the program may experience an overflow, consists of non-linear (e.g. multiplication) or bit-precise (e.g. bit-wise XOR) operations, choosing $\theoryIA$ leads to a loss of precision.

Now, consider the set of clauses in Fig.~\ref{fig:opposite_sign}(b). It encodes a simple program (Fig.~\ref{fig:opposite_sign}(a)) that receives an input $x$ greater than 0, and initializes variables $a$ and $b$ to 0. Next it performs $x$ iterations of a loop, incrementing $a$ by 1, while decrementing $b$ by 1 at each iteration. The property states that the bit-wise XOR of $a$ and $b$ is always negative. This bit-wise operation is equivalent to checking whether $a$ and $b$ have opposite signs, which clearly holds for this program. When encoded using $\theoryBV$ (Fig.~\ref{fig:opposite_sign}(b)), given a time limit of 2 hours, \spacer can only solve it for bit-vectors of width 9. When encoding this problem using bit-precise $\theoryIA$~\cite{DBLP:conf/vmcai/ZoharIMNNPRBT22} \spacer can only solve the resulting CHCs for an encoding that captures bit-vectors of size 3, given the same 2 hours time limit.
%The keen-eyed reader may wonder about the performance of \spacer when encoding this problem using bit-precise $\theoryIA$~\cite{DBLP:conf/vmcai/ZoharIMNNPRBT22}. In this case, \spacer can only solve the resulting CHCs for an encoding that captures bit-vectors of size 3.

Next, consider the \mths encoding in Fig.~\ref{fig:opposite_sign}(c). The set of clauses is partitioned into two sets $\BVfragment$ and $\IAfragment$ and the encoding is performed w.r.t. $\theories=(\theoryBV,\theoryIA)$. The partitioning is not chosen arbitrarily. $\BVfragment$ includes a bit-precise operation (bit-wise XOR), while $\IAfragment$ includes only arithmetic. 
%Intuitively, {\solver}s perform better on arithmetic operations when using theories like \lia and \nia, and hence it is preferable to steer a way from $\theoryBV$ for this kind of CHCs. 
As a result, we choose $\theoryBV$ for $\BVfragment$ and $\theoryIA$ for $\IAfragment$. Using this encoding, \mths is able to solve this example instantly for bit-vectors up to a size of 62, considerably outperforming \spacer. 

\subsection{Related Work}

To the best of our knowledge, prior work on CHC-SAT has not explored a theory-modular approach that combines multiple background theories within a single framework. 
The work in~\cite{DBLP:conf/atva/RappoportGV23} partitions a set of CHCs based on its structure, but all subsets are interpreted over the same background theory. 
In contrast, \mths enables coordinated reasoning across different theories.

In program verification, several works combine reasoning over $\theoryBV$ and $\theoryIA$. 
%For example, \cite{DBLP:conf/tacas/GurfinkelBM14} suggests to iteratively reason about a program using $\theoryIA$, and then translates the result to $\theoryBV$ until a valid solution is found.
%Other works, like~\cite{DBLP:conf/tacas/SchusseleBDHJKP24,10.1145/3728905}, rely on abstraction and refinement.
For example,~\cite{DBLP:conf/tacas/GurfinkelBM14} alternates between reasoning in $\theoryIA$ and validating in $\theoryBV$, while other approaches~\cite{DBLP:conf/tacas/SchusseleBDHJKP24,10.1145/3728905} rely on abstraction and refinement. 
In contrast, \mths does not use abstraction and captures the semantics of $\theoryBV$ precisely.

Transforming logical formulas between theories, both syntactically and semantically, is not a new concept. 
In Institution theory~\cite{DBLP:conf/lop/GoguenB83,DBLP:journals/jacm/GoguenB92}, such transformations are formalized via signature and institution morphisms. 
In the context of Satisfiability Modulo Theories (SMT), similar translations have been developed, in particular for $\theoryBV$~\cite{DBLP:conf/tacas/OkudonoK20,DBLP:conf/vmcai/ZoharIMNNPRBT22,DBLP:conf/smt/BarthH24}. 
Theory combination in SMT is classically handled by the Nelson-Oppen framework~\cite{DBLP:journals/toplas/NelsonO79}, which assumes stable infiniteness and thus does not apply to $\theoryBV$, and restricts communication to equalities over shared variables. 
More general frameworks based on polite theories~\cite{DBLP:conf/frocos/RaniseRZ05,DBLP:conf/lpar/JovanovicB10} can handle $\theoryBV$. 
In contrast, \mths operates in the CHC-SAT setting and, rather than combining theories within a single formula, decomposes CHCs into fragments interpreted over different theories and coordinates them via the exchange of general quantifier-free formulas, going beyond equality-based communication.

%% file: benchmarks/opposite_signs.tex
\begin{figure}[t]
\centering

% Left column: Panel 1
\begin{minipage}[t]{0.32\textwidth}
\vspace{3em} 
\begin{lstlisting}[style=compactC]
void f(int x) {
  assume(x>0);
  int a=0,b=0;
  while(a<x) {
    a++;
    b--;
  }
  assert((a^b)<0);
}
\end{lstlisting}
\centering (a)
\end{minipage}
\hfill
% Right column: Panels 2 and 3 stacked vertically
\begin{minipage}[t]{0.67\textwidth}

% Top: Panel 2
\begin{minipage}[t]{\textwidth}
\[
{\scriptsize
\begin{aligned}
\chcs = \{ \;
& x \bvsgt 0 \; \rightarrow \; p(x,0,0), \\
& p(x,a,b), \; a \bvslt x \; \rightarrow \; p(x,a \bvadd 1,b \bvsub 1), \\
& p(x,a,b), \; \neg(a \bvslt x) \; \rightarrow \; q(a,b), \\
& q(a,b), \; \neg((a \bvxor b) \bvslt 0) \; \rightarrow \; \bot 
\; \}
\end{aligned}
}%
\]
\centering (b)
\end{minipage}

\vspace{0.5em} % small vertical gap

% Bottom: Panel 3
\begin{minipage}[t]{\textwidth}
\[
{\scriptsize 
\begin{aligned}
\chcs_{\IAshort} = \{ \;
  & \varIA{x}>0, \; \restrBV{\varIA{x}} \; \rightarrow \; \upredicateIA{p}(\varIA{x},0,0), \\
  & \upredicateIA{p}(\varIA{x},\varIA{a},\varIA{b}), \; \varIA{a} < \varIA{x} \; \rightarrow \; \upredicateIA{p}(\varIA{x},\varIA{a} + 1,\varIA{b} - 1), \\
  & \upredicateIA{p}(\varIA{x},\varIA{a},\varIA{b}), \; \neg(\varIA{a} < \varIA{x}) \; \rightarrow \; \upredicateIA{q}(\varIA{a},\varIA{b})
\; \} \\[1ex]
\chcs_{\BVshort} = \{ \;
  & \upredicateBV{q}(a,b), \; \neg((a \bvxor b) \bvslt 0) \; \rightarrow \; \bot
\; \} \\[1ex]
\interfaces = \{\; &\interface{\upredicateIA{q}}{\upredicateBV{q}}\; \}
\end{aligned}
}%
\]
\centering (c)
\end{minipage}

\end{minipage}

\caption{The \textsf{opp-signs} benchmark.
(a) A C function.
(b) Its CHC encoding over the bit-vector theory ($\chcs$).
(c) A theory-modular encoding, where arithmetic operations are handled in integer arithmetic ($\chcs_{\IAshort}$), while bit-precise operations remain in the bit-vector fragment ($\chcs_{\BVshort}$), with the original connections preserved via interface constraints ($\interfaces$).
All variables and constants in $\chcs$ have sort $\sortBV{n}$ for some $n \in \mathbb{N}^+$.
Bit-vector constants are written using their signed integer values for readability.
The term $(a \bvxor b) \bvslt 0$ evaluates to true if and only if $a$ and $b$ have opposite signs.}
\label{fig:opposite_sign}
\end{figure}

%% file: sections/preliminaries.tex
\section{Preliminaries}\label{sec:preliminaries}

\subsection{Many-Sorted First-Order Logic with Equality}
We assume the reader is familiar with \emph{first-order logic} and \emph{satisfiability}.

\paragraph{Syntax}
We fix an infinite set $\mathbf{S}$ of \emph{sort symbols} and an infinite set $\variables$ of \emph{variables}, where each variable is uniquely associated with a sort in $\mathbf{S}$.
A \emph{signature} $\signature$ is a tuple $(\sorts,\predicates,\functions)$, where $\sorts \subseteq \mathbf{S}$ is a set of sort symbols, and $\predicates$ and $\functions$ are sets of \emph{predicate} and \emph{function symbols}, respectively.
Each predicate symbol $p \in \predicates$ (resp., function symbol $f \in \functions$) is assigned a unique \emph{arity} $n \in \mathbb{N}$ and a \emph{type} $\predtype$ (resp., $\functype$), where $\sort_i, \sort \in \sorts$.
Function symbols of arity $0$ are called \emph{constant symbols}.
$\signature$-\emph{terms}, $\signature$-\emph{atoms}, and $\signature$-\emph{formulas} are defined in the standard way, subject to the requirement that they are \emph{well-sorted}, i.e., all function and predicate applications must respect their types.
A $\signature$-\emph{sentence} is a formula with no free variables.
We write $\varphi[\bar{x}]$ to denote a formula whose free variables are in $\bar{x}$.
The constants $\top$ and $\bot$ denote the truth values \texttt{true} and \texttt{false}, respectively.

\paragraph{Semantics} 
Given a signature $\signature = (\sorts, \predicates, \functions)$, a $\signature$-\emph{structure} $\structure$ maps each sort $\sort \in \sorts$ to a non-empty set $\domain_\sort^\structure$, called the \emph{domain} of $\sort$ in $\structure$.
It also maps each predicate symbol $p \in \predicates$ of type $\sort_1 \times \cdots \times \sort_n$ to a relation $p^\structure \subseteq \domain_p^\structure$, where $\domain_p^\structure = \domain_{\sort_1}^\structure \times \cdots \times \domain_{\sort_n}^\structure$, and similarly for function symbols.
For a variable $x \in \variables$ of sort $\sort$, we write $\domain_x^\structure$ for $\domain_\sort^\structure$, and extend this notation to tuples $\bar{x}$ in the natural way.
When a formula has free variables, satisfiability is defined relative to a structure and an \emph{assignment}, i.e., a partial function mapping variables to elements of their domains.
For $\bar{x} = (x_1, \dots, x_n)$ and $\bar{a} = (a_1, \dots, a_n) \in \domain_{\bar{x}}^\structure$, we write $\assignment{\bar{x}}{\bar{a}}$ for the assignment that maps each $x_i$ to $a_i$.
The \emph{satisfiability relation} is defined in the standard way.
We write $\structure, \assignment{\bar{x}}{\bar{a}} \models \varphi[\bar{x}]$ to denote that $\structure$ and $\assignment{\bar{x}}{\bar{a}}$ satisfy $\varphi$.
If $\varphi$ is a sentence, we simply write $\structure \models \varphi$.
%, and say that $\structure$ is a \emph{model} of $\varphi$.
For a $\signature$-formula $\varphi$ with free variables $\bar{x}$, the \emph{semantics} of $\varphi$ with respect to $\structure$, denoted $\semantics{\varphi}{\structure}{}$, is the set of all tuples that satisfy $\varphi$ under $\structure$, that is, $\semantics{\varphi}{\structure}{}=\set{\bar{a}\in\domain_{\bar{x}}^\structure}{\structure,\assignment{\bar{x}}{\bar{a}}\models\varphi}$.
We sometimes extend a signature with fresh predicate symbols.
Let $\predicates'$ be disjoint from $\predicates$, with types over $\sorts$. 
We write $\signature \cup \predicates'$ for the signature $(\sorts, \predicates \cup \predicates', \functions)$. 
For each $p' \in \predicates'$, let $R_{p'} \subseteq \domain_{p'}^\structure$. 
We write $\structure\{p' \mapsto R_{p'}\}_{p' \in \predicates'}$ for the $(\signature \cup \predicates')$-structure that interprets each $p'$ as $R_{p'}$ and agrees with $\structure$ on all symbols of $\signature$.

\paragraph{Theories} 
Theories are used to constrain the interpretation of certain symbols in a signature.
A $\signature$-\emph{theory} is a class of $\signature$-structures, called its \emph{models}.
A structure $\structure$ \emph{satisfies} a $\signature$-sentence $\varphi$ \emph{modulo} $\theory$, written $\structure \models_{\theory} \varphi$, if $\structure$ is a $\theory$-model and $\structure \models \varphi$.
A sentence $\varphi$ is \emph{satisfiable modulo $\theory$} if some $\theory$-model satisfies it, and \emph{valid modulo $\theory$}, written $\models_{\theory} \varphi$, if all $\theory$-models satisfy it.
Next, we introduce two theories, which are the focus of this work.

The signature $\signatureBV = (\sorts_{\BVshort}, \predicates_{\BVshort}, \functions_{\BVshort})$ and the theory $\theoryBV$ of fixed-size bit-vectors are defined in SMT-LIB~2~\cite{barrett2010smt}.\footnote{We use a subset sufficient to express all bit-vector symbols.}
The set $\sorts_{\BVshort}$ contains all sorts $\sortBV{n}$ for $n \in \mathbb{N}^+$, where $n$ denotes the bit-width.
The set $\predicates_{\BVshort}$ includes, for each $n$, equality $=$, unsigned comparisons $\bvult,\bvule$, and signed comparisons $\bvslt,\bvsle$ of type $\sortBV{n} \times \sortBV{n}$.\footnote{Symbols in $\signatureBV$ are overloaded for each $\sortBV{n}$.}
The set $\functions_{\BVshort}$ includes, for each $n \in \mathbb{N}^+$, a constant symbol of type $\sortBV{n}$ for each bit-vector of width $n$, represented as a bit-string.\footnote{We sometimes refer to a bit-vector constant by its unsigned or signed integer value.}
It also includes arithmetic operators $\bvadd,\bvsub,\bvmul,\bvudiv,\bvurem$, bitwise operators $\bvnot,\bvand,\bvor,\bvxor$, and shift operators $\bvshl,\bvlshr$, all of type $\sortBV{n} \times \sortBV{n} \to \sortBV{n}$ (except $\bvnot : \sortBV{n} \to \sortBV{n}$), as well as extraction $\bvextract{u}{l} : \sortBV{n} \to \sortBV{u-l+1}$ for $0 \le l \le u < n$, and concatenation $\bvconcat : \sortBV{n} \times \sortBV{m} \to \sortBV{n+m}$.
The class of quantifier-free $\signatureBV$-formulas is denoted $\qfBV$.
Throughout the paper, we fix a $\theoryBV$-model $\modelBV$

The signature $\signatureIA = (\sorts_{\IAshort}, \predicates_{\IAshort}, \functions_{\IAshort})$ and the theory $\theoryIA$ of integer arithmetic are also defined in SMT-LIB~2.
The set $\sorts_{\IAshort}$ consists of the single sort $\sortIA$.
The set $\predicates_{\IAshort}$ includes equality $=$ and comparisons $<,\leq$ of type $\sortIA \times \sortIA$.
The set $\functions_{\IAshort}$ includes a constant symbol of type $\sortIA$ for each integer.
It also includes the arithmetic operators $\intadd,\intsub,\intmul,\intdiv,\intrem$ of type $\sortIA \times \sortIA \to \sortIA$.
The class of quantifier-free $\signatureIA$-formulas is denoted $\qfIA$.
Throughout the paper, we fix a $\theoryIA$-model $\modelIA$.

\subsection{Constrained Horn Clauses}\label{subsec:chcs}

\paragraph{Syntax}
Let $\signature = (\sorts, \predicates, \functions)$ be a signature, and let $\upredicates$ be a set of fresh predicate symbols, called \emph{uninterpreted predicates}, disjoint from $\predicates$ and typed over $\sorts$. 

A \emph{Constrained Horn Clause} (\emph{CHC}) over $\signature \cup \upredicates$ is a sentence of the form $\forall \bar{x}.\; \body[\bar{x}] \rightarrow \head[\bar{x}]$, for which:
\begin{enumerate*}[label=(\roman*)]
    \item The \emph{body} $\body[\bar{x}]$ is $\phi[\bar{x}] \wedge \bigwedge_i p_i(\bar{x}_{p_i})$, where $\phi$ (the \emph{constraint}) is a quantifier-free $\signature$-formula over $\bar{x}$\footnote{For simplicity, we assume all variables in $\bar{x}$ appear in the constraint.}, and each $p_i(\bar{x}_{p_i})$ is an application of $p_i \in \upredicates$ with $\bar{x}_{p_i} \subseteq \bar{x}$; and
    \item The \emph{head} $\head[\bar{x}]$ is either $h(\bar{x}_h)$ for some $h \in \upredicates$ with $\bar{x}_h \subseteq \bar{x}$, or $\bot$.
\end{enumerate*}

An uninterpreted predicate may appear multiple times in the body with different arguments.
A CHC is a \emph{query} if its head is $\bot$, and a \emph{rule} otherwise. 
A rule is a \emph{fact} if its body contains no uninterpreted predicates.

A set of CHCs can encode a safety verification problem for a sequential program.
In such an encoding, the uninterpreted predicates serve as placeholders for inductive invariants at specific program locations, while the constraints capture the program’s transitions between these locations.
Facts encode assumptions on initial states, and queries dually encode assertions.

\paragraph{Semantics}
Fix a set of CHCs $\chcs$ over $\signature \cup \upredicates$ and a $\signature$-theory $\theory$.
To define satisfiability of $\chcs$ modulo $\theory$, we first define the \emph{semantics} of uninterpreted predicates.
Given a $\theory$-model $\model$, an uninterpreted predicate $p \in \upredicates$, and $n \in \mathbb{N}$, the \emph{bounded semantics} of $p$ w.r.t.\ $\model$ at bound $n$, denoted $\semantics{p}{\model}{n}$, is the set of tuples in $\domain_p^\model$ derivable for $p$ from $\chcs$ by derivations of depth at most $n$.
In the context of program verification, $\semantics{p}{\model}{n}$ corresponds to the set of program states reachable at the location of $p$ by executions inducing derivations of depth at most $n$.
The bounded semantics is defined inductively on $n$ as follows.

\begin{definition}[Bounded Semantics]\label{def:bounded_semantics}
    Let $\model$ be a $\theory$-model, $p \in \upredicates$ an uninterpreted predicate symbol, and $n \in \mathbb{N}$ a bound.
    The \emph{bounded semantics} of $p$ with respect to $\model$ at bound $n$, denoted $\semantics{p}{\model}{n}$, is defined as follows:
    {%\small
    \begin{align}
        \semantics{p}{\model}{n} =& 
        \bigcup_{(\forall\bar{x}. \; \body[\bar{x}] \rightarrow p(\bar{x}_p)) \in \chcs}
        {\semantics{\exists (\bar{x} \setminus \bar{x}_p). \; \body[\bar{x}]}{\model'}{}} \label{eq:bounded_semantics} \\
        & \text{where }  \model'=
        \begin{cases}
            \model\{q\mapsto \emptyset\}_{q\in\upredicates} & \textup{if } n = 0 \\ \vspace*{-0.25cm}\\
            \model\{q\mapsto \semantics{q}{\structure}{n-1}\}_{q\in\upredicates} & \textup{if } n > 0
        \end{cases}
        \nonumber
    \end{align}
    }%
\end{definition}

%\begin{example}\label{example:bounded_sematics_single_theory}
%    Consider the CHC set $\chcs$ in Fig.~\ref{fig:opposite_sign}(b), where all variables have sort $\sortBV{n}$ for $n \geq 2$.
%    Then $(1,0,0) \in \semantics{p}{\modelBV}{0}$ (derived from the first CHC), $(1,1,-1) \in \semantics{p}{\modelBV}{1}$ (from the second CHC), and $(1,-1) \in \semantics{q}{\modelBV}{2}$ (from the third CHC).
%\end{example}

\begin{example}\label{example:bounded_semantics_single_theory}
    Consider the CHC set $\chcs$ in Fig.~\ref{fig:opposite_sign}(b), where all variables have sort $\sortBV{2}$. 
    Here we use binary bit-vector notation. %(e.g., $01=1$, $11=-1$). % under signed semantics).
    Then $(01,00,00) \in \semantics{p}{\modelBV}{0}$ (derived from the first CHC), $(01,01,11) \in \semantics{p}{\modelBV}{1}$ (from the second CHC), and $(01,11) \in \semantics{q}{\modelBV}{2}$ (from the third CHC).
\end{example}

Observe that the bounded semantics are defined in a monotonic manner: for all $n \in \mathbb{N}$, $\semantics{p}{\model}{n} \subseteq \semantics{p}{\model}{n+1}$.
The bounded semantics of all uninterpreted predicates in $\chcs$ stabilize at the first limit ordinal $\omega$ yielding their (unbounded) semantics~\cite{DBLP:conf/birthday/BjornerGMR15}.
The \emph{semantics} of $p$ with respect to $\model$, denoted $\semantics{p}{\model}{}$, is given by $\bigcup_{n \in \mathbb{N}} \semantics{p}{\model}{n}$.

The satisfiability of a CHC set can be defined via the semantics of its uninterpreted predicates.
Every $\theory$-model $\model$ can be expanded to a $(\signature \cup \upredicates)$-structure $\model\{p \mapsto \semantics{p}{\model}{}\}_{p \in \upredicates}$, which interprets each $p$ by its semantics w.r.t.\ $\model$.
By Definition~\ref{def:bounded_semantics}, this expansion satisfies all rules in $\chcs$.
We say that $\chcs$ is \emph{satisfiable modulo $\theory$} if there exists a $\theory$-model whose expansion also satisfies all \emph{queries}.

\begin{definition}[Satisfiability]\label{def:satisfiability}
$\chcs$ is \emph{satisfiable modulo $\theory$} if there exists a $\theory$-model $\model$ such that $\model\{p \mapsto \semantics{p}{\model}{}\}_{p \in \upredicates} \models \chc$ for every query $\chc \in \chcs$.
\end{definition}

\begin{definition}[The CHC-SAT Problem]
    Given a CHC set $\chcs$ over $\signature \cup \upredicates$ and a $\signature$-theory $\theory$, determine whether $\chcs$ is satisfiable modulo $\theory$.
\end{definition}

When $\chcs$ is unsatisfiable modulo $\theory$, the bounded semantics can be used to construct a derivation of $\bot$~\cite{DBLP:conf/sas/BjornerMR13}, denoted $\chcs \vdash_{\theory} \bot$. 
Such a derivation is called a \emph{refutation}. 
In the context of safety verification, a refutation corresponds to a finite execution leading to an assertion violation.

%If $\chcs$ is unsatisfiable modulo $\theory$, then for every $\theory$-model $\model$ there exists a refutation, i.e., a ground derivation of $\bot$~\cite{DBLP:conf/sas/BjornerMR13}, induced by the bounded semantics of the uninterpreted predicates. %denoted $\chcs \vdash_{\theory} \bot$ + root + child + leaf
%If $\chcs$ is unsatisfiable modulo $\theory$, then for every $\theory$-model $\model$, $\chcs$ admits a refutation, i.e., a ground derivation of $\bot$, denoted $\chcs \vdash_{\theory} \bot$~\cite{DBLP:conf/sas/BjornerMR13}.
%A refutation $\refutation$ has the form of a tree, whose root is labeled by $\bot$.
%Each other node in $\refutation$ at depth $n$ is labeled by an uninterpreted predicate $p$ and a tuple in $\semantics{p}{\model}{n}$. 
%The refutation is defined inductively as follows.
%Leaves are labeled by tuples in the bounded semantics at bound~0 (i.e., tuples derivable from facts in $\chcs$), and each internal node at depth $i$ is labeled by a tuple in the bounded semantics at bound~$i$.
%, obtained by instantiating a rule of $\chcs$ with the tuples labeling its children.
%In safety verification, such a derivation corresponds to a finite execution leading to an assertion violation.

In practice, \textsc{CHC-SAT} algorithms search for interpretations of the uninterpreted predicates that over-approximate their semantics while satisfying the queries.
In safety verification, such interpretations correspond to inductive invariants sufficient to prove safety.
The search is typically restricted to interpretations representable by formulas~\cite{DBLP:conf/birthday/BjornerGMR15}.
A \emph{$\signature$-interpretation} $\interpretation$ assigns to each $p \in \upredicates$ of type $\predtype$ a $\signature$-formula $\interpretation(p)$ with free variables $\bar{y}_p = (y_1, \dots, y_n)$, with $y_i$ of sort $\sigma_i$.
For $\chc \in \chcs$, we write $\interpretation(\chc)$ for the $\signature$-formula obtained by replacing each uninterpreted predicate application $p(\bar{x}_p)$ that occurs in $\chc$ with $\interpretation(p)[\bar{y}_p \mapsto \bar{x}_p]$.

\begin{definition}[Solution]\label{def:solution}
A $\signature$-interpretation $\interpretation$ \emph{satisfies} $\chcs$ modulo $\theory$, denoted $\interpretation \models_{\theory} \chcs$, if $\interpretation(\chc)$ is valid modulo $\theory$ for every $\chc \in \chcs$.
In this case, $\interpretation$ is called a \emph{$\signature$-solution} for $\chcs$.
\end{definition}

If $\interpretation$ is a $\signature$-solution, then for every $\theory$-model $\model$, the expansion $\model\{p \mapsto \semantics{\interpretation(p)}{\model}{}\}_{p \in \upredicates}$ satisfies all $\chc \in \chcs$.
Moreover, $\semantics{p}{\model}{} \subseteq \semantics{\interpretation(p)}{\model}{}$ for all $p \in \upredicates$.
Thus, if a $\signature$-solution exists, then $\chcs$ is satisfiable.
The converse, however, does not necessarily hold due to the limited expressiveness of the class of $\signature$-formulas.

%% file: sections/theory_transformations.tex
\section{Theory Transformers}\label{sec:transformers}

As a first step toward our theory-modular approach to deciding the satisfiability of CHCs modulo $\theoryBV$, we introduce a pair of translation functions, called \emph{theory transformers}.
The first transformer, denoted $\formulaBVIA$, translates bit-vector formulas into integer arithmetic, while the second transformer, denoted $\formulaIABV$, translates formulas in the opposite direction.
These transformers form the basis of our approach.
Given a set $\chcs$ of CHCs over $\signatureBV$, the transformer $\formulaBVIA$ is used to translate a selected fragment of $\chcs$ to $\signatureIA$.
Later, when deciding satisfiability of $\chcs$, each fragment is reasoned about in its respective theory, and the transformers $\formulaBVIA$ and $\formulaIABV$ enable the transfer of information between fragments.
Crucially, this process preserves the semantics of $\chcs$, even though the fragments are defined over different signatures and theories and involve variables of different sorts.

For the remainder of this section, let $\chcs$ be a set of CHCs over the signature $\signatureBV \cup \upredicates$, where $\upredicates$ is a set of uninterpreted predicate symbols.
%Throughout the paper, we fix a $\theoryBV$-model $\modelBV$ and a $\theoryIA$-model $\modelIA$.
We fix a bijective variable mapping $\varBVIA$ between bit-vector variables (i.e., variables of sort $\sortBV{n}$ for some $n \in \mathbb{N}^+$) and integer variables (variables of sort $\sortIA$).
To simplify the notation, for every bit-vector variable $x$ we write $x'$ to denote the unique integer variable such that $\varBVIA(x)=x'$.
For every tuple $\bar{x}=(x_1,\dots,x_m)$ of bit-vector variables, we write $\bar{x}'$ to denote $(x'_1,\dots,x'_m)$.

To preserve semantics across theories, for each bit-vector variable $x$ of sort $\sortBV{n}$ (for some $n \in \mathbb{N}^+$), we relate the domain of $x$ in $\modelBV$ to the domain of its corresponding integer variable $x'$ in $\modelIA$.
The domain $\domBVvar{x} = \domBV{n}$ is finite and consists of all bit-vectors of width $n$, whereas $\domIAvar{x'} = \domIA$ is infinite and includes the integers.
This correspondence is captured by a \emph{domain transformer}.%, defined as follows.

\begin{definition}[Domain Transformer]\label{def:domain_transformers}
	Let $x$ be a variable of sort $\sortBV{n}$ for $n \in \mathbb{N}^+$. 
    %, and let $x' = \varBVIA(x)$.
	A \emph{domain transformer} for $x$ is an injective mapping
	$\domBVIA{x} \colon \domBVvar{x} \to \domIAvar{x'}$.
\end{definition}

To ensure consistency, the only restriction on domain transformers is that all variables appearing in the same argument position of an uninterpreted predicate application in $\chcs$ must be assigned the same mapping.\footnote{In CHCs, since all variables are universally quantified, the same variable may appear in different contexts. To allow greater flexibility in the choice of domain transformers, one may rename variables so that different CHCs use distinct variables.}
Apart from this restriction, domain transformers can, in principle, be defined arbitrarily.
However, two natural choices that simplify the translation and yield a more natural encoding in integer arithmetic are to use either unsigned or signed semantics for bit-vectors.
Let $x$ be a bit-vector variable of sort $\sortBV{n}$ for some $n \in \mathbb{N}^+$.
The \emph{unsigned domain transformer} of $x$, whose image is $[0, 2^n - 1] \subset \domIAvar{x'}$, maps each $v \in \domBVvar{x}$ to $[v]_{\mathbb{N}} = \sum_{i=0}^{n-1} v[i]\cdot 2^i$.
The \emph{signed domain transformer} of $x$, whose image is $[-2^{n-1}, 2^{n-1}-1] \subset \domIAvar{x'}$, maps each $v \in \domBVvar{x}$ to $[v]_{\mathbb{Z}} = -(v[n-1])\cdot 2^{n-1} + \sum_{i=0}^{n-2} v[i]\cdot 2^i$.\footnote{Signed values are represented using two’s complement.}
From now on, the presentation remains agnostic to the concrete choice of domain transformers.
The specific instantiation used in the implementation is discussed in Section~\ref{sec:implementation}.

\begin{comment}
In programming languages, the sign of a variable is typically part of its type, for example \texttt{int} versus \texttt{unsigned int} in C.
In contrast, in $\signatureBV$ the sort of a variable depends only on its bit-width, while signedness affects certain predicate and function symbols, such as $\bvult$ versus $\bvslt$.
As a result, the same bit-vector variable may appear in both unsigned and signed predicate or function applications.
We therefore adopt the following strategy.
For each bit-vector variable $x$, if $x$ participates in more unsigned than signed predicate and function applications in $\chcs$, then $\domBVIA{x}(v) = [v]_{\mathbb{N}}$ for all $v \in \domBVvar{x}$; otherwise, $\domBVIA{x}(v) = [v]_{\mathbb{Z}}$.
For clarity of exposition, however, we restrict attention to a unified strategy in which all domain transformers are defined consistently.
In particular, if unsigned predicate and function applications are more prevalent in $\chcs$ than signed ones, then all domain transformers are defined using unsigned interpretations of bit-vectors; otherwise, all domain transformers are defined using signed interpretations.\yv{You may want to keep this ``more in signed/unsigned'' discussion to implementation details. Then, you could explain here that it can either be this or that, and say that for presentation you assume...}
\end{comment}

Let $\bar{x} = (x_1,\dots,x_m)$ be a tuple of bit-vector variables.
For a tuple $\bar{v} = (v_1,\dots,v_m) \in \domBVvar{\bar{x}} = \domBVvar{x_1} \times \cdots \times \domBVvar{x_m}$, we write $\domBVIA{\bar{x}}(\bar{v})$ to denote the tuple $(\domBVIA{x_1}(v_1),\dots,\domBVIA{x_m}(v_m))$.
By abuse of notation, we extend $\domBVIA{\bar{x}}$ element-wise to subsets of $\domBVvar{\bar{x}}$.
For the inverse direction, 
%let $x' = \varBVIA(x)$.
for any integer value $v' \in \domIAvar{x'}$ that lies in the image of $\domBVIA{x}$, we write $\domIABV{x}(v')$ to denote the unique bit-vector value $v \in \domBVvar{x}$ such that $\domBVIA{x}(v) = v'$.
We lift the inverse domain transformer to tuples and sets analogously.
For an uninterpreted predicate $p \in \upredicates$ that occurs in $\chcs$, we use $\domBVIA{p}$ as shorthand for $\domBVIA{\bar{x}_p}$, where $\bar{x}_p$ is a tuple of variables such that the application $p(\bar{x}_p)$ appears in some CHC in $\chcs$.
This notation is well defined due to the consistency requirement, which ensures that variables in the same argument position of an uninterpreted predicate application in $\chcs$ are assigned the same domain transformer.

%corresponding to an arbitrary application of $p$ in $\chcs$ (the choice is irrelevant due to the consistency requirement).

Let $\varphi'$ be a quantifier-free $\signatureIA$-formula with free variables $\bar{x}'$.
Recall that $\semantics{\varphi'}{\modelIA}{}$ denotes the set of tuples satisfying $\varphi'$ under $\modelIA$.
We sometimes restrict $\semantics{\varphi'}{\modelIA}{}$ to elements corresponding to bit-vector values in $\modelBV$ via the domain transformers.
To this end, we define the \emph{restriction} of $\semantics{\varphi'}{\modelIA}{}$ to $\modelBV$.

\begin{definition}[Domain Restriction]\label{def:domain_restriction}
    Let $\varphi' \in \qfIA$ with free variables $\bar{x}'$. %, and let $\bar{x} = \varIABV(\bar{x}')$.
    The \emph{restriction} of $\semantics{\varphi'}{\modelIA}{}$ to $\modelBV$, denoted $\restrBV{\semantics{\varphi'}{\modelIA}{}}$, is defined as follows:
    \begin{displaymath}
        \restrBV{\semantics{\varphi'}{\modelIA}{}}
        \;=\;
        \semantics{\varphi'}{\modelIA}{} \;\cap\; \domBVIA{\bar{x}}\bigl(\domain_{\bar{x}}^{\modelBV}\bigr)
    \end{displaymath}
\end{definition}

\begin{example}\label{example:semantic_restriction}
    Let $(x,y)$ be a pair of bit-vector variables of sort $\sortBV{2}$, and assume that $\domBVIA{x}$ and $\domBVIA{y}$ are the unsigned domain transformers.
    \begin{align*}
        &\restrBV{\semantics{x' < y'}{\modelIA}{}}
        \;=\;
        \semantics{x' < y'}{\modelIA}{} \;\cap\; \domBVIA{(x,y)}\bigl(\domBVvar{(x,y)}\bigr) 
        \;=\; \\
        & \{(a,b) \in (\domIA)^2 \;|\; a<b\} \cap \{0,1,2,3\}^2 
        =
        \{ (0,1), (0,2), (0,3), (1,2), (1,3), (2,3) \}
    \end{align*}
\end{example}

\begin{comment}
\begin{example}\label{example:semantic_restriction}
    Let $(x,y)$ be a pair of bit-vector variables of sort $\sortBV{2}$, and assume that $\domBVIA{x}$ and $\domBVIA{y}$ are the unsigned domain transformers.
    \begin{align*}
        &\restrBV{\semantics{x' = y' + 1}{\modelIA}{}}
        \;=\;
        %\semantics{x' = y' + 1}{\modelIA}{} \;\cap\; \domBVIA{}\bigl(\domain_{(x,y)}^{\modelBV}\bigr) \\
        \semantics{x' = y' + 1}{\modelIA}{} \;\cap\; \domBVIA{(x,y)}\bigl(\domBVvar{(x,y)}\bigr) 
        \;=\; \\
        &\set{(a+1,a)}{a \in \domIA} \cap (\{0,1,2,3\} \times \{0,1,2,3\}) 
        \;=\;
        \{ (1,0), (2,1), (3,2) \}
    \end{align*}
\end{example}    
\end{comment}

Restrictions can also be performed syntactically by bounding variable domains.
Let $x$ be a bit-vector variable of sort $\sortBV{n}$ for some $n \in \mathbb{N}^+$. %, and let $x' = \varBVIA(x)$.
We write $\restrBV{x'}$ to denote the $\signatureIA$-constraint enforcing that $x'$ ranges over the image of $\domBVIA{x}$, that is, $0 \leq x' \wedge x' \leq 2^n - 1$ for the unsigned domain transformer, and $-2^{n-1} \leq x' \wedge x' \leq 2^{n-1} - 1$ for the signed domain transformer.
%This notion is lifted to formulas as follows.
%Let $\varphi' \in \qfIA$ with free variables $\bar{x}'$.
%The \emph{restriction} of $\varphi'$ to $\modelBV$, denoted $\restrBV{\varphi'}$, is the formula $\varphi' \;\wedge\; \bigwedge_{x' \in \bar{x}'} \restrBV{x'}$.

\begin{comment}
\begin{definition}[Syntactic Domain Restriction]\label{def:syntactic_domain_restriction}
	Let $\varphi' \in \qfIA$ with free variables $\bar{x}'$.
	The \emph{restriction} of $\varphi'$ to $\modelBV$, denoted $\restrBV{\varphi'}$, is defined by
	\begin{displaymath}
		\restrBV{\varphi'}
		\;=\;
		\varphi' \;\wedge\; \bigwedge_{x' \in \bar{x}'} \restrBV{x'}
	\end{displaymath}
\end{definition}    
\end{comment}

%The following lemma formalizes the connection between semantic and syntactic domain restrictions.
%Definitions~\ref{def:domain_restriction} and~\ref{def:syntactic_domain_restriction}.

\begin{lemma}\label{lemma:syntactic_restriction}
	Let $\varphi' \in \qfIA$ with free variables $\bar{x}'$. 
    The following holds: 
    \begin{displaymath}
        \restrBV{\semantics{\varphi'}{\modelIA}{}} = \semantics{\varphi' \;\wedge\; \bigwedge_{x' \in \bar{x}'} \restrBV{x'}}{\modelIA}{}
    \end{displaymath}
\end{lemma}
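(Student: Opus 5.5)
The plan is to prove the equality by double inclusion, unfolding Definition~\ref{def:domain_restriction} and the definition of $\semantics{\cdot}{\modelIA}{}$. First I will rewrite both sides as sets of tuples. The left-hand side is $\semantics{\varphi'}{\modelIA}{} \cap \domBVIA{\bar{x}}(\domBVvar{\bar{x}})$. The right-hand side is the set of $\bar{a}=(a_1,\dots,a_m)\in(\domIA)^m$ with $\modelIA,\assignment{\bar{x}'}{\bar{a}}\models \varphi'\wedge\bigwedge_i \restrBV{x'_i}$. By the standard Tarskian semantics of conjunction, this is the intersection of $\semantics{\varphi'}{\modelIA}{}$ with the set of tuples satisfying $\bigwedge_i\restrBV{x'_i}$. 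Here I view all formulas as having free-variable tuple $\bar{x}'$. This is harmless, since each $\restrBV{x'_i}$ mentions only variables of $\bar{x}'$, so the conjunction has the same free variables.

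The key step is therefore the per-variable claim: for each $i$ and each $a\in\domIA$, we have $\modelIA,\assignment{x'_i}{a}\models\restrBV{x'_i}$ if and only if $a$ is in the image of $\domBVIA{x_i}$. Given this, the product structure gives
\[
\semantics{\textstyle\bigwedge_i\restrBV{x'_i}}{\modelIA}{}=\domBVIA{x_1}(\domBVvar{x_1})\times\cdots\times\domBVIA{x_m}(\domBVvar{x_m}),
\]
and the right-hand side equals $\domBVIA{\bar{x}}(\domBVvar{\bar{x}})$ because $\domBVIA{\bar{x}}$ is defined componentwise. Combining this with the previous paragraph yields the lemma.

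The per-variable claim is the only place that needs care, because the paper defines $\restrBV{x'}$ concretely only for the unsigned and signed domain transformers. For the unsigned case, I will check that $v\mapsto\sum_{i=0}^{n-1}v[i]2^i$ maps $\domBV{n}$ onto $[0,2^n-1]$. It is injective into a set of size $2^n$, since binary representations are unique, so it is surjective. This matches $0\le x'\wedge x'\le 2^n-1$ under the fixed $\modelIA$. For the signed case, two's complement similarly maps $\domBV{n}$ bijectively onto $[-2^{n-1},2^{n-1}-1]$. For a general injective domain transformer, I will take $\restrBV{x'}$ to be by definition a $\signatureIA$-constraint whose semantics is the image. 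That image is finite, so such a constraint always exists, for instance as a finite disjunction of equalities. With this convention the claim holds by construction. I expect this step to be the main (though mild) obstacle; the remaining steps are routine set manipulations.
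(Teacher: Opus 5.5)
Your proposal is correct and is essentially the argument the paper relies on. The paper states this lemma without a separate proof, treating it as immediate from the definition of $\restrBV{x'}$ as the constraint forcing $x'$ into the image of $\domBVIA{x}$; your per-variable claim, conjunction read as intersection, and the componentwise action of $\domBVIA{\bar{x}}$ spell out exactly that, with your checks of the unsigned and signed images $[0,2^n-1]$ and $[-2^{n-1},2^{n-1}-1]$ making the concrete instances explicit.
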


\begin{example}
	Following Example~\ref{example:semantic_restriction}, the formula $x' < y' \wedge \restrBV{x'} \! \wedge \,\restrBV{y'}$ is equivalent to $x' < y' \;\wedge\; (0 \leq x' \wedge x' \leq 3) \;\wedge\; (0 \leq y' \wedge y' \leq 3)$.
\end{example}

The restriction operator allows us to compare bit-vector formulas with integer arithmetic formulas by restricting integer-valued assignments to the domain induced by bit-vector values.
Building on this notion, we now formalize what it means for a syntactic translation between bit-vector and integer arithmetic formulas to preserve semantics.
Intuitively, an integer arithmetic formula is considered semantics-preserving with respect to a bit-vector formula if the two formulas agree on this restricted domain.
Note that the integer arithmetic formula may admit additional satisfying assignments outside this domain.

\begin{definition}[Theory Transformers]\label{def:theory_transformers}
	A \emph{theory transformer from $\theoryBV$ to $\theoryIA$} is a translation function $\formulaBVIA$ that translates every formula $\varphi \in \qfBV$ with free variables	$\bar{x}$ into a $\qfIA$ formula $\formulaBVIA(\varphi)$ with free variables $\bar{x}'$ %$\varBVIA(\bar{x})$ 
    such that:
	\begin{equation}\label{eq:theory_transformer_BV_IA}
		\semantics{\varphi}{\modelBV}{}
		\;=\;
		\domIABV{\bar{x}}\!\left(
			\restrBV{\semantics{\formulaBVIA(\varphi)}{\modelIA}{}}
		\right)
	\end{equation}

	Similarly, a \emph{theory transformer from $\theoryIA$ to $\theoryBV$} is a translation function $\formulaIABV$ that translates every formula $\varphi' \in \qfIA$ with free variables $\bar{x}'$ into a $\qfBV$ formula $\formulaIABV(\varphi')$ with free variables $\bar{x}$ %$\varIABV(\bar{x}')$ 
    such that:
	\begin{equation}\label{eq:theory_transformer_IA_BV}
		\semantics{\formulaIABV(\varphi')}{\modelBV}{}
		\;=\;
		\domIABV{\bar{x}}\!\left(
			\restrBV{\semantics{\varphi'}{\modelIA}{}}
		\right)
	\end{equation}
\end{definition}

\begin{example}
    Following Example~\ref{example:semantic_restriction}, the $\signatureIA$-formula $x' < y'$ preserves the semantics of the $\signatureBV$-formula $x \bvult y$ in the sense of Eqs.~\ref{eq:theory_transformer_BV_IA} and~\ref{eq:theory_transformer_IA_BV}.
    Notably, $\semantics{x' < y'}{\modelIA}{}$ is infinite, whereas $\semantics{x \bvult y}{\modelBV}{}$ is finite. The two formulas coincide on the restricted bit-vector domain up to the domain transformer.
\end{example}

\begin{comment}
We now define the concrete theory transformers used in the remainder of the paper.
These transformers instantiate the notion of theory transformers introduced above by fixing how bit-vector formulas are translated into integer arithmetic formulas, and vice versa.

For $\formulaBVIA$, we adopt the translation of Zohar et al.~\cite{DBLP:conf/vmcai/ZoharIMNNPRBT22}.
This translation satisfies the requirement of Eq.~\ref{eq:theory_transformer_BV_IA}.
In fact, it satisfies a stronger requirement as    
\end{comment}

%% file: sections/problem_definition.tex
\section{Theory-Modular Encoding of Bit-Precise CHCs}\label{sec:problem_definition}

Throughout this section and the next, we fix the following setup.
Let $\chcs$ be a set of CHCs over the signature $\signatureBV \cup \upredicates$, where $\upredicates$ is a set of uninterpreted predicate symbols.
For each bit-vector variable $x$, let $\domBVIA{x}$ be a corresponding domain transformer.
%We assume that these domain transformers satisfy the consistency requirement introduced in the previous section, namely that variables occurring in the same argument position of an uninterpreted predicate application are mapped consistently.
Finally, let $\formulaBVIA$ be a theory transformer from $\theoryBV$ to $\theoryIA$, and let $\formulaIABV$ be a theory transformer from $\theoryIA$ to $\theoryBV$.

\subsection{The Encoding}

We now introduce our \emph{theory-modular encoding} of $\chcs$.
The encoding is defined with respect to a partition of $\chcs$ into two fragments.
The first fragment, the \emph{bit-vector fragment}, consists of CHCs reasoned about with respect to $\theoryBV$.
These CHCs are left unchanged, except that each uninterpreted predicate $p$ is replaced by a fresh copy $\upredicateBV{p}$.
The second fragment, the \emph{integer fragment}, consists of CHCs reasoned about with respect to $\theoryIA$.
These CHCs are translated into integer arithmetic formulas using the theory transformer $\formulaBVIA$.
Concretely, the constraint of each CHC is translated into a $\signatureIA$-formula via $\formulaBVIA$, each uninterpreted predicate $p$ is replaced by a fresh copy $\upredicateIA{p}$, and variables are consistently renamed according to the variable mapping.
In addition, the (syntactic) domain restriction operator is applied to every variable not occurring in a predicate application in the body, thereby eliminating integer derivations outside the bit-vector domain.
The encoding also introduces \emph{interface constraints}, which preserve the original dependencies among uninterpreted predicates by relating their copies.
The complete encoding is formalized in the following definition.

\begin{definition}[Theory-Modular Encoding]\label{def:theory_modular_encoding}
    Let $\Delta_{\BVshort} \subseteq \chcs$ and $\Delta_{\IAshort}=\chcs \setminus \Delta_{\BVshort}$.\\
    The \emph{theory-modular encoding} of $\chcs$ induced by $(\Delta_{\BVshort},\Delta_{\IAshort})$ is a triple $\encoding = (\BVfragment,\IAfragment,\interfaces)$ defined as follows:
	\begin{itemize}
		\item $\BVfragment$, the \emph{bit-vector fragment}, is the set of CHCs over the signature $\signatureBV \cup \upredicatesBV$, where $\upredicatesBV = \set{\upredicateBV{p}}{p \in \upredicates}$, obtained from $\Delta_{\BVshort}$ by replacing each uninterpreted predicate $p \in \upredicates$ with $\upredicateBV{p}$.

		\item $\IAfragment$, the \emph{integer fragment}, is the set of CHCs over the signature $\signatureIA \cup \upredicatesIA$, where $\upredicatesIA = \set{\upredicateIA{p}}{p \in \upredicates}$, obtained from $\Delta_{\IAshort}$ by transforming each CHC of the form $\forall \bar{x}.\;
        	\bigl( \phi \;\wedge\; \bigwedge_j p_j(\bar{x}_{p_j}) \bigr)
        	\;\rightarrow\;
            h(\bar{x}_h)$
        into
        \begin{equation*}
        	\forall \bar{x}'.\;
        	\bigl( \formulaBVIA(\phi)
        	\;\wedge\;
        	\bigwedge_j \upredicateIA{p}_j(\bar{x}'_{p_j})
            \;\wedge\;
        	\bigwedge_{x \in \bar{y}} \restrBV{x'} \bigr)
        	\;\rightarrow\;
        	\upredicateIA{h}(\bar{x}'_h)
        \end{equation*}
        where $\bar{y} = \bar{x} \setminus (\bigcup_j \bar{x}_{p_j})$.
        If the head is $\bot$, it remains unchanged.

		\item $\interfaces$, the set of \emph{interface constraints}, is
        $\interfacesBVIA \cup \interfacesIABV$, where
		\begin{itemize}
			\item $\interfacesBVIA$ consists of all pairs $\interface{\upredicateBV{p}}{\upredicateIA{p}}$ such that
            $\upredicateBV{p}$ is the head predicate of some CHC in $\BVfragment$ and $\upredicateIA{p}$ occurs in
            the body of some CHC in $\IAfragment$.
			\item $\interfacesIABV$ consists of all pairs $\interface{\upredicateIA{p}}{\upredicateBV{p}}$ such that
            $\upredicateIA{p}$ is the head predicate of some CHC in $\IAfragment$ and $\upredicateBV{p}$ occurs in
            the body of some CHC in $\BVfragment$.
		\end{itemize}
	\end{itemize}
\end{definition}

Note that the partition $(\Delta_{\BVshort},\Delta_{\IAshort})$ can be chosen arbitrarily, i.e., each CHC in $\chcs$ may be assigned to either the bit-vector or the integer fragment.
While different partitioning heuristics may affect performance, all resulting encodings preserve the semantics of $\chcs$ (see the following subsection).
The partitioning heuristic used in our implementation is discussed in Section~\ref{sec:implementation}.

\begin{example}
    Fig.~\ref{fig:opposite_sign}(b) shows a CHC set $\chcs$, and Fig.~\ref{fig:opposite_sign}(c) shows a theory-modular encoding of $\chcs$ in which the query is placed in the bit-vector fragment and all other CHCs are placed in the integer fragment.
    The interface constraint $\interface{\upredicateIA{q}}{\upredicateBV{q}}$ preserves the original dependencies in $\chcs$.
    %The only explicit domain restriction appears in the transformed fact, eliminating spurious derivations in the integer fragment.
\end{example}

\subsection{Semantics and Correctness}

For clarity, some of the forthcoming definitions reuse notation from the single-theory case (Section~\ref{subsec:chcs}).
Let $\encoding = (\BVfragment,\IAfragment,\interfaces)$ be a theory-modular encoding of $\chcs$.
The satisfiability of $\encoding$ is defined with respect to both $\theoryBV$ and $\theoryIA$, where each fragment is interpreted in its corresponding theory.
We denote $\signatureBVIA = (\signatureBV,\signatureIA)$, $\theoryBVIA = (\theoryBV,\theoryIA)$, and $\modelBVIA = (\modelBV,\modelIA)$.

We begin by defining the semantics of the uninterpreted predicates in $\encoding$.
As in the single-theory case, the bounded semantics of a predicate consists of all tuples derivable through a bounded derivation.
In the theory-modular setting, however, this semantics decomposes into two components: a \emph{local} component, denoted $\rightarrow$ (Eq.~\ref{eq:bounded_multi_semantics_local}), which unfolds the definition of the predicate using CHCs from its own fragment, and an \emph{external} component, denoted $\leadsto$ (Eq.~\ref{eq:bounded_multi_semantics_external}), which derives tuples from the corresponding predicate copy in the other fragment, as dictated by the interface constraints.
Importantly, the domain restrictions introduced by the encoding in the integer fragment (see Definition~\ref{def:theory_modular_encoding}) ensure that only tuples relevant to $\chcs$, namely those representing bit-vector tuples, can be derived in $\IAfragment$.
%We consider bounds of the form $(\locBoundBV, \locBoundIA, \extBound)$, where $\locBoundBV, \locBoundIA, \extBound \in \mathbb{N}$.
%A derivation under such a bound may take up to $\locBoundBV$ local steps using CHCs from $\BVfragment$, up to $\locBoundIA$ local steps using CHCs from $\IAfragment$, and up to $\extBound$ external steps using interface constraints from $\interfaces$.
%For two bounds $(\locBoundBV, \locBoundIA, \extBound)$ and $(\locBoundBV', \locBoundIA', \extBound')$, we write $(\locBoundBV, \locBoundIA, \extBound) \leq (\locBoundBV', \locBoundIA', \extBound')$ if $\locBoundBV \leq \locBoundBV'$, $\locBoundIA \leq \locBoundIA'$, and $\extBound \leq \extBound'$.
The bounded semantics is defined inductively on the bound.

\begin{definition}[Bounded Semantics]\label{def:bounded_multi_semantics}
    Let $\upredicatePH{p} \in \upredicatesBV \cup \upredicatesIA$ be an uninterpreted predicate with $\PH \in \{\BVshort,\IAshort\}$, and let $n \in \mathbb{N}$ be a bound.
    The \emph{bounded semantics} of $\upredicatePH{p}$ with respect to $\modelBVIA$ at bound $n$, denoted $\semantics{\upredicatePH{p}}{\modelBVIA}{n}$, is defined as $\semantics{\upredicatePH{p}}{\modelBVIA}{n} = \locsemantics{\upredicatePH{p}}{\modelBVIA}{n} \cup \extsemantics{\upredicatePH{p}}{\modelBVIA}{n}$ such that:
    {%\footnotesize
    \begin{align}
        %==========local==========
        \locsemantics{\upredicatePH{p}}{\modelBVIA}{n} &= \hspace*{-0.3cm}
        \bigcup_{(\forall\bar{x}. \; \body[\bar{x}] \rightarrow \upredicatePH{p}(\bar{x}_{\upredicatePH{p}})) \in \fragmentPH} 
        \semantics{\exists (\bar{x} \setminus \bar{x}_{\upredicatePH{p}}). \; \body[\bar{x}]}{\modelPH^*}{} \label{eq:bounded_multi_semantics_local} \\
        & \text{where } \modelPH^*=
        \begin{cases}
            \modelPH\{\upredicatePH{q}\mapsto \emptyset\}_{\upredicatePH{q}\in\upredicatesPH} & \textup{if } n = 0 \\ \vspace*{-0.25cm}\\
            \modelPH\{\upredicatePH{q}\mapsto \semantics{\upredicatePH{q}}{\modelBVIA}{n-1}\}_{\upredicatePH{q}\in\upredicatesPH} & \textup{if } n > 0
        \end{cases}
        \nonumber \\ \nonumber \\
        %==========external==========
        \extsemantics{\upredicatePH{p}}{\modelBVIA}{n} &= 
        \begin{cases}
            \emptyset & \textup{if } n = 0 \textup{ or } \interface{\upredicatePHprime{p}}{\upredicatePH{p}} \notin \interfacesPHprime \\ \vspace*{-0.25cm} \\
            \domPHprime{p} \big( \semantics{\upredicatePHprime{p}}{\modelBVIA}{n-1} \bigr) & \textup{if } n > 0 \textup{ and } \interface{\upredicatePHprime{p}}{\upredicatePH{p}} \in \interfacesPHprime
            %\domIABV{p} \big( \semantics{\upredicateIA{p}}{\modelBVIA}{n-1} \bigr) & \textup{if } \PH = \BVshort, n > 0, \interface{\upredicateIA{p}}{\upredicateBV{p}} \in \interfacesIABV \\ \vspace*{-0.25cm} \\
            %\domBVIA{p} \big(\semantics{\upredicateBV{p}}{\modelBVIA}{n-1} \bigr) & \textup{if } \PH = \IAshort, n > 0, \interface{\upredicateBV{p}}{\upredicateIA{p}} \in \interfacesBVIA  
        \end{cases} \label{eq:bounded_multi_semantics_external}
        \\ & \text{where } \PHprime \in \{\BVshort,\IAshort\} \setminus\{\PH\} \nonumber
    \end{align}
    }
\end{definition}

\begin{example}\label{example:bounded_semantics_multi_theory}
    Consider the theory-modular encoding in Fig.~\ref{fig:opposite_sign}, where all variables have sort $\sortBV{2}$.
    Then $(1,0,0) \in \locsemantics{\upredicateIA{p}}{\modelBVIA}{0}$ (first CHC in $\IAfragment$), $(1,1,-1) \in \locsemantics{\upredicateIA{p}}{\modelBVIA}{1}$ (second CHC in $\IAfragment$), $(1,-1) \in \locsemantics{\upredicateIA{q}}{\modelBVIA}{2}$ (third CHC in $\IAfragment$), and $(01,11) \in \extsemantics{\upredicateBV{q}}{\modelBVIA}{3}$ (interface constraint).
\end{example}

Recall the inverse domain transformer $\domIABV{p}$ is applied only to integer tuples in the image of $\domBVIA{p}$, namely those corresponding to bit-vector tuples.
The following lemma states that the application of $\domIABV{p}$ in Eq.~\ref{eq:bounded_multi_semantics_external} is well-defined.

\begin{lemma}\label{lemma:well-defined}
    Let $p \in \upredicates$ be an uninterpreted predicate that occurs in $\chcs$ and let $n \in \mathbb{N}$ be a bound.
    Then, $\semantics{\upredicateIA{p}}\modelBVIA{n} \subseteq\domBVIA{p}\!\bigl(\domain^{\modelBV}_{p}\bigr)$.
%    \begin{equation*}
%        \semantics{\upredicateIA{p}}{\modelBVIA}{\locBoundBV,\locBoundIA,\extBound} \subseteq \domBVIA{p}\!\bigl(\domain^{\modelBV}_{p}\bigr)
%    \end{equation*}
\end{lemma}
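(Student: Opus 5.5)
We prove the claim by induction on the bound $n$. The invariant is stronger than the inclusion for one fixed $p$: for every $n$ and \emph{every} $p \in \upredicates$ occurring in $\chcs$, simultaneously, $\semantics{\upredicateIA{p}}{\modelBVIA}{n} \subseteq \domBVIA{p}\bigl(\domain^{\modelBV}_{p}\bigr)$. The bounded semantics of $\upredicateIA{p}$ is the union of its local and external components, so we treat the two separately at each bound.

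\emph{External component.} For $n=0$ it is empty by Eq.~\ref{eq:bounded_multi_semantics_external}. For $n>0$, when $\interface{\upredicateBV{p}}{\upredicateIA{p}} \in \interfacesBVIA$, it equals $\domBVIA{p}\bigl(\semantics{\upredicateBV{p}}{\modelBVIA}{n-1}\bigr)$. The bit-vector semantics is a subset of $\domain^{\modelBV}_p$, so this set is contained in $\domBVIA{p}(\domain^{\modelBV}_p)$ directly, without using the induction hypothesis. The inverse transformer $\domIABV{p}$ never appears on this side, so there is no circularity.

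\emph{Local component.} Fix a clause of $\IAfragment$ with head $\upredicateIA{p}(\bar{x}'_p)$, obtained from a clause of $\Delta_{\IAshort}$. Take a tuple in the projection onto $\bar{x}'_p$ of its body's semantics under $\modelIA^*$, witnessed by an assignment to all of $\bar{x}'$. Every variable $x'$ of the clause falls into one of two cases.
\begin{itemize}
    \item If $x \in \bar{y}$, the body contains the conjunct $\restrBV{x'}$. Its value therefore lies in the image of $\domBVIA{x}$; this is the syntactic restriction, cf.\ Lemma~\ref{lemma:syntactic_restriction}.
    \item Otherwise $x'$ occurs in some body application $\upredicateIA{q}(\bar{x}'_q)$. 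For $n=0$ the interpretation of $\upredicateIA{q}$ is empty, so no such assignment exists and the claim is vacuous. For $n>0$ the tuple assigned to $\bar{x}'_q$ lies in $\semantics{\upredicateIA{q}}{\modelBVIA}{n-1}$, which by the induction hypothesis is contained in $\domBVIA{q}(\domain^{\modelBV}_q)$. Since $\domBVIA{q}$ acts componentwise, the value of $x'$ lies in the image of the transformer for its argument position of $q$.
\end{itemize}

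\emph{Transferring to the head.} We must show that each head argument $x'$ ranges in the image of the transformer for its position in $p$. Here we use the consistency requirement: all variables at the same argument position of any application of a predicate use the same domain transformer. In the second case above, $x$ sits at some position of $q$, so the transformer $\domBVIA{x}$ of the variable $x$ coincides with the one for that position of $q$. At the head, $x$ sits at some position of $p$, so $\domBVIA{x}$ also coincides with the one for that position of $p$. Hence in both cases the value of $x'$ lies in the image of $\domBVIA{x}$, which is the image for its head position in $p$. Taking the product over positions puts the head tuple in $\domBVIA{p}(\domain^{\modelBV}_p)$.

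The main subtlety is exactly this bookkeeping: one fixed variable $x$ carries a single transformer $\domBVIA{x}$, and the consistency condition makes it agree with the transformers of every argument position where $x$ appears. Once that is stated carefully, the rest is routine. We also rely on the footnote assumption that every variable of a clause appears in it, so head variables are covered by one of the two cases. Note that the query clauses do not matter, since they have no head predicate.
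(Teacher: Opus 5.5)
Your proof is correct and follows the same route as the paper. Both argue by induction on the bound: the external component is handled directly via the forward transformer $\domBVIA{p}$, and each head variable of a local derivation lands in the image either by the induction hypothesis (if it occurs in a body application) or by the conjoined syntactic restriction (if it lies in $\bar{y}$). Your explicit use of the consistency requirement just spells out a step the paper leaves implicit, and you do not need the footnote, since by the definition of $\bar{y}$ every head variable is already either in some $\bar{x}_{p_j}$ or in $\bar{y}$.
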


\begin{proof}
    By induction on the bound. 
    See Appendix~\ref{appendix:proofs}.
\end{proof}

\begin{comment}
    \begin{proof}
    We prove the claim by induction on the bound.
    For the local component, consider a CHC in $\IAfragment$ of the form $\forall \bar{x}'.\; \body(\bar{x}') \rightarrow \upredicateIA{p}(\bar{x}'_{\upredicateIA{p}})$.
    Let $\bar{a}$ be a tuple of integers derived for $\upredicateIA{p}$ at bound $(\locBoundBV,\locBoundIA,\extBound)$ using this rule, that is, a tuple $\bar{a}$ such that $\modelIA^*, \assignment{\bar{x}'_{\upredicateIA{p}}}{\bar{a}} \models \exists (\bar{x} \setminus \bar{x}_{\upredicatePH{p}}). \; \body[\bar{x}]$.
    Let $x_i \in \bar{x}_{\upredicatePH{p}}$.
    If $x_i$ occurs in an uninterpreted predicate application in $\body(\bar{x}')$ lies in the image of $\domBVIA{x_i}$ by the induction hypothesis.
    Otherwise, by Definition~\ref{def:theory_modular_encoding}, the syntactic domain restriction constraint for $x_i$, $\restrBV{x_i}$, was conjoined the body of the CHC during the encoding.
    Therefore, $\bar{a}$ lies in the image of $\domBVIA{p}$.
    For the external component, any tuple derived for $\upredicateIA{p}$ is obtained by applying $\domBVIA{p}$ to a tuple $\bar{v} \in \domain^{\modelBV}_{p}$ derived for $\upredicateBV{p}$, as dictated by the interface constraints.
    Hence, every externally derived tuple also belongs to $\domBVIA{p}(\domain^{\modelBV}_{p})$.
\end{proof}
\end{comment}

The bounded semantics are monotonic, i.e., for all $n \in \mathbb{N}$, $\semantics{\upredicatePH{p}}{\modelBVIA}{n} \subseteq \semantics{\upredicatePH{p}}{\modelBVIA}{n+1}$.
Taking the union over all bounds yields the (unbounded) semantics of $\upredicatePH{p}$.
The \emph{local semantics} of $\upredicatePH{p}$, denoted $\locsemantics{\upredicatePH{p}}{\modelBVIA}{}$, is $\bigcup_{n \in \mathbb{N}} \locsemantics{\upredicatePH{p}}{\modelBVIA}{n}$.
Similarly, the \emph{external semantics} of $\upredicatePH{p}$, denoted $\extsemantics{\upredicatePH{p}}{\modelBVIA}{}$, is $\bigcup_{n \in \mathbb{N}} \extsemantics{\upredicatePH{p}}{\modelBVIA}{n}$.
The \emph{semantics} of $\upredicatePH{p}$, denoted $\semantics{\upredicatePH{p}}{\modelBVIA}{}$, is $\locsemantics{\upredicatePH{p}}{\modelBVIA}{} \cup \extsemantics{\upredicatePH{p}}{\modelBVIA}{}$.

\begin{comment}
\begin{definition}[Unbounded Semantics]\label{def:unbounded_multi_semantics}
    Let $\upredicatePH{p} \in \upredicatesBV \cup \upredicatesIA$ be an uninterpreted predicate with $\PH \in \{\BVshort,\IAshort\}$.
    The \emph{unbounded semantics} of $\upredicatePH{p}$ with respect to $\modelBVIA$, denoted $\semantics{\upredicatePH{p}}{\modelBVIA}{}$, is defined as $\semantics{\upredicatePH{p}}{\modelBVIA}{} = \locsemantics{\upredicatePH{p}}{\modelBVIA}{} \cup \extsemantics{\upredicatePH{p}}{\modelBVIA}{}$ such that: 
    $\locsemantics{\upredicatePH{p}}{\modelBVIA}{} = \bigcup_{n \in \mathbb{N}}{\locsemantics{\upredicatePH{p}}{\modelBVIA}{n}}$ and
    $\extsemantics{\upredicatePH{p}}{\modelBVIA}{} = \bigcup_{n \in \mathbb{N}}{\extsemantics{\upredicatePH{p}}{\modelBVIA}{n}}$.
    %follows: \ $\semantics{\upredicatePH{p}}{\modelBVIA}{} = \bigcup_{n \in \mathbb{N}}{\semantics{\upredicatePH{p}}{\modelBVIA}{n}}$.
    
    %{\small
    %\begin{displaymath}
        %\semantics{\upredicatePH{p}}{\modelBVIA}{} = \bigcup_{(\locBoundBV,\locBoundIA,\extBound) \in \mathbb{N}^3}{\semantics{\upredicatePH{p}}{\modelBVIA}{\locBoundBV,\locBoundIA,\extBound}}
    %\end{displaymath}
    %}%
\end{definition}    
\end{comment}

The satisfiability of a theory-modular encoding is defined in terms of the semantics of its uninterpreted predicates.

\begin{definition}[Satisfiability]\label{def:multi_satisfiability}
    A theory-modular encoding $\encoding$ is \emph{satisfiable modulo $\theoryBVIA$} if, for each $\PH \in \{\BVshort,\IAshort\}$ and every query $\chc \in \fragmentPH$, it holds that $\modelPH\{\upredicatePH{p} \mapsto \semantics{\upredicatePH{p}}{\modelBVIA}{}\}_{\upredicatePH{p} \in \upredicatesPH} \models \chc$.
\end{definition}

%We now establish the following lemma and theorem, which together demonstrate the soundness of the multi-theory encoding.
A key property of a theory-modular encoding is that it preserves the semantics of the uninterpreted predicates of the original set of CHCs.
In particular, the semantics of each uninterpreted predicate $p$ occurring in $\chcs$ coincides with the union of the semantics of its copies in the encoding, where the semantics obtained in the integer fragment are transformed back to the bit-vector domain via the inverse domain transformer.
Intuitively, the encoding preserves derivability, up to the domain transformers, at the level of the tuples produced for each predicate.
This is formalized in Lemma~\ref{lemma3}, which in turn implies Theorem~\ref{theorem:semantic_preserving_encoding}.

\begin{lemma}\label{lemma3}
    Let $p \in \upredicates$ be an uninterpreted predicate that occurs in $\chcs$.
    Then,
    {
    %\small
    \begin{equation*}
        \semantics{p}{\modelBV}{}
        \;=\; 
        \semantics{\upredicateBV{p}}{\modelBVIA}{} 
        \;\cup\; 
        \domIABV{p}\!\bigl( \semantics{\upredicateIA{p}}{\modelBVIA}{} \bigr)
    \end{equation*}
    }%
\end{lemma}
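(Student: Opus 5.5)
We prove the two inclusions separately, each by induction on the bound, working with the bounded semantics of Definitions~\ref{def:bounded_semantics} and~\ref{def:bounded_multi_semantics}. Throughout, for a set $S$ in the image of $\domBVIA{p}$ we write $\domIABV{p}(S)$; Lemma~\ref{lemma:well-defined} guarantees that $\domIABV{p}$ is defined on every $\semantics{\upredicateIA{p}}{\modelBVIA}{n}$. The basic tool is a ``per-clause'' correspondence. For a CHC $C \in \Delta_{\IAshort}$ with constraint $\phi$ and its translation $C'$ in $\IAfragment$, assume the body predicates are interpreted by sets $R_q \subseteq \domain^{\modelBV}_q$ in $\modelBV$ and by $\domBVIA{q}(R_q)$ in $\modelIA$. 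Then the set of head tuples derivable from $C'$ is exactly $\domBVIA{h}$ applied to the set derivable from $C$. This follows from Eq.~\ref{eq:theory_transformer_BV_IA} together with Lemma~\ref{lemma:syntactic_restriction}. The variables occurring in body predicate applications are already restricted to the image of the domain transformers by the interpretations, and the remaining variables $\bar{y}$ are restricted syntactically by $\restrBV{x'}$. Hence every witness assignment lies in $\domBVIA{\bar{x}}(\domain^{\modelBV}_{\bar{x}})$, and Eq.~\ref{eq:theory_transformer_BV_IA} lets us transfer it, injectively, to a bit-vector witness and back. The consistency requirement on domain transformers ensures that the projections onto the argument positions of $h$ and of each $q$ are taken through the same transformers $\domBVIA{h}$ and $\domBVIA{q}$. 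For clauses of $\Delta_{\BVshort}$ the correspondence is the identity.

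For ($\supseteq$), we show by induction on $n$ that $\semantics{\upredicateBV{p}}{\modelBVIA}{n} \subseteq \semantics{p}{\modelBV}{}$ and $\domIABV{p}(\semantics{\upredicateIA{p}}{\modelBVIA}{n}) \subseteq \semantics{p}{\modelBV}{}$ for all $p$ simultaneously. For the local component, a tuple derived from a clause in $\fragmentPH$ uses body tuples from bound $n-1$. By the induction hypothesis these lie, after transformation, in $\semantics{q}{\modelBV}{}$. The per-clause correspondence, in its monotone direction for arbitrary subsets, then gives that the transformed head tuple is derivable from the original clause, using that $\semantics{\cdot}{\modelBV}{}$ is closed under all rules of $\chcs$. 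For the external component, the tuple is simply the transformed tuple of the other copy at bound $n-1$, so the claim follows directly from the induction hypothesis.

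For ($\subseteq$), we show by induction on $n$ that every $\bar{v} \in \semantics{p}{\modelBV}{n}$ lies in $\semantics{\upredicateBV{p}}{\modelBVIA}{m} \cup \domIABV{p}(\semantics{\upredicateIA{p}}{\modelBVIA}{m})$ for some finite $m$. Here $m$ must depend on the derivation, since external steps cost extra bound; we take $m \le 2n+2$, and the unbounded union absorbs this. Take the clause $C$ deriving $\bar{v}$, with body tuples $\bar{v}_j \in \semantics{q_j}{\modelBV}{n-1}$. By the induction hypothesis each $\bar{v}_j$ is present in one of the two copies of $q_j$. If $C$ lies in the bit-vector fragment but some $\bar{v}_j$ appears only in the $\IAshort$ copy, we transfer it to the $\BVshort$ copy using one external step. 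This requires the interface pair $\interface{\upredicateIA{q_j}}{\upredicateBV{q_j}} \in \interfacesIABV$. The pair exists because $\upredicateBV{q_j}$ occurs in the body of $C$, and because $\bar{v}_j$ can appear only locally in $\upredicateIA{q_j}$ unless $q_j$ heads some clause in $\IAfragment$. Any external derivation of $\bar{v}_j$ into the $\IAshort$ copy must itself originate from a local derivation somewhere, so we may always trace back to a local one. The symmetric case, where $C$ lies in the integer fragment, works the same way with $\interfacesBVIA$. Once all body tuples sit in the fragment containing $C$, the per-clause correspondence yields the head tuple.

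The main obstacle is this ($\subseteq$) bookkeeping. We must argue that a needed tuple is always available in the copy that the clause reads from, which means justifying the existence of the relevant interface constraint. We must also handle the fact that bounds in the two semantics do not align. We would strengthen the induction hypothesis to state that each tuple is available \emph{locally} in some copy, namely the copy of the fragment containing its last deriving clause, at bound at most $2n$. Local availability is what implies the existence of the interface constraint by Definition~\ref{def:theory_modular_encoding}.
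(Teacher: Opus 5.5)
Your proof is correct. Its skeleton coincides with the paper's: your per-clause correspondence is the paper's claim (A), and your ($\supseteq$) induction is essentially the paper's argument. The two routes differ in ($\subseteq$), at the step where a body tuple must be available in the copy that the clause reads from.

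The paper isolates a set-level claim (B). It states that if $\upredicatePH{p}$ occurs in the body of some CHC in $\fragmentPH$, then $\domPHprime{p}\bigl(\semantics{\upredicatePHprime{p}}{\modelBVIA}{}\bigr) \subseteq \semantics{\upredicatePH{p}}{\modelBVIA}{}$. It is proved by a case split:
\begin{itemize}
\item either $\upredicatePHprime{p}$ heads a CHC of its own fragment, so the interface constraint exists;
\item or its local semantics is empty, so its whole semantics is external and comes from $\upredicatePH{p}$ itself, and the round trip $\domPHprime{p} \circ \domPH{p}$ is the identity (via Lemma~\ref{lemma:well-defined}).
\end{itemize}
The paper then uses finiteness of $\domain^{\modelBV}_{p_j}$ to pick a common bound $m$.

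You instead build \emph{local} availability into the induction hypothesis: the tuple is local in the copy belonging to the fragment of its last deriving clause, at bound at most $2n$. This makes the head-side condition of Definition~\ref{def:theory_modular_encoding} automatic, so the empty-local-semantics case never arises. Your explicit bound also replaces the finiteness argument, so your ($\subseteq$) direction does not rely on the bit-vector domains being finite.

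What the paper gains from packaging the argument as (B) is reuse. (B) is cited again in the proof of Theorem~\ref{theorem:semantic_preserving_encoding} to get $\semantics{p_j}{\modelBV}{} = \semantics{\upredicateBV{p}_j}{\modelBVIA}{}$ for body predicates; in your setup that would need the transfer step repeated.

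When you write it up, state the ($\subseteq$) induction directly in the strengthened form. The justification of the interface pair in your third paragraph is garbled: it should say that $\bar{v}_j$ can be local in $\upredicateIA{q}_j$ only if $\upredicateIA{q}_j$ heads a clause of $\IAfragment$. The ``trace back'' remark also becomes unnecessary once local availability is part of the hypothesis.
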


%The semantics preservation of a theory-modular encoding yields the following theorem that demonstrates its soundness.

\begin{theorem}\label{theorem:semantic_preserving_encoding}
    $\chcs$ is satisfiable modulo $\theoryBV$ iff $\encoding$ is satisfiable modulo $\theoryBVIA$.
\end{theorem}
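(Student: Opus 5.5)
We derive Theorem~\ref{theorem:semantic_preserving_encoding} from Lemma~\ref{lemma3} by comparing queries clause by clause. The theory $\theoryBV$ is fixed by $\modelBV$ and $\theoryIA$ by $\modelIA$, so satisfiability of $\chcs$ amounts to every query in $\chcs$ holding in the expansion of $\modelBV$ by the semantics $\semantics{p}{\modelBV}{}$. Likewise, satisfiability of $\encoding$ amounts to every query of $\BVfragment$ (resp.\ $\IAfragment$) holding in the expansion of $\modelBV$ (resp.\ $\modelIA$) by $\semantics{\upredicateBV{p}}{\modelBVIA}{}$ (resp.\ $\semantics{\upredicateIA{p}}{\modelBVIA}{}$). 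Every query of $\chcs$ lies in exactly one of $\Delta_{\BVshort}$ or $\Delta_{\IAshort}$. It therefore suffices to show, for each query $\chc$, that $\chc$ is violated in the original expansion iff its image in the encoding is violated. We argue the contrapositive in both directions via violating assignments.

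\emph{Key observation.} Lemma~\ref{lemma3} yields both $\semantics{\upredicateBV{p}}{\modelBVIA}{} \subseteq \semantics{p}{\modelBV}{}$ and $\domBVIA{p}(\semantics{\upredicateIA{p}}{\modelBVIA}{}) \subseteq \domBVIA{p}(\semantics{p}{\modelBV}{})$; the latter uses Lemma~\ref{lemma:well-defined} to make $\domIABV{p}$ meaningful. Conversely, each tuple of $\semantics{p}{\modelBV}{}$ lies in one of the two copies, but possibly not in the copy living in the query's own fragment. Closing this gap is the main obstacle. My plan is to strengthen the conclusion: whenever the copy of $p$ in a fragment occurs in a body, it inherits the tuples of the other copy through the external semantics. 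Indeed, if $\upredicateIA{p}$ occurs in the body of a CHC of $\IAfragment$ and $\upredicateBV{p}$ is a head in $\BVfragment$, then $\interface{\upredicateBV{p}}{\upredicateIA{p}} \in \interfacesBVIA$. Hence, by Eq.~\ref{eq:bounded_multi_semantics_external}, $\domBVIA{p}(\semantics{\upredicateBV{p}}{\modelBVIA}{}) \subseteq \semantics{\upredicateIA{p}}{\modelBVIA}{}$, taking unions over bounds and using monotonicity. If instead $\upredicateBV{p}$ is never a head in $\BVfragment$, its local semantics is empty; tuples reach it only externally, so the inclusion again holds. Combining this with Lemma~\ref{lemma3}, we obtain for every $p$ occurring in a body of $\IAfragment$ that $\semantics{\upredicateIA{p}}{\modelBVIA}{} = \domBVIA{p}(\semantics{p}{\modelBV}{})$. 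Symmetrically, for $p$ occurring in a body of $\BVfragment$, we obtain $\semantics{\upredicateBV{p}}{\modelBVIA}{} = \semantics{p}{\modelBV}{}$.

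\emph{Query in $\Delta_{\BVshort}$.} The query is syntactically unchanged apart from the renaming $p \mapsto \upredicateBV{p}$, and its body predicates receive exactly $\semantics{p}{\modelBV}{}$. Satisfaction therefore transfers directly.

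\emph{Query in $\Delta_{\IAshort}$.} Take a violating bit-vector assignment $\bar{v}$ for $\phi \wedge \bigwedge_j p_j(\bar{x}_{p_j})$. Then $\domBVIA{\bar{x}}(\bar{v})$ satisfies $\formulaBVIA(\phi)$ by Eq.~\ref{eq:theory_transformer_BV_IA}. It satisfies each $\restrBV{x'}$ by Lemma~\ref{lemma:syntactic_restriction}, and each $\upredicateIA{p_j}$ by the equality above; this uses the consistency requirement on domain transformers, so that $\domBVIA{\bar{x}_{p_j}} = \domBVIA{p_j}$. Conversely, suppose an integer assignment violates the transformed query. Variables appearing in predicate applications lie in the bit-vector image by Lemma~\ref{lemma:well-defined}, and the remaining variables do so by the restriction constraints. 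The assignment thus lies in $\domBVIA{\bar{x}}(\domain^{\modelBV}_{\bar{x}})$, so Eq.~\ref{eq:theory_transformer_BV_IA} lets us pull it back through $\domIABV{\bar{x}}$ to an assignment satisfying $\phi$. Together with the equality for the predicates, this gives a violation in the original expansion.
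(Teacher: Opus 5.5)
Your proposal is correct and follows the paper's proof. You reduce to a per-query equivalence. You then show that a predicate copy occurring in a body absorbs the other copy's semantics (the paper's claim (B)), so that Lemma~\ref{lemma3} upgrades to $\semantics{\upredicateBV{p}}{\modelBVIA}{} = \semantics{p}{\modelBV}{}$ and $\semantics{\upredicateIA{p}}{\modelBVIA}{} = \domBVIA{p}(\semantics{p}{\modelBV}{})$. Finally, you settle $\Delta_{\BVshort}$ queries by renaming and $\Delta_{\IAshort}$ queries by transporting assignments, which is a pointwise version of the paper's claim (A).

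One slip: $\domBVIA{p}(\semantics{\upredicateIA{p}}{\modelBVIA}{})$ is ill-typed. The inclusion you want from Lemma~\ref{lemma3} together with Lemma~\ref{lemma:well-defined} is $\semantics{\upredicateIA{p}}{\modelBVIA}{} \subseteq \domBVIA{p}(\semantics{p}{\modelBV}{})$.
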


\begin{proof} See Appendix~\ref{appendix:proofs}. \end{proof}

As in the single-theory setting, our algorithm for checking the satisfiability of a theory-modular encoding (Section~\ref{sec:algorithm}) searches for symbolic interpretations of uninterpreted predicates that over-approximate their semantics while remaining strong enough to satisfy the queries.
A $\signatureBVIA$-\emph{interpretation} is a pair $\interpretationBVIA = (\interpretationBV, \interpretationIA)$, where $\interpretationBV$ is a $\signatureBV$-interpretation for $\BVfragment$ and $\interpretationIA$ is a $\signatureIA$-interpretation for $\IAfragment$.
Specifically, $\interpretationBV$ assigns to each uninterpreted predicate symbol $\upredicateBV{p} \in \upredicatesBV$ of type $\sortBV{n_1} \times \cdots \times \sortBV{n_m}$ a $\signatureBV$-formula whose free variables are in the tuple $\bar{y}_{\upredicateBV{p}} = (y_1,\dots,y_m)$, where each $y_i$ has sort $\sortBV{n_i}$.
Analogously, $\interpretationIA$ assigns to each $\upredicateIA{p} \in \upredicatesIA$ a $\signatureIA$-formula with free variables in the tuple $\bar{y}_{\upredicateIA{p}} = \bar{y}'_{\upredicateBV{p}}$, all of sort $\sortIA$.

\begin{definition}[Solution]\label{def:multi_solution}
    A $\signatureBVIA$-interpretation $\interpretationBVIA$ \emph{satisfies} $\encoding$ modulo $\theoryBVIA$, denoted $\interpretationBVIA \models_{\theoryBVIA} \encoding$, if the following conditions hold:
    \begin{itemize}
        \item \textbf{Local Satisfaction:}
        For each $\PH \in \{\BVshort,\IAshort\}$ and every CHC $\chc \in \fragmentPH$, the $\signaturePH$-formula $\interpretationPH(\chc)$ is valid modulo $\theoryPH$.
        
        \item \textbf{Interface Constraint Satisfaction:}
        \begin{itemize}
            \item For every interface constraint $\interface{\upredicateBV{p}}{\upredicateIA{p}} \in \interfacesBVIA$, the $\signatureBV$-formula
            \begin{equation*}
                \interpretationBVIA(\interface{\upredicateBV{p}}{\upredicateIA{p}})
                \;=\;
                \forall \bar{y}_{\upredicateBV{p}}\,.\;
                \interpretationBV(\upredicateBV{p})
                \rightarrow
                \formulaIABV \bigl(\interpretationIA(\upredicateIA{p})\bigr)
                \quad \text{is valid modulo $\theoryBV$.}
            \end{equation*}
            %is valid modulo $\theoryBV$.
            
            \item For every interface constraint $\interface{\upredicateIA{p}}{\upredicateBV{p}} \in \interfacesIABV$, the $\signatureBV$-formula
            \begin{equation*}
                \interpretationBVIA(\interface{\upredicateIA{p}}{\upredicateBV{p}})
                \;=\;
                \forall \bar{y}_{\upredicateBV{p}}\,.\;
                \formulaIABV \bigl(\interpretationIA(\upredicateIA{p})\bigr)
                \rightarrow
                \interpretationBV(\upredicateBV{p})
                \quad \text{is valid modulo $\theoryBV$.}
            \end{equation*}
            %is valid modulo $\theoryBV$.
        \end{itemize}
    \end{itemize}
    In this case, $\interpretationBVIA$ is called a \emph{$\signatureBVIA$-solution} for $\encoding$.
\end{definition}

Intuitively, \emph{local satisfaction} ensures that each fragment interpretation satisfies the CHCs of its own fragment in the corresponding theory, while \emph{interface constraint satisfaction} enforces consistency between the fragments.
For every interface constraint $\theta \in \interfaces$, the formula $\interpretationBVIA(\theta)$ requires that, within the restricted bit-vector domain, the interpretation of the target predicate includes all tuples corresponding to those admitted by the interpretation of the source predicate.
Notably, the interpretation of the integer copy of the predicate is allowed to admit additional tuples outside the restricted domain, and such tuples are discarded when $\formulaIABV$ is applied.

A $\signatureBVIA$-solution provides an over-approximation of the semantics of the uninterpreted predicates.
This property is established in Lemma~\ref{lemma:over_approximation}.

\begin{lemma}\label{lemma:over_approximation}
    Let $\interpretationBVIA$ be a $\signatureBVIA$-solution for $\encoding$, and let $\upredicatePH{p} \in \upredicatesBV \cup \upredicatesIA$ be an uninterpreted predicate with $\PH \in \{\BVshort,\IAshort\}$.
    Then, $\semantics{\upredicatePH{p}}{\modelBVIA}{}\subseteq \semantics{\interpretationPH(\upredicatePH{p})}{\modelPH}{}$.
    % {%\small
    % \begin{displaymath}
    %     \semantics{\upredicatePH{p}}{\modelBVIA}{}
    %     \;\subseteq\;
    %     \semantics{\interpretationPH(\upredicatePH{p})}{\modelPH}{}
    % \end{displaymath}
    % }
\end{lemma}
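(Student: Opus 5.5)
We prove the claim by induction on the bound $n$: for every $n \in \mathbb{N}$ and every $\upredicatePH{p} \in \upredicatesBV \cup \upredicatesIA$, we show $\semantics{\upredicatePH{p}}{\modelBVIA}{n} \subseteq \semantics{\interpretationPH(\upredicatePH{p})}{\modelPH}{}$. Since the unbounded semantics is the union over all bounds, the lemma follows. In each step we split $\semantics{\upredicatePH{p}}{\modelBVIA}{n}$ into its local part $\locsemantics{\upredicatePH{p}}{\modelBVIA}{n}$ and its external part $\extsemantics{\upredicatePH{p}}{\modelBVIA}{n}$ (Definition~\ref{def:bounded_multi_semantics}) and treat them separately.

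\emph{Local component.} Take a tuple $\bar{a} \in \locsemantics{\upredicatePH{p}}{\modelBVIA}{n}$, derived via a rule $\forall \bar{x}.\, \body[\bar{x}] \rightarrow \upredicatePH{p}(\bar{x}_{\upredicatePH{p}})$ in $\fragmentPH$. Then some extension of $\assignment{\bar{x}_{\upredicatePH{p}}}{\bar{a}}$ satisfies the body in $\modelPH^*$. For $n=0$ every body predicate is interpreted as $\emptyset$, so the rule has no body applications and the constraint alone holds. For $n>0$ each body predicate $\upredicatePH{q}$ is interpreted as $\semantics{\upredicatePH{q}}{\modelBVIA}{n-1}$, which by the induction hypothesis is contained in $\semantics{\interpretationPH(\upredicatePH{q})}{\modelPH}{}$. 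Body applications occur only positively, so the same assignment satisfies $\body$ after each application is replaced by its interpretation. By local satisfaction, $\interpretationPH(\chc)$ is valid modulo $\theoryPH$, hence valid in $\modelPH$. Therefore $\interpretationPH(\upredicatePH{p})[\bar{y} \mapsto \bar{x}_{\upredicatePH{p}}]$ holds, i.e., $\bar{a} \in \semantics{\interpretationPH(\upredicatePH{p})}{\modelPH}{}$. The one subtlety is repeated variables in the head arguments; the renaming $\bar{y} \mapsto \bar{x}_{\upredicatePH{p}}$ handles this routinely.

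\emph{External component.} This is the step requiring care, since it crosses theories; we use Definition~\ref{def:theory_transformers} together with Lemma~\ref{lemma:well-defined}. The case $n=0$ is trivial. For $n>0$ there are two directions.
\begin{itemize}
\item $\PH=\BVshort$ with $\interface{\upredicateIA{p}}{\upredicateBV{p}} \in \interfacesIABV$. Let $\bar{v} = \domIABV{p}(\bar{a}')$ with $\bar{a}' \in \semantics{\upredicateIA{p}}{\modelBVIA}{n-1}$. By the induction hypothesis, $\bar{a}' \in \semantics{\interpretationIA(\upredicateIA{p})}{\modelIA}{}$. By Lemma~\ref{lemma:well-defined}, $\bar{a}'$ lies in the image of $\domBVIA{p}$, so $\bar{a}' \in \restrBV{\semantics{\interpretationIA(\upredicateIA{p})}{\modelIA}{}}$. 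Eq.~\ref{eq:theory_transformer_IA_BV} then gives $\bar{v} \in \semantics{\formulaIABV(\interpretationIA(\upredicateIA{p}))}{\modelBV}{}$. Validity of the interface formula in $\modelBV$ yields $\bar{v} \in \semantics{\interpretationBV(\upredicateBV{p})}{\modelBV}{}$.
\item $\PH=\IAshort$. Let $\bar{a}' = \domBVIA{p}(\bar{v})$ with $\bar{v} \in \semantics{\upredicateBV{p}}{\modelBVIA}{n-1}$. By the induction hypothesis and the interface formula for $\interfacesBVIA$, $\bar{v} \in \semantics{\formulaIABV(\interpretationIA(\upredicateIA{p}))}{\modelBV}{}$. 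By Eq.~\ref{eq:theory_transformer_IA_BV}, $\bar{v} = \domIABV{p}(\bar{b}')$ for some $\bar{b}' \in \restrBV{\semantics{\interpretationIA(\upredicateIA{p})}{\modelIA}{}}$. Injectivity of the domain transformers gives $\bar{b}' = \domBVIA{p}(\bar{v}) = \bar{a}'$, so $\bar{a}' \in \semantics{\interpretationIA(\upredicateIA{p})}{\modelIA}{}$.
\end{itemize}

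The main obstacle is this bookkeeping in the external case: we must ensure that the free variables $\bar{y}_{\upredicateBV{p}}$ and $\bar{y}_{\upredicateIA{p}}$ match up under the variable mapping and the domain transformers, and that $\domBVIA{p}$ is consistent with the transformers for $\bar{y}$, which the consistency requirement provides. We also need that validity modulo a theory implies truth in the fixed models $\modelBV$ and $\modelIA$, which holds since these are models of the respective theories.
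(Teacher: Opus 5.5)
Your proof is correct. The paper states this lemma without giving a proof, but your simultaneous induction on the bound follows the same pattern as its proofs of Lemma~\ref{lemma:well-defined} and Lemma~\ref{lemma3}: the local component goes through local satisfaction and monotonicity of positive body occurrences, and the external component through Lemma~\ref{lemma:well-defined}, Eq.~\ref{eq:theory_transformer_IA_BV} and interface constraint satisfaction in the fixed models. One wording fix: at $n=0$ a rule may well have body applications; it simply contributes no tuples, so only facts matter.
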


%The proof is given in Appendix~\ref{appendix:lemma_proof}.
From Lemma~\ref{lemma:over_approximation}, it follows that the existence of a $\signatureBVIA$-solution implies the satisfiability of the theory-modular encoding, as stated formally in Theorem~\ref{theorem:solution_implies_satisfiability}.

\begin{theorem}\label{theorem:solution_implies_satisfiability}
    %Let $\chcs$ be a CHC set over $\signatureBV \cup \upredicates$, and let $\encoding$ be a $\theories$-encoding of $\chcs$.
    If $\encoding$ admits a $\signatureBVIA$-solution, then $\encoding$ is satisfiable modulo $\theoryBVIA$.
\end{theorem}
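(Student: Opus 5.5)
The plan is to derive Theorem~\ref{theorem:solution_implies_satisfiability} directly from Lemma~\ref{lemma:over_approximation}, following the single-theory argument after Definition~\ref{def:solution}. Fix a $\signatureBVIA$-solution $\interpretationBVIA = (\interpretationBV,\interpretationIA)$. By Definition~\ref{def:multi_satisfiability}, we must show the following for each $\PH \in \{\BVshort,\IAshort\}$ and each query $\chc \in \fragmentPH$:
\[
\modelPH\{\upredicatePH{p} \mapsto \semantics{\upredicatePH{p}}{\modelBVIA}{}\}_{\upredicatePH{p} \in \upredicatesPH} \models \chc .
\]

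Fix such a query, say $\forall \bar{x}.\; \phi[\bar{x}] \wedge \bigwedge_i \upredicatePH{p}_i(\bar{x}_{p_i}) \rightarrow \bot$. The first step uses local satisfaction. The formula $\interpretationPH(\chc)$ is valid modulo $\theoryPH$, so in particular it holds in the fixed model $\modelPH$. Hence no assignment $\bar{a}$ to $\bar{x}$ satisfies both $\phi$ under $\modelPH$ and $\interpretationPH(\upredicatePH{p}_i)[\bar{y}_{\upredicatePH{p}_i} \mapsto \bar{x}_{p_i}]$ for all $i$. Equivalently, writing $\bar{a}_{p_i}$ for the restriction of $\bar{a}$ to $\bar{x}_{p_i}$, we never have $\bar{a} \in \semantics{\phi}{\modelPH}{}$ together with $\bar{a}_{p_i} \in \semantics{\interpretationPH(\upredicatePH{p}_i)}{\modelPH}{}$ for all $i$.

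The second step argues by contradiction. Suppose the expanded structure violates $\chc$. Then some assignment $\bar{a}$ satisfies $\phi$ with $\bar{a}_{p_i} \in \semantics{\upredicatePH{p}_i}{\modelBVIA}{}$ for every $i$. Lemma~\ref{lemma:over_approximation} gives $\semantics{\upredicatePH{p}_i}{\modelBVIA}{} \subseteq \semantics{\interpretationPH(\upredicatePH{p}_i)}{\modelPH}{}$, so the same $\bar{a}$ violates $\interpretationPH(\chc)$ in $\modelPH$. This contradicts the first step, so every query holds and $\encoding$ is satisfiable.

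The only delicate point is bookkeeping. We need that the substitution $\bar{y}_{\upredicatePH{p}_i} \mapsto \bar{x}_{p_i}$ in $\interpretationPH(\chc)$ matches the tuple semantics. We also need that validity modulo $\theoryPH$ specializes to the fixed model $\modelPH$, which is immediate because $\modelPH$ is a $\theoryPH$-model. For $\PH = \IAshort$, the query's variables range over all integers. This causes no problem, because the over-approximation from Lemma~\ref{lemma:over_approximation} is stated directly over $\modelIA$ without restriction. All the substantive work, namely the interplay of interface constraints and domain transformers, is contained in Lemma~\ref{lemma:over_approximation}, which we take as given.
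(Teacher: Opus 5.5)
Your proposal is correct and is essentially the paper's argument. The paper states that the theorem follows directly from Lemma~\ref{lemma:over_approximation}. You make the query-by-query step explicit: local satisfaction makes $\interpretationPH(\chc)$ true in the fixed model $\modelPH$, so by the inclusion $\semantics{\upredicatePH{p}}{\modelBVIA}{} \subseteq \semantics{\interpretationPH(\upredicatePH{p})}{\modelPH}{}$, any assignment violating $\chc$ under the semantic expansion would also violate $\interpretationPH(\chc)$.
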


The following theorem establishes a correspondence between solutions of the encoding and solutions of the original CHCs.
%The proof is given in Appendix~\ref{appendix:proofs}.

\begin{theorem}\label{theorem:solution_preservation}
    $\chcs$ admits a $\signatureBV$-solution if and only if $\encoding$ admits a $\signatureBVIA$-solution.
\end{theorem}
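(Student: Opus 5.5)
We prove the two directions separately, in each case building the required interpretation from the given one with the theory transformers, and checking validity semantically through Eqs.~\ref{eq:theory_transformer_BV_IA} and~\ref{eq:theory_transformer_IA_BV} and Lemma~\ref{lemma:syntactic_restriction}.

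\emph{($\Rightarrow$)} Let $\interpretation$ be a $\signatureBV$-solution for $\chcs$. Set $\interpretationBV(\upredicateBV{p}) = \interpretation(p)$ (with variables renamed to $\bar{y}_{\upredicateBV{p}}$) and $\interpretationIA(\upredicateIA{p}) = \formulaBVIA(\interpretation(p))$.

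\emph{Local satisfaction in $\BVfragment$.} Each instantiated clause of $\BVfragment$ is syntactically identical to $\interpretation(\chc)$ for some $\chc \in \Delta_{\BVshort}$, so it is valid.

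\emph{Interface constraints.} By Eq.~\ref{eq:theory_transformer_IA_BV} followed by Eq.~\ref{eq:theory_transformer_BV_IA},
\[
\semantics{\formulaIABV(\formulaBVIA(\interpretation(p)))}{\modelBV}{} = \domIABV{p}\bigl(\restrBV{\semantics{\formulaBVIA(\interpretation(p))}{\modelIA}{}}\bigr) = \semantics{\interpretation(p)}{\modelBV}{},
\]
so both implications in Definition~\ref{def:multi_solution} are valid. Validity modulo $\theoryBV$ may be checked in the fixed $\modelBV$, assuming $\theoryBV$ is essentially a single standard model.

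\emph{Local satisfaction in $\IAfragment$.} Take an integer assignment satisfying the body of a translated clause. Every variable lies in the bit-vector image: variables in predicate applications because $\interpretationIA(\upredicateIA{q})$ holds, and the remaining ones because of the conjoined $\restrBV{x'}$.

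This step is the main obstacle. The restriction is only explicit on the non-predicate variables, so for predicate arguments I would first replace $\interpretationIA(\upredicateIA{q})$ by $\formulaBVIA(\interpretation(q)) \wedge \bigwedge \restrBV{y'}$. This is harmless for the interface and $\formulaIABV$ checks, by Lemma~\ref{lemma:syntactic_restriction}. With it, every body variable lies in the restricted domain, so the assignment pulls back via $\domIABV{}$ to a bit-vector assignment satisfying $\phi$ and each $\interpretation(q)$, using Eq.~\ref{eq:theory_transformer_BV_IA}. Since $\interpretation(\chc)$ is valid, the head $\interpretation(h)$ holds. Pushing forward again, the image satisfies $\formulaBVIA(\interpretation(h))$, and the head arguments satisfy their restrictions. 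If the head is $\bot$, the pulled-back assignment contradicts validity of $\interpretation(\chc)$, so no such integer assignment exists.

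\emph{($\Leftarrow$)} Given $(\interpretationBV,\interpretationIA)$, define
\[
\interpretation(p) = \interpretationBV(\upredicateBV{p}) \vee \formulaIABV(\interpretationIA(\upredicateIA{p})).
\]
We check each clause $\chc$ of $\chcs$ by a case split on its partition class, after fixing a bit-vector assignment satisfying the body of $\interpretation(\chc)$. Each body atom $\interpretation(q)(\bar{v})$ then holds through one of the two disjuncts.

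\emph{Case $\chc \in \Delta_{\BVshort}$.} Here $q$ occurs in the body of $\BVfragment$. If the second disjunct holds, then $\upredicateIA{q}$ has a nonempty interpretation, and the relevant fact is that $\interface{\upredicateIA{q}}{\upredicateBV{q}} \in \interfacesIABV$ whenever $\upredicateIA{q}$ heads some clause in $\IAfragment$. That interface constraint gives $\interpretationBV(\upredicateBV{q})(\bar{v})$.

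The subtle point is when $\upredicateIA{q}$ heads no clause in $\IAfragment$, so no interface constraint exists. To handle it, I would first normalize the given solution: set $\interpretationPH(\upredicatePH{q}) := \bot$ for every copy that heads no clause of its fragment. This preserves local and interface satisfaction, since the new interpretation is smaller on bodies and source sides and such predicates are never targets. After this normalization the second disjunct cannot hold without the interface constraint.

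Local satisfaction of $\BVfragment$ then gives $\interpretationBV(\upredicateBV{h})$, hence $\interpretation(h)$, or rules out the assignment when the head is $\bot$.

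\emph{Case $\chc \in \Delta_{\IAshort}$.} This is symmetric. Map $\bar{v}$ forward by $\domBVIA{}$; the restrictions hold automatically on images. Convert atoms $\interpretationBV(\upredicateBV{q})$ into $\formulaIABV(\interpretationIA(\upredicateIA{q}))$ via $\interfacesBVIA$, then use Eq.~\ref{eq:theory_transformer_IA_BV} to land in $\interpretationIA(\upredicateIA{q})$. Local validity in $\IAfragment$ gives the head, and Eq.~\ref{eq:theory_transformer_IA_BV} transfers it back as $\formulaIABV(\interpretationIA(\upredicateIA{h}))$, which implies $\interpretation(h)$.
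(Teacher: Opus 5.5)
Your ($\Rightarrow$) direction is sound and is essentially the paper's. The paper also sets $\interpretationIA(\upredicateIA{p})$ to $\formulaBVIA(\interpretation(p))$ conjoined with the restrictions $\restrBV{y'}$, for the reason you give, and your pointwise pull-back argument is what its observations (O1)/(O2) package.

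The gap is in ($\Leftarrow$), at the normalization step. You justify it by saying that copies heading no clause of their own fragment ``are never targets'', and this is backwards. By Definition~\ref{def:theory_modular_encoding}, the \emph{source} of an interface constraint must head a clause of its fragment, while the \emph{target} need only occur in a body. Such copies are therefore never sources, but they are very often targets. In Fig.~\ref{fig:opposite_sign}(c), $\upredicateBV{q}$ heads no clause of $\BVfragment$, yet it is the target of $\interface{\upredicateIA{q}}{\upredicateBV{q}}$. Setting $\interpretationBV(\upredicateBV{q}) := \bot$ turns that constraint into $\forall \bar{y}_{\upredicateBV{q}}.\, \formulaIABV(\interpretationIA(\upredicateIA{q})) \rightarrow \bot$. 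This fails, because by Lemma~\ref{lemma:over_approximation} the interpretation $\interpretationIA(\upredicateIA{q})$ must admit reachable in-domain tuples such as $(1,-1)$ (cf.\ Example~\ref{example:bounded_semantics_multi_theory}). So the normalized pair is not a solution. Your case $\chc \in \Delta_{\BVshort}$ then invokes exactly this constraint on the normalized pair to obtain $\interpretationBV(\upredicateBV{q})(\bar{v})$, which is now $\bot$. The case $\chc \in \Delta_{\IAshort}$ breaks symmetrically whenever an integer copy that heads no clause of $\IAfragment$ is the target of a constraint in $\interfacesBVIA$.

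To repair this, leave the given solution untouched and prune only the disjunction defining $\interpretation$. Include $\interpretationBV(\upredicateBV{p})$ only if $\upredicateBV{p}$ heads some clause of $\BVfragment$, and include $\formulaIABV(\interpretationIA(\upredicateIA{p}))$ only if $\upredicateIA{p}$ heads some clause of $\IAfragment$; use $\bot$ otherwise. Now take a body atom of a clause in $\Delta_{\BVshort}$. The first disjunct is either the \emph{original} $\interpretationBV(\upredicateBV{q})$ or $\bot$, so it implies $\interpretationBV(\upredicateBV{q})$. The second disjunct is present only when $\upredicateIA{q}$ heads a clause, so $\interface{\upredicateIA{q}}{\upredicateBV{q}} \in \interfacesIABV$ exists and the original solution's interface satisfaction applies. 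Local satisfaction of the original $\interpretationBV$ then yields $\interpretationBV(\upredicateBV{h})$. This is a disjunct of $\interpretation(h)$, because $\upredicateBV{h}$ heads the very clause under consideration. This is exactly the paper's argument via (C1), (C2) and the head bounds.
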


\begin{proof} See Appendix~\ref{appendix:proofs}. \end{proof}

\paragraph{Decidability.}
The encoding does not compromise the decidability of CHC-SAT modulo $\theoryBV$.
Intuitively, although the encoding introduces an integer component, all tuples that contribute to the semantics of the predicates correspond to values in the original finite bit-vector domain.
%In particular, by Lemma~\ref{lemma2}, the semantics derived for each integer predicate is contained in the image of the corresponding bit-vector domain under the domain transformers, and is therefore finite.
Thus, the encoding does not introduce additional sources of infiniteness that affect decidability.

\begin{comment}
The following corollary is an immediate consequence of Theorems~\ref{theorem:semantic_preserving_encoding} and~\ref{theorem:solution_implies_satisfiability}.

\begin{corollary}
    %Let $\chcs$ be a CHC set over $\signatureBV \cup \upredicates$, and let $\encoding$ be a $\theories$-encoding of $\chcs$.
    If $\encoding$ admits a $\signatureBVIA$-solution, then $\chcs$ is satisfiable modulo $\theoryBV$.
\end{corollary}
\end{comment}

%% file: sections/algorithm.tex
\section{Algorithm}\label{sec:algorithm}

This section presents $\solve$, a symbolic procedure for deciding the satisfiability of theory-modular encodings.
The procedure relies on two CHC satisfiability oracles: one for the bit-vector fragment modulo $\theoryBV$ and one for the integer fragment modulo $\theoryIA$.
At a high level, $\solve$ generates queries over a single theory and delegates them to the corresponding oracles, while coordinating the exchange of information between fragments.
This coordination is achieved by propagating constraints soundly across interface constraints using theory transformers, and by maintaining both under- and over-approximations of the external semantics of uninterpreted predicates.
The under-approximations are used to search for counterexamples to satisfiability, i.e., derivations leading to a query violation, while the over-approximations are used to block infeasible counterexamples and detect convergence.

The procedure is defined as a transition system with states specified below and transitions given by the inference rules in Fig.~\ref{fig:mosaic_solve}.
The input to $\solve$ is a theory-modular encoding $\encoding = (\BVfragment,\IAfragment,\interfaces)$.
For convenience, we assume that $\encoding$ contains no query and that a query $\chc_q$ is provided separately.
Without loss of generality, $\chc_q$ is a $\signatureBV$-formula of the form
$\forall \bar{y}_{\upredicateBV{q}}.\;
\bigl(\phi_q[\bar{y}_{\upredicateBV{q}}] \wedge \upredicateBV{q}(\bar{y}_{\upredicateBV{q}})\bigr) \rightarrow \bot$.
The goal of $\solve$ is to decide whether $(\BVfragment \cup \{\chc_q\}, \IAfragment, \interfaces)$ is satisfiable.

A state of $\solve$ is a triple $\state{\pobs}{\ulemmas}{\olemmas}$, where $\pobs$ is a finite set of \emph{proof obligations}, $\ulemmas$ is a set of under-approximation lemmas, and $\olemmas$ is a set of over-approximation lemmas.
Each proof obligation has the form $\pob{\upredicatePH{p}}{\phi}$, where $\PH \in \{\BVshort,\IAshort\}$, $\upredicatePH{p} \in \upredicatesPH$ is an uninterpreted predicate in the fragment $\fragmentPH$, and $\phi$ is a quantifier-free $\signaturePH$-formula.
A proof obligation $\pob{\upredicatePH{p}}{\phi}$ is generated when $\solve$ aims to determine whether the semantics of $\upredicatePH{p}$ contains a tuple satisfying $\phi$.
Intuitively, $\phi$ characterizes a set of \emph{bad} tuples, whose presence in the semantics of $\upredicatePH{p}$ may witness a counterexample to satisfiability.
The sets $\ulemmas$ and $\olemmas$ associate, with each predicate $\upredicatePH{p}$ that is the target of an interface constraint $\interface{\upredicatePHprime{p}}{\upredicatePH{p}} \in \interfaces$ (where $\PH \neq \PHprime$), a set of learned lemmas.
Intuitively, if $\icformula{\upredicatePH{p}}{\psi}$ is stored in $\ulemmas$ (resp.\ $\olemmas$), then $\psi$ under- (resp.\ over-) approximates $\extsemantics{\upredicatePH{p}}{\modelBVIA}{}$, i.e., the set of tuples derivable for $\upredicatePH{p}$ via its corresponding predicate copy $\upredicatePHprime{p}$ in the other fragment (see Definition~\ref{def:bounded_multi_semantics}).

\input{algorithms/bounded_sat_chc_solvers}

The procedure $\solve$ is initialized with a single proof obligation corresponding to the input query $\chc_q$, and with empty lemma sets (rule \textsc{Init}).
The rules \textsc{NegAnswer}, \textsc{PosAnswer}, and \textsc{Query} are then applied repeatedly.
Rules \textsc{NegAnswer} and \textsc{PosAnswer} are applied when a proof obligation in $\pobs$ can be resolved using the current over- and under-approximation lemmas in $\olemmas$ and $\ulemmas$, respectively.
Otherwise, if the current approximations are insufficient, rule \textsc{Query} is applied to generate a new proof obligation in the other fragment.
When the initial proof obligation is resolved using $\olemmas$ or $\ulemmas$, $\solve$ terminates with SAT or UNSAT, respectively (rules \textsc{Sat} and \textsc{Unsat}).
We explain the rules \textsc{NegAnswer}, \textsc{PosAnswer}, and \textsc{Query} below.

\medskip
\noindent\textbf{\textsc{NegAnswer}} discharges a proof obligation $\pob{\upredicatePH{p}}{\phi} \in \pobs$, other than the initial one, if it can be answered negatively using the current over-approximation.
To establish this, the CHC satisfiability oracle for $\theoryPH$ is invoked $\fragmentPHover \cup \Query(\upredicatePH{p},\phi)$ (defined at the bottom of Fig.~\ref{fig:mosaic_solve}).
Here, $\fragmentPHover$ extends $\fragmentPH$ with facts that encode, for each predicate that is the target of an interface constraint, the current over-approximation of its external semantics by conjoining the corresponding lemmas in $\olemmas$.
The query $\Query(\upredicatePH{p},\phi)$ asserts that the semantics of $\upredicatePH{p}$ is disjoint from the set of tuples satisfying $\phi$.
If $\PH = \IAshort$, syntactic domain restrictions are applied to ensure that all derivations correspond to the bit-vector domain.
If the above CHC set is satisfiable, the oracle returns a $\signaturePH$-solution $\interpretationPH$ such that $\interpretationPH(\upredicatePH{p})$ over-approximates $\semantics{\upredicatePH{p}}{\modelBVIA}{}$ and $\models_{\theoryPH} \interpretationPH(\upredicatePH{p}) \rightarrow \neg \phi$.
Consequently, the proof obligation is discharged and removed from $\pobs$.
Furthermore, the over-approximation of the corresponding predicate $\upredicatePHprime{p}$ in the other fragment is strengthened with a new lemma obtained by translating $\interpretationPH(\upredicatePH{p})$ via the appropriate theory transformer.
%Since $\interpretationPH(\upredicatePH{p})$ over-approximates the semantics of $\upredicatePH{p}$, the learned lemma over-approximates the external semantics of $\upredicatePHprime{p}$.

\medskip
\noindent\textbf{\textsc{PosAnswer}} discharges a proof obligation $\pob{\upredicatePH{p}}{\phi} \in \pobs$, other than the initial one, if it can be answered positively using the current under-approximation.
To establish this, the CHC satisfiability oracle for $\theoryPH$ is invoked on $\fragmentPHunder \cup \Query(\upredicatePH{p},\phi)$, which is constructed as in rule \textsc{NegAnswer}, except that the lemmas in $\ulemmas$ are used instead of those in $\olemmas$ and are combined disjunctively.
If the above CHC set is unsatisfiable, the oracle returns a refutation $\refutation$, i.e., a derivation of $\bot$.
The child of the root $\bot$ in $\refutation$ is a pair $\langle \upredicatePH{p}, \psi \rangle$, where $\psi$ is a $\signaturePH$-formula that under-approximates $\semantics{\upredicatePH{p}}{\modelBVIA}{}$ and satisfies $\not\models_{\theoryPH} \psi \rightarrow \neg \phi$.
Consequently, the proof obligation is discharged and removed from $\pobs$.
Furthermore, the under-approximation of the corresponding predicate $\upredicatePHprime{p}$ in the other fragment is extended with a new lemma obtained by translating $\psi$ via the appropriate theory transformer.

\medskip
\noindent\textbf{\textsc{Query}} generates a new proof obligation when an existing proof obligation $\pob{\upredicatePH{p}}{\phi} \in \pobs$ cannot be resolved using the current approximations.
In this case, $\fragmentPHunder \cup \Query(\upredicatePH{p},\phi)$ is satisfiable, whereas $\fragmentPHover \cup \Query(\upredicatePH{p},\phi)$ is unsatisfiable.
Thus, the current under-approximation does not witness a counterexample to the proof obligation, while the current over-approximation does.
The CHC satisfiability oracle for $\theoryPH$ returns a refutation $\refutation$ for $\fragmentPHover \cup \Query(\upredicatePH{p},\phi)$.
Since $\refutation$ is obtained using the over-approximation in $\olemmas$, it is not yet clear whether it corresponds to a feasible derivation of the original encoding $\encoding$.
Therefore, a leaf $\langle \upredicatePH{r}, \psi \rangle$ of $\refutation$ derived from a fact contributed by $\olemmas$ %(i.e., not from $\fragmentPH$)
is selected.
Such a leaf must exist, since otherwise all leaves would be derived from facts in $\fragmentPH$, implying that $\fragmentPHunder \cup \Query(\upredicatePH{p},\phi)$ is unsatisfiable.
To determine whether there exists a tuple in $\extsemantics{\upredicatePH{r}}{\modelBVIA}{}$ that satisfies $\psi$, a proof obligation is generated for the corresponding predicate $\upredicatePHprime{r}$ in the other fragment by translating $\psi$ via the appropriate theory transformer.

%\medskip
%\noindent We conclude with a soundness result and a remark on termination.

\begin{theorem}[Soundness]\label{theorem:soundness}
    Let $\encoding=(\BVfragment,\IAfragment,\interfaces)$ be a theory-modular encoding and $\chc_q$ a $\signatureBV$-query.
    If $\solve$ returns \emph{SAT} (resp.\ \emph{UNSAT}), then $(\BVfragment \cup \{\chc_q\}, \IAfragment, \interfaces)$ is satisfiable (resp.\ unsatisfiable).
\end{theorem}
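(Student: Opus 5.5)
The proof rests on two invariants of the state $\state{\pobs}{\ulemmas}{\olemmas}$, maintained by every rule. They are stated for each predicate $\upredicatePH{p}$ that is the target of an interface constraint $\interface{\upredicatePHprime{p}}{\upredicatePH{p}}$.

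\emph{Over-approximation invariant:} for every $\icformula{\upredicatePH{p}}{\psi} \in \olemmas$,
\[
\extsemantics{\upredicatePH{p}}{\modelBVIA}{} \subseteq \semantics{\psi}{\modelPH}{}.
\]
\emph{Under-approximation invariant:} for every $\icformula{\upredicatePH{p}}{\psi} \in \ulemmas$,
\[
\restrBV{\semantics{\psi}{\modelPH}{}} \subseteq \extsemantics{\upredicatePH{p}}{\modelBVIA}{}
\]
when $\PH=\IAshort$, and $\semantics{\psi}{\modelPH}{} \subseteq \extsemantics{\upredicatePH{p}}{\modelBVIA}{}$ when $\PH=\BVshort$. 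Both hold trivially after \textsc{Init}, since the lemma sets are empty. \textsc{Query} only changes $\pobs$, so it preserves both.

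For \textsc{NegAnswer}, the oracle returns a $\signaturePH$-solution $\interpretationPH$ of $\fragmentPHover \cup \Query(\upredicatePH{p},\phi)$. I first show that $\semantics{\upredicatePH{p}}{\modelBVIA}{} \subseteq \semantics{\interpretationPH(\upredicatePH{p})}{\modelPH}{}$. This goes by induction on the bound $n$ of Definition~\ref{def:bounded_multi_semantics}, with the same argument as Lemma~\ref{lemma:over_approximation}:
\begin{itemize}
\item local derivations are covered because $\interpretationPH$ satisfies every rule of $\fragmentPH$;
\item external derivations of interface targets are covered because the added facts encode the current $\olemmas$, which over-approximate the external semantics by the invariant.
\end{itemize}
The new lemma is $\formulaIABV(\interpretationIA(\upredicateIA{p}))$ or $\formulaBVIA(\interpretationBV(\upredicateBV{p}))$. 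It over-approximates $\extsemantics{\upredicatePHprime{p}}{\modelBVIA}{}$ by Eqs.~\ref{eq:theory_transformer_BV_IA} and~\ref{eq:theory_transformer_IA_BV}, combined with Lemma~\ref{lemma:well-defined}: integer tuples lie in the image of $\domBVIA{p}$, so the restriction loses nothing.

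For \textsc{PosAnswer}, the refutation gives $\psi$ whose tuples are derivable in $\fragmentPHunder$. A symmetric induction over the refutation tree shows that every such tuple lies in $\semantics{\upredicatePH{p}}{\modelBVIA}{}$:
\begin{itemize}
\item leaves from facts of $\fragmentPH$ are local derivations;
\item leaves from $\ulemmas$ facts are external derivations by the invariant;
\item for $\PH=\IAshort$, the domain restrictions in $\Query$ keep all tuples in the bit-vector image.
\end{itemize}
Translating $\psi$ with the theory transformer then gives an under-approximation of the external semantics of the other copy, again by Eqs.~\ref{eq:theory_transformer_BV_IA} and~\ref{eq:theory_transformer_IA_BV}. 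The transformer equations relate the semantics only on the restricted domain, which is why the invariant is stated with $\restrBV{\cdot}$.

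Soundness follows from the invariants.
\begin{itemize}
\item \textsc{Sat}: it fires when the initial obligation $\pob{\upredicateBV{q}}{\phi_q}$ is resolved negatively. By the over-approximation argument, $\semantics{\upredicateBV{q}}{\modelBVIA}{} \subseteq \semantics{\interpretationBV(\upredicateBV{q})}{\modelBV}{}$, and $\models \interpretationBV(\upredicateBV{q})\rightarrow\neg\phi_q$. Hence the expansion of $\modelBV$ satisfies $\chc_q$. The integer fragment has no queries, so Definition~\ref{def:multi_satisfiability} holds.
\item \textsc{Unsat}: the under-approximation argument yields a tuple in $\semantics{\upredicateBV{q}}{\modelBVIA}{}$ satisfying $\phi_q$, so $\chc_q$ is violated. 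Since $\modelBVIA$ is fixed, this refutes satisfiability.
\end{itemize}

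The main obstacle is the induction in \textsc{NegAnswer} and \textsc{PosAnswer}, which relates oracle solutions and refutations of the augmented single-theory sets $\fragmentPHover,\fragmentPHunder$ to Definition~\ref{def:bounded_multi_semantics}. The approach is to treat each lemma fact as a stand-in for the external component of its predicate. The delicate point is keeping the domain-restriction bookkeeping consistent when crossing via $\domBVIA{p}$, $\domIABV{p}$, and for that I would lean on Lemma~\ref{lemma:well-defined} and Lemma~\ref{lemma:syntactic_restriction}.
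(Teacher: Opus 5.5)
Your proposal is correct and follows the paper's proof essentially step for step. Like the paper, you argue by induction over the transition sequence, maintaining that the lemmas in $\olemmas$ and $\ulemmas$ over- and under-approximate the external semantics of interface targets: \textsc{Init} and \textsc{Query} preserve this trivially, \textsc{NegAnswer}/\textsc{PosAnswer} preserve it via the augmented fragments and the theory transformers, and \textsc{Sat}/\textsc{Unsat} follow by the same reasoning. Your unrestricted form of the over-approximation invariant for integer targets is equivalent to the paper's restricted one, since $\extsemantics{\upredicateIA{p}}{\modelBVIA}{}$ lies in the image of $\domBVIA{p}$ by definition.
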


\begin{proof}
    By induction on the length of the transition sequence.
    See Appendix~\ref{appendix:proofs}.
\end{proof}

\paragraph{Termination.}
The presentation of $\solve$ omits several refinements that ensure termination.
First, when applying rule \textsc{Query}, let $\langle \upredicatePH{r}, \psi \rangle$ be the selected leaf, which induces a new proof obligation.
The following two conditions must hold:
(i) the new proof obligation does not overlap with any existing proof obligation, and
(ii) no tuple in the current under-approximation of $\upredicatePH{r}$ satisfies $\psi$.
These conditions can be checked by examining the existing proof obligations in $\pobs$ corresponding to $\upredicatePHprime{r}$ and the under-approximation lemmas in $\ulemmas$ corresponding to $\upredicatePH{r}$.
Second, the procedure follows an iterative deepening strategy, similar to that of~\cite{DBLP:journals/fmsd/KomuravelliGC16}, which bounds the exploration of derivations and ensures that all relevant obligations are eventually considered.
Under these refinements, and assuming each invocation of the underlying CHC satisfiability oracles terminates, $\solve$ is guaranteed to terminate.

%% file: algorithms/bounded_sat_chc_solvers.tex
\begin{figure}[!t]
    \centering
    \scalebox{0.93}{
    \begin{mathpar}
        %================= Init ================
        \inferrule*[left=Init]
        {\\} 
        {
            \state
                {\{\, \pob{\upredicateBV{q}}{\phi_q} \,\}}
                {\emptyset}
                {\emptyset}
        }
        \vspace{0.275em}
        
        %============== NegAnswer ==============
        \inferrule*[left=NegAnswer]
        {
            \state{\pobs}{\ulemmas}{\olemmas} \\
            \pob{\upredicatePH{p}}{\phi} \in \pobs \setminus \{\, \pob{\upredicateBV{q}}{\phi_q} \,\} \\
            \PH \neq \PHprime \in \{ \BVshort,\IAshort \} \\\\
            \interpretationPH \models_{\theoryPH} \bigl( \fragmentPHover \cup \Query(\upredicatePH{p},\phi) \bigr) 
        } 
        {
            \state
                {\pobs \setminus \{ \pob{\upredicatePH{p}}{\phi} \}} 
                {\ulemmas}
                {\olemmas \cup \{\,\icformula{\upredicatePHprime{p}}{\formulaPH(\interpretationPH(\upredicatePH{p}))} \,\}}
        }
        \vspace{0.275em}

        %============== PosAnswer ==============
        \inferrule*[left=PosAnswer]
        {
            \state{\pobs}{\ulemmas}{\olemmas} \\
            \pob{\upredicatePH{p}}{\phi} \in \pobs \setminus \{\, \pob{\upredicateBV{q}}{\phi_q} \,\} \\
            \PH \neq \PHprime \in \{ \BVshort,\IAshort \} \\\\
            \bigl( \fragmentPHunder \cup \Query(\upredicatePH{p},\phi)\bigr) \vdash_{\theoryPH} \bot \quad \text{with refutation } \refutation \\
            \langle \, \upredicatePH{p},\psi \, \rangle = \refutationchild{\refutation}
        } 
        {
            \state
                {\pobs \setminus \{ \pob{\upredicatePH{p}}{\phi} \}} 
                {\ulemmas \cup \{\,\icformula{\upredicatePHprime{p}}{\formulaPH(\psi)} \,\}}
                {\olemmas}
        }
        \vspace{0.275em}
        
        %================ Query ================
        \inferrule*[left=Query]
        {
            \state{\pobs}{\ulemmas}{\olemmas} \\
            \pob{\upredicatePH{p}}{\phi} \in \pobs \\
            \PH \neq \PHprime \in \{ \BVshort,\IAshort \} \\\\
            \interpretationPH \models_{\theoryPH} \bigl( \fragmentPHunder \cup \Query(\upredicatePH{p},\phi) \bigr) \\
            \bigl( \fragmentPHover \cup \Query(\upredicatePH{p},\phi) \bigr) \vdash_{\theoryPH} \bot \quad \text{with refutation } \refutation \\
            \langle \, \upredicatePH{r},\psi \, \rangle \in \refutationleaves{\refutation} \\
            \interface{\upredicatePHprime{r}}{\upredicatePH{r}} \in \interfacesPHprime
            %\models_{\theoryPH} \psi \rightarrow \neg \bigl( \hspace*{-0.5cm} \bigvee_{\icformula{\upredicatePH{r}}{\psi'} \in \ulemmas} \hspace*{-0.5cm} \psi' \bigr) \\
            %\models_{\theoryPHprime} \formulaPH(\psi) \rightarrow \neg \bigl( \hspace*{-0.5cm} \bigvee_{\pob{\PHprime}{\upredicatePHprime{r}}{\psi'} \in \pobs} \hspace*{-0.5cm} \psi' \bigr)
        } 
        {
            \state
                {\pobs \cup \{ \pob{\upredicatePHprime{r}}{\formulaPH(\psi)} \}} 
                {\ulemmas}
                {\olemmas}
        }
        \vspace{0.275em}

        %================= SAT =================
        \inferrule* [left=Sat] 
        {
            \state{\{\, \pob{\upredicateBV{q}}{\phi_q} \,\}}{\ulemmas}{\olemmas} \\\\
            \interpretationBV \models_{\theoryBV} \bigl( \BVfragmentover \cup \Query(\upredicateBV{q},\phi_q) \bigr)
        } 
        {
            {\text{SAT}}
        }
        \hspace{1.5em}
        %================ UNSAT ================
        \inferrule* [left=Unsat] 
        {
            \state{\{\, \pob{\upredicateBV{q}}{\phi_q} \,\}}{\ulemmas}{\olemmas} \\\\
            \bigl( \BVfragmentunder \cup \Query(\upredicateBV{q},\phi_q) \bigr) \vdash_{\theoryBV} \bot
        } 
        {
            {\text{UNSAT}}
        }
    \end{mathpar}
    }
    \scalebox{0.95}{
        \begin{minipage}{\linewidth}
            \begin{align*}
                & \fragmentPHover \;=\; \fragmentPH \cup \bigcup_
                    {\interface{\upredicatePHprime{s}}{\upredicatePH{s}} \in \interfacesPHprime}
                    {\bigl\{ \,
                        \forall \bar{y}_{\upredicatePH{s}}.\; \bigl( \hspace*{-0.35cm} \bigwedge_{\icformula{\upredicatePH{s}}{\psi} \in \olemmas} \hspace*{-0.35cm} \psi[\bar{y}_{\upredicatePH{s}}] \;\wedge\; \restrictvars{\PH}{\bar{y}_{\upredicatePH{s}}} \bigr) \rightarrow \upredicatePH{s}(\bar{y}_{\upredicatePH{s}})
                    \, \bigr\}} \\
                & \fragmentPHunder \;=\; \fragmentPH \cup \bigcup_
                    {\interface{\upredicatePHprime{s}}{\upredicatePH{s}} \in \interfacesPHprime}
                    {\bigl\{ \,
                        \forall \bar{y}_{\upredicatePH{s}}.\; \bigl( \hspace*{-0.35cm} \bigvee_{\icformula{\upredicatePH{s}}{\psi} \in \ulemmas} \hspace*{-0.35cm} \psi[\bar{y}_{\upredicatePH{s}}] \;\wedge\; \restrictvars{\PH}{\bar{y}_{\upredicatePH{s}}} \bigr) \rightarrow \upredicatePH{s}(\bar{y}_{\upredicatePH{s}})
                    \, \bigr\}} \\
                & \Query(\upredicatePH{p},\phi) = \bigl\{ \,
                    \forall\bar{y}_{\upredicatePH{p}}. \; \bigl( \phi[\bar{y}_{\upredicatePH{p}}] \wedge \upredicatePH{p}(\bar{y}_{\upredicatePH{p}}) \wedge \restrictvars{\PH}{\bar{y}_{\upredicatePH{p}}} \bigr) \rightarrow \bot
                    \, \bigr\} \\
                %& \restrictvars{\IAshort}{\bar{y}} = \bigwedge_{y \in \bar{y}} \restrBV{y} 
                    %\qquad
                    %\restrictvars{\BVshort}{\bar{y}} = \top
                & \restrictvars{\PH}{\bar{y}} =
                \begin{cases}
                    \bigwedge_{y' \in \bar{y}} \restrBV{y'} & \text{if } \PH = \IAshort \\
                    \top & \text{if } \PH = \BVshort
                \end{cases}
            \end{align*}
        \end{minipage}
    }
    \caption{Inference rules defining $\solve$.}
    \label{fig:mosaic_solve}
    \vspace{-10pt}
\end{figure}

%% file: sections/implementation.tex
\section{Implementation Details and Experimental Results}\label{sec:implementation}

This section describes the implementation of the \mths framework and its evaluation.
We present the theory transformers used in our implementation, the strategy for partitioning a CHC set into theory fragments, and an optimization applied during the encoding.
Our prototype is implemented using Z3~\cite{DBLP:conf/tacas/MouraB08} and \spacer~\cite{DBLP:journals/fmsd/KomuravelliGC16}, where \spacer serves as the CHC-SAT oracle for both $\theoryBV$ and $\theoryIA$.

%\subsection{Implementation Details}\label{subsec:implementation_theory_transformers}

\paragraph{\bf Theory Transformers.} 
%\yv{Can't we just say something like: ``There are two natural choices when mapping elements in the domain of a variable $x$ of sort $\sortBV{n}$ to elements in the domain of its corresponding variable $x'$ of sort $\sortIA$ (see Definition~\ref{def:domain_transformers}): either...'' -- it will save a lot of space to just say what you want to say :) Moreover, since you already covered signed and unsigned transformers earlier, you can cut this description further here.}
Recall that translations are defined with respect to domain transformers (Definition~\ref{def:domain_transformers}), which map elements in the domain of each bit-vector variable to elements in the domain of its corresponding integer variable.
Two natural choices are signed and unsigned semantics.
In programming languages, signedness is typically part of a variable’s type (e.g., \texttt{int} vs.\ \texttt{unsigned int} in C).
In contrast, in $\signatureBV$, the sort of a variable depends only on its bit-width, while signedness is reflected in certain predicates and functions (e.g., $\bvult$ vs.\ $\bvslt$).
Consequently, the same variable may appear in both signed and unsigned contexts.
We adopt the following strategy. 
For each bit-vector variable, if it occurs more frequently in unsigned than signed predicate or function applications in $\chcs$, we use the unsigned domain transformer. 
Otherwise, we use the signed one.

We now describe the formula transformers $\formulaBVIA$ and $\formulaIABV$ (Definition~\ref{def:theory_transformers}).
For $\formulaBVIA$, we use the translation of Zohar et al.~\cite{DBLP:conf/vmcai/ZoharIMNNPRBT22}, which maps $\qfBV$ formulas to $\qfIA$ formulas and satisfies Eq.~\ref{eq:theory_transformer_BV_IA}.
Since the translation in~\cite{DBLP:conf/vmcai/ZoharIMNNPRBT22} assumes unsigned semantics, we extend it to signed semantics by adjusting domain restrictions and inserting the appropriate signed-unsigned conversions.

For $\formulaIABV$, we define a custom transformer based on syntactic substitution.
Arithmetic operators are mapped to their bit-vector counterparts (e.g., $+ \mapsto \bvadd$), predicates to signed bit-vector predicates (e.g., $< \mapsto \bvslt$), variables according to the variable mapping, and each integer constant $c$ to a bit-vector constant whose signed value is $c$.
However, this substitution alone does not satisfy Eq.~\ref{eq:theory_transformer_IA_BV}.
For example, consider the $\signatureIA$-formula $\varphi' = x' < y' + 1$, where the corresponding bit-vector variables $x,y$ have sort $\sortBV{2}$ and use the signed domain transformer.
The syntactic translation yields $\varphi = x \bvslt (y \bvadd 01)$.
The assignment $x' \mapsto 1, y' \mapsto 1$ satisfies $\varphi'$, but the corresponding assignment $x \mapsto 01, y \mapsto 01$ does not satisfy $\varphi$, since $01 \bvadd 01 = 10$, whose signed value is $-2$.
This mismatch is caused by overflow: bit-vector arithmetic is modular and operations may wrap around.
To avoid this, and to obtain a $\signatureBV$-formula that agrees with the original $\signatureIA$-formula on the restricted bit-vector domain, we extend variables and constants with extra bits using sign extension, choosing the minimal width for which no arithmetic term can overflow.
In this example, one extra bit suffices.

%\subsection{Encoding}\label{subsec:implementation_encoding}

\paragraph{\bf Encoding.} 
Given a CHC set $\chcs$ over $\signatureBV$, the partitioning strategy \mths applies (Definition~\ref{def:theory_modular_encoding}) is simple: CHCs whose function applications consist only of arithmetic operations ($\bvadd$, $\bvsub$, $\bvmul$, $\bvudiv$, $\bvurem$) are placed in the integer fragment, while all other CHCs remain in the bit-vector fragment.
The intuition is that arithmetic constraints are often easier to reason about in $\theoryIA$ than in $\theoryBV$, whereas bit-precise operations are better handled in $\theoryBV$, since translating them to $\theoryIA$ produces complex terms involving many $\arithdiv$ and $\arithmod$ operations.

We apply an additional encoding optimization.
The transformer $\formulaBVIA$ preserves bit-vector modular semantics by introducing modular arithmetic into the translated integer formulas (e.g., $x \bvadd y$ is translated to $(x' + y') \bmod 2^{n}$, where $n$ is the width of $x$ and $y$).
However, for CHCs placed in $\IAfragment$, if a term is guaranteed not to overflow, the modulo operation can be safely omitted.
For instance, in the CHCs of Fig.~\ref{fig:opposite_sign}(b), which encode the program in Fig.~\ref{fig:opposite_sign}(a), overflow is impossible along every execution.\footnote{This holds for all benchmarks used in our evaluation.}
Accordingly, the terms $a \bvadd 1$ and $b \bvsub 1$ are translated simply as $a' + 1$ and $b' - 1$, without the modulo operation.
%This optimization yields integer encodings that are better suited for reasoning in $\theoryIA$.

%% file: sections/evaluation.tex
%\subsection{Experimental Evaluation}\label{sec:evaluation}
\paragraph{\bf Experimental Evaluation.} 

%We implemented a prototype of \mths using Z3~\cite{DBLP:conf/tacas/MouraB08} and \spacer~\cite{DBLP:journals/fmsd/KomuravelliGC16}. our implementation includes the theory transformers for $\theoryBV$ and $\theoryIA$ as described above. \tmsat uses \spacer as the CHC-SAT oracle for both $\theoryBV$ and $\theoryIA$. %One limitation of our prototype is the fact that the multi-theory encoding is done manually.

%For the evaluation, we collected 11 small realistic programs (see Appendix~\ref{appendix:benchmarks}) and reduced them to CHCs modulo \qfbv. These programs perform bit-wise operations to implement functions like \texttt{max} and \texttt{abs} to achieve efficiency and security\footnote{\url{https://graphics.stanford.edu/~seander/bithacks.html}}.
%We created a benchmark set, which is derived from small realistic programs that perform bit-wise operations. There exist a set of well known functions (e.g. \texttt{max}, \texttt{abs}) that are implemented using bit-wise operations in order to achieve efficiency (and at times security guarantees)\footnote{\url{https://graphics.stanford.edu/~seander/bithacks.html}}. We use these known operations and encode programs that use them using CHCs modulo \qfbv. We also manually perform the multi-theory encoding for these examples.
%Our implementation, as well as the benchmarks used for the evaluation are available in an open-source repository.\footnote{Omitted for blind-review.}%\url{https://github.com/TechnionFV/MultiTheoryHorn}}.

\begin{wraptable}{r}{0.51\textwidth}
\vspace{-20pt}
\centering
\scalebox{0.7}{
\renewcommand{\arraystretch}{1}
\setlength{\tabcolsep}{6pt}
\sisetup{
  table-number-alignment = center,
  round-mode = places,
  round-precision = 2
}
\begin{tabular}{
l
c S[table-format=4.2]
c S[table-format=4.2]
c S[table-format=4.2]
}
\toprule
& \multicolumn{2}{c}{BV}
& \multicolumn{2}{c}{IA}
& \multicolumn{2}{c}{BV+IA} \\
\cmidrule(lr){2-3} \cmidrule(lr){4-5} \cmidrule(lr){6-7}
& \multicolumn{1}{c}{W} & \multicolumn{1}{c}{Time [s]}
& \multicolumn{1}{c}{W} & \multicolumn{1}{c}{Time [s]}
& \multicolumn{1}{c}{W} & \multicolumn{1}{c}{Time [s]} \\
\midrule
\textsf{abs-ge}         & 7 & 693.8             & 31 & 0.21         & \textbf{63} & 0.18    \\
\textsf{abs-sum}        & \textbf{6} & 631.82   & 3  & 37.23        & \textbf{6}  & 246.41  \\
\textsf{cond-neg}       & 7 & 76.17             & 8  & 506.52       & \textbf{63} & 0.47    \\
\textsf{cond-neg-diff}  & 6 & 1308.05           & 4  & 2723.5       & \textbf{8}  & 639.8   \\
\textsf{max-inv}        & 7 & 5122.19           & 4  & 0.05         & \textbf{8}  & 6188.19 \\
\textsf{opp-signs}      & 9 & 4358.99           & 3  & 13.01        & \textbf{62} & 0.09    \\
\textsf{opp-signs-diff} & \textbf{6} & 533.03   & 3  & 330.25       & \textbf{6}  & 548.96  \\
\textsf{swap}           & 7 & 2031.24           & 3  & 55.68        & \textbf{11} & 2142.14 \\
\textsf{swap-sum}       & \textbf{7} & 2109.57  & 4  & 243.96       & 6  & 134.21  \\
\textsf{turn-off-rm}    & \textbf{7} & 1718.3   & 4  & 235.95       & \textbf{7}  & 111.17  \\
\textsf{turn-on-lsb}    & \textbf{6} & 247.1    & 4  & 774.95       & \textbf{6}  & 115.98  \\
\bottomrule
\end{tabular}
}
\vspace{-5pt}
\caption{Performance of BV, IA, and BV+IA. Maximal bit-width (W) and runtime. Bold indicates the largest W.}\label{tbl:results}
\vspace{-15pt}
\end{wraptable}

%We use our generated benchmark to evaluate \mths. 

For the evaluation, we collected 11 small realistic programs (see Appendix~\ref{appendix:benchmarks}) and encoded them as CHCs modulo $\theoryBV$.
These programs use bit-wise operations to implement functions such as \texttt{max} and \texttt{abs} for efficiency and security\footnote{\url{https://graphics.stanford.edu/~seander/bithacks.html}}.
Experiments were conducted on a workstation with AMD EPYC 74F3. 
Every test was given 7200 seconds and 16GB of memory.
Our implementation and the benchmarks are publicly available.\footnote{\url{https://doi.org/10.5281/zenodo.21790677}}

We evaluate three configurations of \mths that differ in the partitioning strategy:
(i) BV, where all CHCs are placed in the bit-vector fragment (equivalent to running \spacer modulo $\theoryBV$);
(ii) IA, where all CHCs are translated to $\theoryIA$ and placed in the integer fragment (equivalent to running \spacer modulo $\theoryIA$ on the translated set); and
(iii) BV+IA, our theory-modular approach.
For each configuration, we report the maximal bit-width solved within the timeout and the runtime at that width (Table~\ref{tbl:results}).
Overall, BV+IA achieves the largest bit-width on most benchmarks, often by a wide margin.
In several cases, it scales to large bit-widths (up to 63), while BV and IA time out at much smaller widths.
When the maximal bit-width is the same, BV+IA typically improves runtime, sometimes by orders of magnitude.
The IA configuration is generally less effective, despite the arithmetic structure of the benchmarks, due to the translation of bit-wise operations into complex arithmetic terms involving many $\arithdiv$ and $\arithmod$ operations.
In contrast, BV often fails to scale to larger bit-widths.
Note that Z3 reports an internal error beyond bit-width 64, limiting the evaluation range.

%% file: sections/conclusion.tex
\section{Conclusion}\label{sec:conclusion}

We presented \mths, a theory-modular framework for deciding satisfiability of CHCs modulo $\theoryBV$.
It reduces CHCs into two fragments interpreted over $\theoryBV$ and $\theoryIA$, connected by interface constraints.
We introduced a procedure for determining satisfiability of such encodings and implemented a prototype.
Our evaluation shows that the approach can significantly outperform state-of-the-art solvers.

%% file: appendix/proofs.tex
\section{Additional Proofs}\label{appendix:proofs}

\paragraph{\bf Proof of Lemma~\ref{lemma:well-defined}.}
The claim is proved by induction on the bound.
For the local component, consider a CHC in $\IAfragment$ with head $\upredicateIA{p}$.
Any tuple derived using this CHC assigns to each head variable either (i) a value obtained from an uninterpreted predicate application in the body, which by the induction hypothesis lies in the image of the corresponding domain transformer, or (ii) a value satisfying the syntactic domain restriction $\restrBV{\cdot}$ conjoined during the encoding (Definition~\ref{def:theory_modular_encoding}).
In both cases, the derived tuple lies in the image of $\domBVIA{p}$.
For the external component, tuples are obtained by applying $\domBVIA{p}$ to tuples derived for $\upredicateBV{p}$, as dictated by the interface constraints.
Hence, all externally derived tuples lie in $\domBVIA{p}(\domain^{\modelBV}_{p})$. \qed

\paragraph{\bf Proof of Lemma~\ref{lemma3}.}
Throughout the proof, let $\chc = \forall \bar{x}.\, \bigl( \varphi \wedge \bigwedge_j p_j(\bar{x}_{p_j}) \bigr) \rightarrow H[\bar{x}]$ range over CHCs in $\chcs$, with body $\body[\bar{x}]$, and let $\upredicateBV{\chc}$ and $\upredicateIA{\chc}$ denote its copies per Definition~\ref{def:theory_modular_encoding} (when $\chc \in \Delta_{\BVshort}$ and $\chc \in \Delta_{\IAshort}$, respectively), with bodies $\body^{\BVshort}$ and $\body^{\IAshort}$.
Since the body of a CHC is a conjunction of a constraint and uninterpreted predicate applications, $\semantics{\exists (\bar{x} \setminus \bar{x}_h).\, \body[\bar{x}]}{\model\{p_j \mapsto R_j\}}{}$ is monotone in the relations $R_j$, for any tuple $\bar{x}_h \subseteq \bar{x}$.
We use the following two claims.
\begin{itemize}
    \item[(A)] Let $R_j \subseteq \domain^{\modelBV}_{p_j}$ and $R'_j = \domBVIA{p_j}(R_j)$ for each $j$, let $\chc \in \Delta_{\IAshort}$, and let $\bar{x}_h \subseteq \bar{x}$.
    Then
    \begin{displaymath}
        \domIABV{\bar{x}_h}\Bigl( \semantics{\exists (\bar{x}' \setminus \bar{x}'_h).\, \body^{\IAshort}[\bar{x}']}{\modelIA\{\upredicateIA{p}_j \mapsto R'_j\}}{} \Bigr)
        \;=\;
        \semantics{\exists (\bar{x} \setminus \bar{x}_h).\, \body[\bar{x}]}{\modelBV\{p_j \mapsto R_j\}}{}
    \end{displaymath}
    Indeed, every assignment satisfying $\body^{\IAshort}$ lies in the image of the domain transformers, since the relations $R'_j$ constrain the variables $\bigcup_j \bar{x}'_{p_j}$ to the image and the syntactic domain restrictions conjoined by the encoding constrain the remaining variables.
    Since domain transformers are defined per variable (with the consistency requirement guaranteeing that variables in the same argument position agree), such an assignment satisfies $\formulaBVIA(\varphi)$ iff its preimage satisfies $\varphi$ (Eq.~\ref{eq:theory_transformer_BV_IA}), and satisfies $\upredicateIA{p}_j(\bar{x}'_{p_j})$ under $R'_j$ iff its preimage satisfies $p_j(\bar{x}_{p_j})$ under $R_j$.
    Finally, projection onto $\bar{x}'_h$ commutes with the domain transformers.
    For $\chc \in \Delta_{\BVshort}$, the analogous equality
    $\semantics{\exists (\bar{x} \setminus \bar{x}_h).\, \body^{\BVshort}[\bar{x}]}{\modelBV\{\upredicateBV{p}_j \mapsto R_j\}}{} = \semantics{\exists (\bar{x} \setminus \bar{x}_h).\, \body[\bar{x}]}{\modelBV\{p_j \mapsto R_j\}}{}$
    is immediate, as $\upredicateBV{\chc}$ differs from $\chc$ only in the renaming of its uninterpreted predicates.
    \item[(B)] Let $\PH \in \{\BVshort,\IAshort\}$ and $\PHprime \in \{\BVshort,\IAshort\} \setminus \{\PH\}$.
    If $\upredicatePH{p}$ occurs in the body of some CHC in $\fragmentPH$, then $\domPHprime{p}\bigl(\semantics{\upredicatePHprime{p}}{\modelBVIA}{}\bigr) \subseteq \semantics{\upredicatePH{p}}{\modelBVIA}{}$.
    Indeed, if $\upredicatePHprime{p}$ is the head predicate of some CHC in $\chcs_{\PHprime}$, then $\interface{\upredicatePHprime{p}}{\upredicatePH{p}} \in \interfacesPHprime$ by Definition~\ref{def:theory_modular_encoding}, so $\domPHprime{p}\bigl(\semantics{\upredicatePHprime{p}}{\modelBVIA}{n}\bigr) = \extsemantics{\upredicatePH{p}}{\modelBVIA}{n+1} \subseteq \semantics{\upredicatePH{p}}{\modelBVIA}{n+1}$ for every $n$ (Eq.~\ref{eq:bounded_multi_semantics_external}), and the claim follows by taking the union over all bounds.
    Otherwise, $\locsemantics{\upredicatePHprime{p}}{\modelBVIA}{} = \emptyset$, so $\semantics{\upredicatePHprime{p}}{\modelBVIA}{} = \extsemantics{\upredicatePHprime{p}}{\modelBVIA}{} \subseteq \domPH{p}\bigl(\semantics{\upredicatePH{p}}{\modelBVIA}{}\bigr)$ by Eq.~\ref{eq:bounded_multi_semantics_external}, and applying $\domPHprime{p}$ yields the claim, since the composition $\domPHprime{p} \circ \domPH{p}$ is the identity (on $\domain^{\modelBV}_{p}$ when $\PH = \BVshort$, and on $\semantics{\upredicateIA{p}}{\modelBVIA}{}$ when $\PH = \IAshort$, using Lemma~\ref{lemma:well-defined}).
\end{itemize}

For the inclusion $\supseteq$, we show by induction on $n$ that $\semantics{\upredicateBV{p}}{\modelBVIA}{n} \subseteq \semantics{p}{\modelBV}{}$ and $\domIABV{p}\bigl(\semantics{\upredicateIA{p}}{\modelBVIA}{n}\bigr) \subseteq \semantics{p}{\modelBV}{}$ for every $p$ occurring in $\chcs$.
For the external components,
$\extsemantics{\upredicateBV{p}}{\modelBVIA}{n} \subseteq \domIABV{p}\bigl(\semantics{\upredicateIA{p}}{\modelBVIA}{n-1}\bigr) \subseteq \semantics{p}{\modelBV}{}$
and
$\domIABV{p}\bigl(\extsemantics{\upredicateIA{p}}{\modelBVIA}{n}\bigr) \subseteq \domIABV{p}\bigl(\domBVIA{p}\bigl(\semantics{\upredicateBV{p}}{\modelBVIA}{n-1}\bigr)\bigr) = \semantics{\upredicateBV{p}}{\modelBVIA}{n-1} \subseteq \semantics{p}{\modelBV}{}$
by Eq.~\ref{eq:bounded_multi_semantics_external} and the induction hypothesis.
For the local components, consider the contribution of a CHC with head predicate $p$.
If the CHC is $\upredicateBV{\chc} \in \BVfragment$ for some $\chc \in \Delta_{\BVshort}$, then by (A), its contribution equals $\semantics{\exists (\bar{x} \setminus \bar{x}_p).\, \body}{\modelBV\{p_j \mapsto R_j\}}{}$ with $R_j = \semantics{\upredicateBV{p}_j}{\modelBVIA}{n-1}$.
If the CHC is $\upredicateIA{\chc} \in \IAfragment$ for some $\chc \in \Delta_{\IAshort}$, then by (A) with $R_j = \domIABV{p_j}\bigl(\semantics{\upredicateIA{p}_j}{\modelBVIA}{n-1}\bigr)$ (well-defined by Lemma~\ref{lemma:well-defined}, and satisfying $\domBVIA{p_j}(R_j) = \semantics{\upredicateIA{p}_j}{\modelBVIA}{n-1}$), the image under $\domIABV{p}$ of its contribution equals $\semantics{\exists (\bar{x} \setminus \bar{x}_p).\, \body}{\modelBV\{p_j \mapsto R_j\}}{}$.
In both cases, $R_j \subseteq \semantics{p_j}{\modelBV}{}$ by the induction hypothesis, so by monotonicity, the resulting set is contained in $\semantics{\exists (\bar{x} \setminus \bar{x}_p).\, \body}{\modelBV\{p_j \mapsto \semantics{p_j}{\modelBV}{}\}}{}$, which is contained in $\semantics{p}{\modelBV}{}$ since the expansion $\modelBV\{p \mapsto \semantics{p}{\modelBV}{}\}_{p \in \upredicates}$ satisfies all rules in $\chcs$ (Section~\ref{subsec:chcs}).
The base case $n = 0$ is subsumed, as facts contain no predicate applications.

For the inclusion $\subseteq$, we show by induction on $n$ that $\semantics{p}{\modelBV}{n} \subseteq \semantics{\upredicateBV{p}}{\modelBVIA}{} \cup \domIABV{p}\bigl(\semantics{\upredicateIA{p}}{\modelBVIA}{}\bigr)$.
Consider the contribution of a CHC $\chc \in \chcs$ with head $p(\bar{x}_p)$, in which each $p_j$ is interpreted as $\semantics{p_j}{\modelBV}{n-1}$.
If $\chc \in \Delta_{\BVshort}$, then each $\upredicateBV{p}_j$ occurs in the body of $\upredicateBV{\chc} \in \BVfragment$, so by (B) and the induction hypothesis, $\semantics{p_j}{\modelBV}{n-1} \subseteq \semantics{\upredicateBV{p}_j}{\modelBVIA}{}$.
Since $\domain^{\modelBV}_{p_j}$ is finite, there is a common bound $m$ with $\semantics{p_j}{\modelBV}{n-1} \subseteq \semantics{\upredicateBV{p}_j}{\modelBVIA}{m}$ for all $j$, and by (A) and monotonicity, the contribution of $\chc$ is contained in that of $\upredicateBV{\chc}$ to $\locsemantics{\upredicateBV{p}}{\modelBVIA}{m+1}$, hence in $\semantics{\upredicateBV{p}}{\modelBVIA}{}$.
If $\chc \in \Delta_{\IAshort}$, then each $\upredicateIA{p}_j$ occurs in the body of $\upredicateIA{\chc} \in \IAfragment$, so by (B), $\domBVIA{p_j}\bigl(\semantics{\upredicateBV{p}_j}{\modelBVIA}{}\bigr) \subseteq \semantics{\upredicateIA{p}_j}{\modelBVIA}{}$, and together with the induction hypothesis and Lemma~\ref{lemma:well-defined}, $\domBVIA{p_j}\bigl(\semantics{p_j}{\modelBV}{n-1}\bigr) \subseteq \semantics{\upredicateIA{p}_j}{\modelBVIA}{}$.
By finiteness, there is a common bound $m$ with $\domBVIA{p_j}\bigl(\semantics{p_j}{\modelBV}{n-1}\bigr) \subseteq \semantics{\upredicateIA{p}_j}{\modelBVIA}{m}$ for all $j$.
By (A) with $R_j = \domIABV{p_j}\bigl(\semantics{\upredicateIA{p}_j}{\modelBVIA}{m}\bigr)$ and monotonicity (note $\semantics{p_j}{\modelBV}{n-1} \subseteq R_j$), the contribution of $\chc$ is contained in the image under $\domIABV{p}$ of the contribution of $\upredicateIA{\chc}$ to $\locsemantics{\upredicateIA{p}}{\modelBVIA}{m+1}$, hence in $\domIABV{p}\bigl(\semantics{\upredicateIA{p}}{\modelBVIA}{}\bigr)$.
\qed

\paragraph{\bf Proof of Theorem~\ref{theorem:semantic_preserving_encoding}.}
By Definition~\ref{def:theory_modular_encoding}, the queries of $\BVfragment$ and $\IAfragment$ are exactly the copies $\upredicateBV{\chc}$ and $\upredicateIA{\chc}$ of the queries $\chc \in \Delta_{\BVshort}$ and $\chc \in \Delta_{\IAshort}$, respectively.
By Definitions~\ref{def:satisfiability} and~\ref{def:multi_satisfiability}, it thus suffices to show that, for every query $\chc = \forall \bar{x}.\, \body[\bar{x}] \rightarrow \bot$ in $\chcs$,
\begin{displaymath}
    \modelBV\{p \mapsto \semantics{p}{\modelBV}{}\}_{p \in \upredicates} \models \chc
    \quad\text{iff}\quad
    \modelPH\{\upredicatePH{p} \mapsto \semantics{\upredicatePH{p}}{\modelBVIA}{}\}_{\upredicatePH{p} \in \upredicatesPH} \models \upredicatePH{\chc}
\end{displaymath}
where $\PH = \BVshort$ if $\chc \in \Delta_{\BVshort}$ and $\PH = \IAshort$ if $\chc \in \Delta_{\IAshort}$.
Let $\chc \in \Delta_{\BVshort}$.
Each $\upredicateBV{p}_j$ occurs in the body of $\upredicateBV{\chc} \in \BVfragment$, so by (B) (established in the proof of Lemma~\ref{lemma3}), $\domIABV{p_j}\bigl(\semantics{\upredicateIA{p}_j}{\modelBVIA}{}\bigr) \subseteq \semantics{\upredicateBV{p}_j}{\modelBVIA}{}$, and Lemma~\ref{lemma3} yields $\semantics{p_j}{\modelBV}{} = \semantics{\upredicateBV{p}_j}{\modelBVIA}{}$.
Since $\upredicateBV{\chc}$ differs from $\chc$ only in the renaming of its uninterpreted predicates, the two expanded structures agree on $\chc$, and the equivalence follows.
Let $\chc \in \Delta_{\IAshort}$.
Each $\upredicateIA{p}_j$ occurs in the body of $\upredicateIA{\chc} \in \IAfragment$, so by (B), $\domBVIA{p_j}\bigl(\semantics{\upredicateBV{p}_j}{\modelBVIA}{}\bigr) \subseteq \semantics{\upredicateIA{p}_j}{\modelBVIA}{}$.
Applying $\domBVIA{p_j}$ to the equality of Lemma~\ref{lemma3} and using Lemma~\ref{lemma:well-defined}, we obtain $\domBVIA{p_j}\bigl(\semantics{p_j}{\modelBV}{}\bigr) = \semantics{\upredicateIA{p}_j}{\modelBVIA}{}$.
By (A) with $R_j = \semantics{p_j}{\modelBV}{}$ and $\bar{x}_h$ the empty tuple, the body of $\chc$ is satisfiable in $\modelBV\{p_j \mapsto \semantics{p_j}{\modelBV}{}\}$ iff the body of $\upredicateIA{\chc}$ is satisfiable in $\modelIA\{\upredicateIA{p}_j \mapsto \semantics{\upredicateIA{p}_j}{\modelBVIA}{}\}$, and the equivalence follows.
\qed

\paragraph{\bf Proof of Theorem~\ref{theorem:solution_preservation}.}
We say that a formula $\varphi' \in \qfIA$ with free variables $\bar{x}'$ \emph{corresponds to} a formula $\varphi \in \qfBV$ with free variables $\bar{x}$ if $\semantics{\varphi}{\modelBV}{} = \domIABV{\bar{x}}\bigl(\restrBV{\semantics{\varphi'}{\modelIA}{}}\bigr)$, i.e., if $\varphi'$ preserves the semantics of $\varphi$ in the sense of Definition~\ref{def:theory_transformers}.
In particular, $\formulaBVIA(\varphi)$ corresponds to $\varphi$ and $\varphi'$ corresponds to $\formulaIABV(\varphi')$ (Eqs.~\ref{eq:theory_transformer_BV_IA} and~\ref{eq:theory_transformer_IA_BV}).
Moreover, $\bot$ corresponds to $\bot$, and $\bigwedge_{x' \in \bar{x}'} \restrBV{x'}$ corresponds to $\top$ by the definition of the syntactic domain restriction.
When formulas with free variables among a tuple $\bar{x}$ are combined, their semantics are taken over $\bar{x}$.
We use the following two observations.
\begin{itemize}
    \item[(O1)] Correspondence is preserved under conjunction and under substitution along the variable mapping.
    This holds since domain transformers are defined per variable, where the consistency requirement guarantees that substituted variables carry the same domain transformers.
    In addition, by Lemma~\ref{lemma:syntactic_restriction}, conjoining syntactic domain restrictions to $\varphi'$ preserves correspondence.
    \item[(O2)] Let $\varphi'_1, \varphi'_2$ correspond to $\varphi_1, \varphi_2$, respectively, with free variables among $\bar{x}$ (resp.\ $\bar{x}'$).
    Then $\forall\bar{x}.\, \varphi_1 \rightarrow \varphi_2$ is valid modulo $\theoryBV$ iff $\forall\bar{x}'.\, \bigl(\varphi'_1 \wedge \bigwedge_{x' \in \bar{x}'} \restrBV{x'}\bigr) \rightarrow \varphi'_2$ is valid modulo $\theoryIA$.
    Indeed, the former holds iff $\semantics{\varphi_1}{\modelBV}{} \subseteq \semantics{\varphi_2}{\modelBV}{}$, iff $\restrBV{\semantics{\varphi'_1}{\modelIA}{}} \subseteq \restrBV{\semantics{\varphi'_2}{\modelIA}{}}$ by correspondence and the injectivity of $\domBVIA{\bar{x}}$, iff $\semantics{\varphi'_1 \wedge \bigwedge_{x' \in \bar{x}'} \restrBV{x'}}{\modelIA}{} \subseteq \semantics{\varphi'_2}{\modelIA}{}$ by Lemma~\ref{lemma:syntactic_restriction}. %(for the last step, intersect both sides with the image of $\domBVIA{\bar{x}}$).
\end{itemize}

\smallskip
\noindent
($\Rightarrow$)
Let $\interpretation$ be a $\signatureBV$-solution for $\chcs$.
Define a $\signatureBVIA$-interpretation $\interpretationBVIA = (\interpretationBV, \interpretationIA)$ as follows.
For each uninterpreted predicate $p \in \upredicates$,
\begin{displaymath}
    \interpretationBV(\upredicateBV{p})[\bar{y}_{\upredicateBV{p}}]
    \triangleq
    \interpretation(p)[\bar{y}_{\upredicateBV{p}}]
    \quad\text{and}\quad
    \interpretationIA(\upredicateIA{p})[\bar{y}'_{\upredicateBV{p}}]
    \triangleq
    \formulaBVIA(\interpretation(p))[\bar{y}'_{\upredicateBV{p}}]
    \;\wedge\;
    \bigwedge_{y' \in \bar{y}'_{\upredicateBV{p}}} \restrBV{y'}
\end{displaymath}
We show that $\interpretationBVIA$ satisfies both conditions of Definition~\ref{def:multi_solution}.
Every CHC $\upredicateBV{\chc} \in \BVfragment$ is obtained from some $\chc \in \Delta_{\BVshort}$ by replacing each uninterpreted predicate $p$ with $\upredicateBV{p}$, so $\interpretationBV(\upredicateBV{\chc})$ coincides with $\interpretation(\chc)$ and is valid modulo $\theoryBV$.

Next, let $\upredicateIA{\chc} \in \IAfragment$ be obtained from $\chc = \forall \bar{x}.\, \bigl( \varphi \wedge \bigwedge_j p_j(\bar{x}_{p_j}) \bigr) \rightarrow H[\bar{x}]$ in $\Delta_{\IAshort}$ as in Definition~\ref{def:theory_modular_encoding}.
By (O1), the body of $\interpretationIA(\upredicateIA{\chc})$ corresponds to the body of $\interpretation(\chc)$: the constraint $\formulaBVIA(\varphi)$ corresponds to $\varphi$, each conjunct $\interpretationIA(\upredicateIA{p}_j)[\bar{y}'_{\upredicateBV{p_j}} \mapsto \bar{x}'_{p_j}]$ corresponds to $\interpretation(p_j)[\bar{y}_{\upredicateBV{p_j}} \mapsto \bar{x}_{p_j}]$, and the explicit restrictions correspond to $\top$.
Likewise, the head of $\interpretationIA(\upredicateIA{\chc})$ corresponds to the head of $\interpretation(\chc)$.
Moreover, the body of $\interpretationIA(\upredicateIA{\chc})$ entails $\bigwedge_{x' \in \bar{x}'} \restrBV{x'}$ in $\modelIA$, since the restrictions occurring in the conjuncts $\interpretationIA(\upredicateIA{p}_j)[\cdot]$ cover $\bigcup_j \bar{x}'_{p_j}$ and the explicit restrictions cover $\bar{y} = \bar{x} \setminus (\bigcup_j \bar{x}_{p_j})$.
Conjoining $\bigwedge_{x' \in \bar{x}'} \restrBV{x'}$ to the body of $\interpretationIA(\upredicateIA{\chc})$ therefore results in an equivalent formula, and by (O2), the validity of $\interpretation(\chc)$ modulo $\theoryBV$ implies the validity of $\interpretationIA(\upredicateIA{\chc})$ modulo $\theoryIA$.

Finally, for every $p \in \upredicates$,
\begin{displaymath}
    \semantics{\formulaIABV(\interpretationIA(\upredicateIA{p}))}{\modelBV}{}
    =
    \domIABV{\bar{y}_{\upredicateBV{p}}}\bigl(\restrBV{\semantics{\interpretationIA(\upredicateIA{p})}{\modelIA}{}}\bigr)
    =
    \domIABV{\bar{y}_{\upredicateBV{p}}}\bigl(\restrBV{\semantics{\formulaBVIA(\interpretation(p))}{\modelIA}{}}\bigr)
    =
    \semantics{\interpretation(p)}{\modelBV}{}
\end{displaymath}
by Eq.~\ref{eq:theory_transformer_IA_BV}, Lemma~\ref{lemma:syntactic_restriction}, and Eq.~\ref{eq:theory_transformer_BV_IA}, respectively.
Since $\interpretation(p) = \interpretationBV(\upredicateBV{p})$, both $\forall \bar{y}_{\upredicateBV{p}}.\, \interpretationBV(\upredicateBV{p}) \rightarrow \formulaIABV(\interpretationIA(\upredicateIA{p}))$ and its converse are valid modulo $\theoryBV$, so $\interpretationBVIA(\theta)$ is valid modulo $\theoryBV$ for every interface constraint $\theta \in \interfaces$.

\smallskip
\noindent
($\Leftarrow$)
Let $\interpretationBVIA = (\interpretationBV, \interpretationIA)$ be a $\signatureBVIA$-solution for $\encoding$.
Define a $\signatureBV$-interpretation $\interpretation$ as follows.
For each uninterpreted predicate $p \in \upredicates$,
\begin{displaymath}
    \interpretation(p)[\bar{y}_{\upredicateBV{p}}]
    \;\triangleq\;
    \varphi^{p}_{\BVshort} \;\vee\; \varphi^{p}_{\IAshort}
\end{displaymath}
where
\begin{align*}
    \varphi^{p}_{\BVshort} &\triangleq
    \begin{cases}
        \interpretationBV(\upredicateBV{p})[\bar{y}_{\upredicateBV{p}}] & \text{if $\upredicateBV{p}$ is the head predicate of some CHC in $\BVfragment$} \\
        \bot & \text{otherwise}
    \end{cases}
    \\
    \varphi^{p}_{\IAshort} &\triangleq
    \begin{cases}
        \formulaIABV\bigl(\interpretationIA(\upredicateIA{p})\bigr)[\bar{y}_{\upredicateBV{p}}] & \text{if $\upredicateIA{p}$ is the head predicate of some CHC in $\IAfragment$} \\
        \bot & \text{otherwise}
    \end{cases}
\end{align*}
Note that $\formulaIABV(\interpretationIA(\upredicateIA{p}))$ is a $\signatureBV$-formula with free variables in $\bar{y}_{\upredicateBV{p}}$, since $\bar{y}_{\upredicateIA{p}} = \bar{y}'_{\upredicateBV{p}}$, so $\interpretation$ is well-defined.
%Restricting each disjunct to head-side copies is essential: if $\upredicateIA{p}$ is not the head predicate of any CHC in $\IAfragment$ then, by Definition~\ref{def:theory_modular_encoding}, no interface constraint in $\interfaces$ has source $\upredicateIA{p}$, so $\interpretationIA(\upredicateIA{p})$ is not bounded in terms of $\interpretationBV(\upredicateBV{p})$, and including it in the disjunction could enlarge $\interpretation(p)$ on body occurrences of $p$ (symmetrically for $\upredicateBV{p}$).

We first establish the following claim.
For every $p \in \upredicates$:
\begin{itemize}
    \item[(C1)] If $\upredicateBV{p}$ occurs in the body of some CHC in $\BVfragment$, then $\forall \bar{y}_{\upredicateBV{p}}.\, \interpretation(p) \rightarrow \interpretationBV(\upredicateBV{p})$ is valid modulo $\theoryBV$.
    \item[(C2)] If $\upredicateIA{p}$ occurs in the body of some CHC in $\IAfragment$, then $\forall \bar{y}_{\upredicateBV{p}}.\, \interpretation(p) \rightarrow \formulaIABV\bigl(\interpretationIA(\upredicateIA{p})\bigr)$ is valid modulo $\theoryBV$.
\end{itemize}
For (C1), the disjunct $\varphi^{p}_{\BVshort}$ is either $\interpretationBV(\upredicateBV{p})$ or $\bot$.
If $\varphi^{p}_{\IAshort} \neq \bot$, then $\upredicateIA{p}$ is the head predicate of some CHC in $\IAfragment$ and, by the assumption of (C1), $\upredicateBV{p}$ occurs in the body of some CHC in $\BVfragment$.
By Definition~\ref{def:theory_modular_encoding}, $\interface{\upredicateIA{p}}{\upredicateBV{p}} \in \interfacesIABV$, and interface constraint satisfaction (Definition~\ref{def:multi_solution}) yields the validity modulo $\theoryBV$ of $\varphi^{p}_{\IAshort} \rightarrow \interpretationBV(\upredicateBV{p})$.
The proof of (C2) is symmetric, using $\interface{\upredicateBV{p}}{\upredicateIA{p}} \in \interfacesBVIA$.
In addition, by construction, if $\upredicateBV{p}$ (resp.\ $\upredicateIA{p}$) is the head predicate of some CHC in its fragment, then $\interpretationBV(\upredicateBV{p}) \rightarrow \interpretation(p)$ (resp.\ $\formulaIABV(\interpretationIA(\upredicateIA{p})) \rightarrow \interpretation(p)$) is valid modulo $\theoryBV$.
We refer to these implications as the \emph{head bounds}.
Being validities of universally quantified implications, (C1), (C2), and the head bounds also hold under any substitution of $\bar{y}_{\upredicateBV{p}}$.

We now show that $\interpretation(\chc)$ is valid modulo $\theoryBV$ for every $\chc = \forall \bar{x}.\, \bigl( \varphi \wedge \bigwedge_j p_j(\bar{x}_{p_j}) \bigr) \rightarrow H[\bar{x}]$ in $\chcs$, i.e., that $\interpretation$ is a $\signatureBV$-solution for $\chcs$ (Definition~\ref{def:solution}).

\emph{Case $\chc \in \Delta_{\BVshort}$.}
Let $\upredicateBV{\chc} \in \BVfragment$ be the copy of $\chc$.
Each $\upredicateBV{p}_j$ occurs in the body of $\upredicateBV{\chc}$, so by (C1), the body of $\interpretation(\chc)$ implies the body of $\interpretationBV(\upredicateBV{\chc})$ modulo $\theoryBV$.
By local satisfaction, the latter implies $\bot$ if $H = \bot$, and $\interpretationBV(\upredicateBV{h})[\bar{y}_{\upredicateBV{h}} \mapsto \bar{x}_h]$ if $H = h(\bar{x}_h)$.
In the latter case, $\upredicateBV{h}$ is the head predicate of $\upredicateBV{\chc} \in \BVfragment$, so the head bound gives $\interpretationBV(\upredicateBV{h}) \rightarrow \interpretation(h)$.
Chaining the implications shows that $\interpretation(\chc)$ is valid modulo $\theoryBV$.

\emph{Case $\chc \in \Delta_{\IAshort}$.}
Let $\upredicateIA{\chc} \in \IAfragment$ be the translation of $\chc$ per Definition~\ref{def:theory_modular_encoding}.
Each $\upredicateIA{p}_j$ occurs in the body of $\upredicateIA{\chc}$, so by (C2), the body of $\interpretation(\chc)$ implies, modulo $\theoryBV$, the $\signatureBV$-formula
\begin{equation}\label{eq:solution_preservation_body}
    \varphi \;\wedge\; \bigwedge_j \formulaIABV\bigl(\interpretationIA(\upredicateIA{p}_j)\bigr)[\bar{y}_{\upredicateBV{p_j}} \mapsto \bar{x}_{p_j}]
\end{equation}
By (O1), the body of $\interpretationIA(\upredicateIA{\chc})$ corresponds to Eq.~\ref{eq:solution_preservation_body}, and the head of $\interpretationIA(\upredicateIA{\chc})$ corresponds to $\formulaIABV(\interpretationIA(\upredicateIA{h}))[\bar{y}_{\upredicateBV{h}} \mapsto \bar{x}_h]$ if $H = h(\bar{x}_h)$, and to $\bot$ if $H = \bot$.
By local satisfaction, $\interpretationIA(\upredicateIA{\chc})$ is valid modulo $\theoryIA$, and hence so is the formula obtained from it by conjoining $\bigwedge_{x' \in \bar{x}'} \restrBV{x'}$ to its body.
By (O2), Eq.~\ref{eq:solution_preservation_body} implies the corresponding head modulo $\theoryBV$.
If $H = \bot$, this shows that $\interpretation(\chc)$ is valid modulo $\theoryBV$.
Otherwise, $\upredicateIA{h}$ is the head predicate of $\upredicateIA{\chc} \in \IAfragment$, so the head bound gives $\formulaIABV(\interpretationIA(\upredicateIA{h})) \rightarrow \interpretation(h)$, and chaining the implications shows that $\interpretation(\chc)$ is valid modulo $\theoryBV$. \qed

\paragraph{\bf Proof of Theorem~\ref{theorem:soundness}.}
The proof is by induction on the length of the transition sequence.
We maintain the following invariant.
For every $\icformula{\upredicatePH{p}}{\psi} \in \ulemmas$, the formula $\psi$ under-approximates the external semantics of $\upredicatePH{p}$:
if $\PH=\BVshort$, then
$\semantics{\psi}{\modelBV}{} \subseteq \extsemantics{\upredicateBV{p}}{\modelBVIA}{}$,
and if $\PH=\IAshort$, then
$\restrBV{\semantics{\psi}{\modelIA}{}} \subseteq \extsemantics{\upredicateIA{p}}{\modelBVIA}{}$.
Dually, for every $\icformula{\upredicatePH{p}}{\psi} \in \olemmas$, the formula $\psi$ over-approximates the external semantics of $\upredicatePH{p}$:
if $\PH=\BVshort$, then
$\extsemantics{\upredicateBV{p}}{\modelBVIA}{} \subseteq \semantics{\psi}{\modelBV}{}$,
and if $\PH=\IAshort$, then
$\extsemantics{\upredicateIA{p}}{\modelBVIA}{} \subseteq \restrBV{\semantics{\psi}{\modelIA}{}}$.

The invariant holds in the initial state (rule \textsc{Init}), since both lemma sets are empty.
Rule \textsc{Query} preserves it trivially, as it does not modify $\ulemmas$ or $\olemmas$.
Consider rule \textsc{PosAnswer}.
Let $\refutation$ be the refutation of $\fragmentPHunder \cup \Query(\upredicatePH{p},\phi)$ returned by the oracle, and let $\langle \upredicatePH{p},\psi\rangle$ be the child of its root.
By construction, $\fragmentPHunder$ consists of CHCs from $\fragmentPH$ together with facts induced by $\ulemmas$.
CHCs in $\fragmentPH$ are part of the original encoding $\encoding$, and facts induced by $\ulemmas$ under-approximate the corresponding external semantics by the induction hypothesis.
Hence $\fragmentPHunder$ under-approximates the behavior of $\encoding$, in the sense that for every uninterpreted predicate $\upredicatePH{r}$ in $\fragmentPH$ (and in particular for $\upredicatePH{p}$), its semantics in $\fragmentPHunder$ is contained in its semantics in $\encoding$.
Therefore, $\refutation$ corresponds to a feasible (partial) derivation in $\encoding$, and $\psi$ under-approximates the semantics of $\upredicatePH{p}$ in $\encoding$.
Finally, by Definitions~\ref{def:bounded_multi_semantics} and~\ref{def:theory_transformers}, translating $\psi$ across the corresponding interface constraint yields a formula that under-approximates the external semantics of $\upredicatePHprime{p}$.
Thus the invariant is preserved for $\ulemmas$.

Rule \textsc{NegAnswer} is dual.
Let $\interpretationPH$ be the solution of $\fragmentPHover \cup \Query(\upredicatePH{p},\phi)$ returned by the oracle.
By construction, $\fragmentPHover$ consists of CHCs from $\fragmentPH$ together with facts induced by $\olemmas$.
CHCs in $\fragmentPH$ are part of the original encoding $\encoding$, and facts induced by $\olemmas$ over-approximate the corresponding external semantics by the induction hypothesis.
Hence $\fragmentPHover$ over-approximates the behavior of $\encoding$, in the sense that for every uninterpreted predicate $\upredicatePH{r}$ in $\fragmentPH$ (and in particular for $\upredicatePH{p}$), its semantics in $\encoding$ is contained in its semantics in $\fragmentPHover$.
It follows that $\interpretationPH(\upredicatePH{p})$ over-approximates the semantics of $\upredicatePH{p}$ in $\encoding$.
Finally, by Definitions~\ref{def:bounded_multi_semantics} and~\ref{def:theory_transformers}, translating $\interpretationPH(\upredicatePH{p})$ across the corresponding interface constraint yields a formula that over-approximates the external semantics of $\upredicatePHprime{p}$.
Thus the invariant is preserved for $\olemmas$.

It follows from the invariant that, throughout the run, $\fragmentPHunder$ under-approximates the semantics of the corresponding fragment in the theory-modular encoding, while $\fragmentPHover$ over-approximates it.
Hence, the soundness of rule \textsc{Unsat} follows by the same reasoning as for rule \textsc{PosAnswer}: a refutation of
$\BVfragmentunder \cup \Query(\upredicateBV{q},\phi_q)$ corresponds to a feasible derivation in $\encoding$, and thus witnesses a violation of the query.
Dually, the soundness of rule \textsc{Sat} follows by the same reasoning as for rule \textsc{NegAnswer}: a solution of
$\BVfragmentover \cup \Query(\upredicateBV{q},\phi_q)$ over-approximates the semantics in $\encoding$ and proves that no query violation is possible.
This concludes the proof.

%% file: appendix/benchmarks.tex
\clearpage
\section{Benchmarks}\label{appendix:benchmarks}

\input{benchmarks/abs-ge}
\input{benchmarks/abs-sum}
\input{benchmarks/cond_negate}
\input{benchmarks/cond_negate_diff}
\input{benchmarks/max-inv}
\input{benchmarks/opposite_signs_diff}
\input{benchmarks/swap}
\input{benchmarks/swap_sum}
\input{benchmarks/turn_off_rm1}
\input{benchmarks/turn_on_lsb}

%% file: benchmarks/abs-ge.tex
\begin{figure}[h]
\centering

% Left column: Panel 1
\begin{minipage}[h]{0.32\textwidth}
\begin{lstlisting}[style=compactC]
void f(int x) {
  assume(x!=INT_MIN);
  int y,i;
  if(x>=0) {
    y=(1|0)*x;
  } else {
    y=(1|-1)*x;
  }
  i=0;
  while(i<y) {
    i++;
  }
  assert(x<=i);
}
\end{lstlisting}
\centering (a)
\end{minipage}
\hfill
% Right column: Panels 2 and 3 stacked vertically
\begin{minipage}[h]{0.67\textwidth}

% Top: Panel 2
\begin{minipage}[h]{\textwidth}
\[
{\footnotesize
\begin{aligned}
\chcs = \{ \;
    & ((x \bvsge 0, y = (1 \bvor 0) \bvmul x) \vee (x \bvslt 0, y = (1 \bvor -1) \bvmul x)), \\
    & \hspace*{2.85cm} x \neq (1 \bvshl (n \bvsub 1)) \; \rightarrow \; p(x,y,0), \\
    & p(x,y,i),\; i \bvslt y \; \rightarrow \; p(x,y,i \bvadd 1), \\
    & p(x,y,i),\; \neg(i \bvslt y) \; \rightarrow \; q(x,i), \\
    & q(x,i),\; \neg(x \bvsle i) \; \rightarrow \; \bot
\; \}
\end{aligned}
}%
\]
\centering (b)
\end{minipage}

\vspace{0.5em} % small vertical gap

% Bottom: Panel 3
\begin{minipage}[h]{\textwidth}
\[
{\footnotesize 
\begin{aligned}
\BVfragment = \{ \;
    & ((x \bvsge 0, y = (1 \bvor 0) \bvmul x) \vee (x \bvslt 0, y = (1 \bvor -1) \bvmul x)), \\
    & \hspace*{2.85cm} x \neq (1 \bvshl (n \bvsub 1)) \; \rightarrow \; \upredicateBV{p}(x,y,0)
\; \} \\[1ex]
\IAfragment = \{ \;
    & \upredicateIA{p}(x',y',i'),\; i' < y' \; \rightarrow \; \upredicateIA{p}(x',y',i'+1), \\
    & \upredicateIA{p}(x',y',i'),\; \neg(i' < y') \; \rightarrow \; \upredicateIA{q}(x',i'), \\
    & \upredicateIA{q}(x',i'),\; \neg(x' \leq i') \; \rightarrow \; \bot
\; \} \\[1ex]
\interfaces = \{\; &\interface{\upredicateBV{p}}{\upredicateIA{p}}\; \}
\end{aligned}
}%
\]
\centering (c)
\end{minipage}

\end{minipage}

\caption{The \textsf{abs-ge} benchmark.
$y$ is initialized to the absolute value of $x$.
All variables and constants in $\chcs$ have sort $\sortBV{n}$ for some $n \in \mathbb{N}^+$.
}
\end{figure}

%% file: benchmarks/abs-sum.tex
\begin{figure}[t]
\centering

% Left column: Panel 1
\begin{minipage}[h]{0.35\textwidth}
%\vspace{9cm} 
\begin{lstlisting}[style=compactC]
void f(int x, int y) {
  assume(x>0);
  assume(y>0);
  int y'=-y,a=0,i=0;
  int b,c;
  while(i<x) {
    a++;
    i++;
  }
  i=0;
  while(i>y') {
    a--;
    i--;
  }
  if(x>=0) {
    b=(1|0)*x;
  } else {
    b=(1|-1)*x;
  }
  if(y'>=0) {
    c=(1|0)*y';
  } else {
    c=(1|-1)*y';
  }
  if(a>=0) {
    assert(b>=c);
  }
}
\end{lstlisting}
\centering (a)
\end{minipage}
\hfill
% Right column: Panels 2 and 3 stacked vertically
\begin{minipage}[h]{0.60\textwidth}

% Top: Panel 2
\begin{minipage}[h]{\textwidth}
\[
{\footnotesize
\begin{aligned}
\chcs = \{ \;
    & x \bvsgt 0,\; y \bvsgt 0 \; \rightarrow \; p(x,\bvneg y,0,0), \\
    & p(x,y,a,i),\; i \bvslt x \; \rightarrow \; p(x,y,a \bvadd 1,i \bvadd 1), \\
    & p(x,y,a,i),\; \neg(i \bvslt x) \; \rightarrow \; q(x,y,a,0), \\
    & q(x,y,a,i),\; i \bvsgt y \; \rightarrow \; q(x,y,a \bvsub 1,i \bvsub 1), \\
    & q(x,y,a,i),\; \neg(i \bvsgt y) \; \rightarrow \; r(x,y,a),\\
    & r(x,y,a),\; (a \bvsge 0),\; \\
    & \hspace*{0.5cm} ((x \bvsge 0, b = (1 \bvor 0) \bvmul x) \vee (x \bvslt 0, b = (1 \bvor -1) \bvmul x)), \\
    & \hspace*{0.5cm}((y \bvsge 0, c = (1 \bvor 0) \bvmul y) \vee (y \bvslt 0, c = (1 \bvor -1) \bvmul y)), \\
    & \hspace*{5.55cm}\neg(b \bvsge c) \; \rightarrow \; \bot
\; \}
\end{aligned}
}%
\]
\centering (b)
\end{minipage}

\vspace{0.5em} % small vertical gap

% Bottom: Panel 3
\begin{minipage}[h]{\textwidth}
\[
{\footnotesize 
\begin{aligned}
\IAfragment = \{ \;
    & x' > 0,\; y' > 0,\; \restrBV{x'},\; \restrBV{y'} \; \rightarrow \; \upredicateIA{p}(x',-y',0,0), \\
    & \upredicateIA{p}(x',y',a',i'),\; i' < x' \; \rightarrow \; \upredicateIA{p}(x',y',a'+1,i'+1), \\
    & \upredicateIA{p}(x',y',a',i'),\; \neg(i' < x') \; \rightarrow \; \upredicateIA{q}(x',y',a',0), \\
    & \upredicateIA{q}(x',y',a',i'),\; i' > y' \; \rightarrow \; \upredicateIA{q}(x',y',a'-1,i'-1), \\
    & \upredicateIA{q}(x',y',a',i'),\; \neg(i' > y') \; \rightarrow \; \upredicateIA{r}(x',y',a')
\; \} \\[1ex]
\BVfragment = \{ \;
    & \upredicateBV{r}(x,y,a),\; (a \bvsge 0),\; \\
    & \hspace*{0.5cm} ((x \bvsge 0, b = (1 \bvor 0) \bvmul x) \vee (x \bvslt 0, b = (1 \bvor -1) \bvmul x)), \\
    & \hspace*{0.5cm}((y \bvsge 0, c = (1 \bvor 0) \bvmul y) \vee (y \bvslt 0, c = (1 \bvor -1) \bvmul y)), \\
    & \hspace*{5.55cm}\neg(b \bvsge c) \; \rightarrow \; \bot
\; \} \\[1ex]
\interfaces = \{\; &\interface{\upredicateIA{r}}{\upredicateBV{r}}\; \}
\end{aligned}
}%
\]
\centering (c)
\end{minipage}

\end{minipage}

\caption{The \textsf{abs-sum} benchmark. 
All variables and constants in $\chcs$ have sort $\sortBV{n}$ for some $n \in \mathbb{N}^+$.
Upon termination of the function, the program variables satisfy: \texttt{a = x - y}, \texttt{b = abs(x)}, and \texttt{c = abs(y)}.
}

\end{figure}

%% file: benchmarks/cond_negate.tex
\begin{figure}[t]
\centering

% Left column: Panel 1
\begin{minipage}[h]{0.32\textwidth}
\begin{lstlisting}[style=compactC]
void f(int x,int y) {
  assume(x>y);
  assume(y>0);
  int i=0,b;
  while(i<y) {
    i+=2;
  }
  if(i<=x) {
    b=1;
  } else {
    b=0;
  }
  assert((x^(-b))+b==-x);
}
\end{lstlisting}
\centering (a)
\end{minipage}
\hfill
% Right column: Panels 2 and 3 stacked vertically
\begin{minipage}[h]{0.67\textwidth}

% Top: Panel 2
\begin{minipage}[h]{\textwidth}
\[
{\footnotesize
\begin{aligned}
\chcs = \{ \;
    & x \bvsgt y,\; y \bvsgt 0 \; \rightarrow \; p(x,y,0), \\
    & p(x,y,i),\; i \bvslt y \; \rightarrow \; p(x,y,i \bvadd 2), \\
    & p(x,y,i),\; \neg(i \bvslt y) \; \rightarrow \; q(x,i), \\
    & q(x,i),\; ((i \bvsle x, b = 1) \vee ((i \bvsgt x, b = 0)), \\
    & \hspace*{1cm} \neg((x \bvxor (\bvneg b)) \bvadd b = \bvneg x)) \; \rightarrow \; \bot
\; \}
\end{aligned}
}%
\]
\centering (b)
\end{minipage}

\vspace{0.5em} % small vertical gap

% Bottom: Panel 3
\begin{minipage}[h]{\textwidth}
\[
{\footnotesize 
\begin{aligned}
\IAfragment = \{ \;
    & x' > y',\; y' > 0,\; \restrBV{x'},\; \restrBV{y'} \; \rightarrow \; \upredicateIA{p}(x',y',0), \\
    & \upredicateIA{p}(x',y',i'),\; i' < y' \; \rightarrow \; \upredicateIA{p}(x',y',i'+2), \\
    & \upredicateIA{p}(x',y',i'),\; \neg(i' < y') \; \rightarrow \; \upredicateIA{q}(x',i')
\; \} \\[1ex]
\BVfragment = \{ \;
    & \upredicateBV{q}(x,i),\; ((i \bvsle x, b = 1) \vee ((i \bvsgt x, b = 0)), \\
    & \hspace*{1cm} \neg((x \bvxor (\bvneg b)) \bvadd b = \bvneg x)) \; \rightarrow \; \bot
\; \} \\[1ex]
\interfaces = \{\; &\interface{\upredicateIA{q}}{\upredicateBV{q}}\; \}
\end{aligned}
}%
\]
\centering (c)
\end{minipage}

\end{minipage}

\caption{The \textsf{cond-neg} benchmark.
All variables and constants in $\chcs$ have sort $\sortBV{n}$ for some $n \in \mathbb{N}^+$.
The term $((x \bvxor (\bvneg b)) \bvadd b)$ evaluates to $\bvneg x$ if $b=1$ and to $x$ if $b=0$.}
\end{figure}

%% file: benchmarks/cond_negate_diff.tex
\begin{figure}[t]
\centering

% Left column: Panel 1
\begin{minipage}[h]{0.32\textwidth}
\begin{lstlisting}[style=compactC]
void f(int x,int y) {
  assume(x>y);
  assume(y>0);
  int i=0,a=0,b;
  while(i<x) {
    a--;
    i++;
  }
  i=0;
  while(i<y) {
    a++;
    i++;
  }
  if(a<0) {
    b=1;
  } else {
    b=0;
  }
  assert((a^(-b))+b==-a);
}
\end{lstlisting}
\centering (a)
\end{minipage}
\hfill
% Right column: Panels 2 and 3 stacked vertically
\begin{minipage}[h]{0.67\textwidth}

% Top: Panel 2
\begin{minipage}[h]{\textwidth}
\[
{\footnotesize
\begin{aligned}
\chcs = \{ \;
    & x \bvsgt y,\; y \bvsgt 0 \; \rightarrow \; p(x,y,0,0), \\
    & p(x,y,a,i),\; i \bvslt x \; \rightarrow \; p(x,y,a \bvsub 1,i \bvadd 1), \\
    & p(x,y,a,i),\; \neg(i \bvslt x) \; \rightarrow \; q(y,a,0), \\
    & q(y,a,i),\; i \bvslt y \; \rightarrow \; q(y,a \bvadd 1,i \bvadd 1), \\
    & q(y,a,i),\; \neg(i \bvslt y),\; a \bvslt 0 \; \rightarrow \; r(a,1), \\
    & q(y,a,i),\; \neg(i \bvslt y),\; \neg(a \bvslt 0) \; \rightarrow \; r(a,0), \\
    & r(a,b),\; \neg((a \bvxor (\bvneg b)) \bvadd b = \bvneg a) \; \rightarrow \; \bot
\; \}
\end{aligned}
}%
\]
\centering (b)
\end{minipage}

\vspace{0.5em} % small vertical gap

% Bottom: Panel 3
\begin{minipage}[h]{\textwidth}
\[
{\footnotesize 
\begin{aligned}
\IAfragment = \{ \;
    & x' > y',\; y' > 0,\; \restrBV{x'},\; \restrBV{y'} \; \rightarrow \; \upredicateIA{p}(x',y',0,0), \\
    & \upredicateIA{p}(x',y',a',i'),\; i' < x' \; \rightarrow \; \upredicateIA{p}(x',y',a'-1,i'+1), \\
    & \upredicateIA{p}(x',y',a',i'),\; \neg(i' < x') \; \rightarrow \; \upredicateIA{q}(y',a',0), \\
    & \upredicateIA{q}(y',a',i'),\; i' < y' \; \rightarrow \; \upredicateIA{q}(y',a'+1,i'+1), \\
    & \upredicateIA{q}(y',a',i'),\; \neg(i' < y'),\; a' < 0 \; \rightarrow \; \upredicateIA{r}(a',1), \\
    & \upredicateIA{q}(y',a',i'),\; \neg(i' < y'),\; \neg(a' < 0) \; \rightarrow \; \upredicateIA{r}(a',0)
\; \} \\[1ex]
\BVfragment = \{ \;
    & \upredicateBV{r}(a,b),\; \neg((a \bvxor (\bvneg b)) \bvadd b = \bvneg a) \; \rightarrow \; \bot
\; \} \\[1ex]
\interfaces = \{\; &\interface{\upredicateIA{r}}{\upredicateBV{r}}\; \}
\end{aligned}
}%
\]
\centering (c)
\end{minipage}

\end{minipage}

\caption{The \textsf{cond-neg-diff} benchmark.
All variables and constants in $\chcs$ have sort $\sortBV{n}$ for some $n \in \mathbb{N}^+$.
The term $((a \bvxor (\bvneg b)) \bvadd b)$ evaluates to $\bvneg a$ if $b=1$ and to $a$ if $b=0$.}
\end{figure}

%% file: benchmarks/max-inv.tex
\begin{figure}[h]
\centering

% Left column: Panel 1
\begin{minipage}[h]{0.36\textwidth}
%\vspace{3em} 
\begin{lstlisting}[style=compactC]
void f(unsigned x,unsigned y,unsigned z) {
  assume(x>y);
  assume(x-y>=z);
  unsigned a=x;
  unsigned i=0;
  unsigned max=0;
  while(i<z) {
    a--;
    i++;
  }
  if (a<y) {
    max=a^((a^y)&-1);
  } else {
    max=a^((a^y)&0);
  }
  assert(a==max);
}
\end{lstlisting}
\centering (a)
\end{minipage}
\hfill
% Right column: Panels 2 and 3 stacked vertically
\begin{minipage}[h]{0.62\textwidth}

% Top: Panel 2
\begin{minipage}[h]{\textwidth}
\[
{\footnotesize
\begin{aligned}
\chcs = \{ \;
& x \bvugt y, \; x \bvsub y \bvuge z \; \rightarrow \; p(y,z,x,0), \\
& p(y,z,a,i), \; i \bvult z \; \rightarrow \; p(y,z,a\bvsub 1, i\bvadd 1), \\
& p(y,z,a,i), \; \neg(i \bvult z) \; \rightarrow \; q(y,a), \\
& q(y,a), \; ((a \bvult y, b=-1) \vee (\neg(a \bvult y), b=0)), \\
& \hspace*{1.5cm} \neg(a = (a \bvxor ((a \bvxor y) \; \bvand \; b))) \; \rightarrow \; \bot
\; \}
\end{aligned}
}%
\]
\centering (b)
\end{minipage}

\vspace{0.5em} % small vertical gap

% Bottom: Panel 3
\begin{minipage}[h]{\textwidth}
\[
{\footnotesize 
\begin{aligned}
\IAfragment = \{ \;
  & x' > y', \; x' - y' \geq z', \; \restrBV{x'}, \; \restrBV{y'}, \; \restrBV{z'} \; \rightarrow \\
  & \hspace*{4.9cm} \upredicateIA{p}(y',z',x',0), \\
  & \upredicateIA{p}(y',z',a',i'), \; i' < z' \; \rightarrow \; \upredicateIA{p}(y',z',a'-1, i'+1), \\
  & \upredicateIA{p}(y',z',a',i'), \; \neg(i' < z') \; \rightarrow \; \upredicateIA{q}(y',a')
\; \} \\[1ex]
\BVfragment = \{ \;
  & \upredicateBV{q}(y,a), \; ((a \bvult y, b=-1) \vee (\neg(a \bvult y), b=0)), \\
  & \hspace*{1.5cm} \neg(a = (a \bvxor ((a \bvxor y) \; \bvand \; b))) \; \rightarrow \; \bot
\; \} \\[1ex]
\interfaces = \{\; &\interface{\upredicateIA{q}}{\upredicateBV{q}}\; \}
\end{aligned}
}%
\]
\centering (c)
\end{minipage}

\end{minipage}

\caption{The \textsf{max-inv} benchmark.
All variables and constants in $\chcs$ have sort $\sortBV{n}$ for some $n \in \mathbb{N}^+$.
The term $(a \bvxor ((a \bvxor y) \;\bvand\; b))$ evaluates to $\max(a,y)$, where $b=-1$ if $a \bvult y$ and $b=0$ otherwise.}
\vspace{-12pt}
\end{figure}

%% file: benchmarks/opposite_signs_diff.tex
\begin{figure}[t]
\centering

% Left column: Panel 1
\begin{minipage}[h]{0.34\textwidth}
\begin{lstlisting}[style=compactC]
void f(int x, int y) {
  assume(x>y);
  assume(y>0);
  int a=0,b=0,i=0;
  while(i<x) {
    a++;
    b--;
    i++;
  }
  i=0;
  while(i<y) {
    a--;
    b++;
    i++;
  }
  assert((a^b)<0);
}
\end{lstlisting}
\centering (a)
\end{minipage}
\hfill
% Right column: Panels 2 and 3 stacked vertically
\begin{minipage}[h]{0.60\textwidth}

% Top: Panel 2
\begin{minipage}[h]{\textwidth}
\[
{\footnotesize
\begin{aligned}
\chcs = \{ \;
    & x \bvsgt y,\; y \bvsgt 0 \; \rightarrow \; p(x,y,0,0,0), \\
    & p(x,y,a,b,i),\; i \bvslt x \; \rightarrow \; p(x,y,a \bvadd 1,b \bvsub 1, i \bvadd 1), \\ 
    & p(x,y,a,b,i),\; \neg(i \bvslt x) \; \rightarrow \; q(y,a,b,0), \\
    & q(y,a,b,i),\; i \bvslt y \; \rightarrow \; q(y,a \bvsub 1,b \bvadd 1,i \bvadd 1), \\
    & q(y,a,b,i),\; \neg(i \bvslt y) \; \rightarrow \; r(a,b), \\
    & r(a,b), \; \neg((a \bvxor b) \bvslt 0) \; \rightarrow \; \bot 
\; \}
\end{aligned}
}%
\]
\centering (b)
\end{minipage}

\vspace{0.5em} % small vertical gap

% Bottom: Panel 3
\begin{minipage}[h]{\textwidth}
\[
{\footnotesize 
\begin{aligned}
\IAfragment = \{ \;
    & x' > y',\; y' > 0,\; \restrBV{x'},\; \restrBV{y'} \; \rightarrow \; \upredicateIA{p}(x',y',0,0,0), \\
    & \upredicateIA{p}(x',y',a',b',i'),\; i' < x' \; \rightarrow \; \upredicateIA{p}(x',y',a'+1,b'-1,i'+1), \\ 
    & \upredicateIA{p}(x',y',a',b',i'),\; \neg(i' < x') \; \rightarrow \; \upredicateIA{q}(y',a',b',0), \\
    & \upredicateIA{q}(y',a',b',i'),\; i' < y' \; \rightarrow \; \upredicateIA{q}(y',a'-1,b'+1,i'+1), \\
    & \upredicateIA{q}(y',a',b',i'),\; \neg(i' < y') \; \rightarrow \; \upredicateIA{r}(a',b')
\; \} \\[1ex]
\BVfragment = \{ \;
  & \upredicateBV{r}(a,b), \; \neg((a \bvxor b) \bvslt 0) \; \rightarrow \; \bot
\; \} \\[1ex]
\interfaces = \{\; &\interface{\upredicateIA{r}}{\upredicateBV{r}}\; \}
\end{aligned}
}%
\]
\centering (c)
\end{minipage}

\end{minipage}

\caption{The \textsf{opp-signs-diff} benchmark.
All variables and constants in $\chcs$ have sort $\sortBV{n}$ for some $n \in \mathbb{N}^+$.
The term $(a \bvxor b) \bvslt 0$ evaluates to true if and only if $a$ and $b$ have opposite signs.}
\vspace{-12pt}
\end{figure}

%% file: benchmarks/swap.tex
\begin{figure}[t]
\centering

% Left column: Panel 1
\begin{minipage}[h]{0.32\textwidth}
\begin{lstlisting}[style=compactC]
void f(int x,int y) {
  assume(x>y);
  assume(y>0);
  int a=0,b=0;
  while(b<y) {
    a++;
    b++;
  }
  while(a<x) {
    a++;
  }
  a=a^b;
  b=b^a;
  a=a^b;
  assert(a<b);
}
\end{lstlisting}
\centering (a)
\end{minipage}
\hfill
% Right column: Panels 2 and 3 stacked vertically
\begin{minipage}[h]{0.67\textwidth}

% Top: Panel 2
\begin{minipage}[h]{\textwidth}
\[
{\footnotesize
\begin{aligned}
\chcs = \{ \;
    & x \bvugt y,\; y \bvugt 0 \; \rightarrow \; p(x,y,0,0), \\
    & p(x,y,a,b),\; b \bvult y \; \rightarrow \; p(x,y,a \bvadd 1,b \bvadd 1), \\
    & p(x,y,a,b),\; \neg(b \bvult y) \; \rightarrow \; q(x,a,b), \\ 
    & q(x,a,b),\; a \bvult x \; \rightarrow \; q(x,a \bvadd 1,b), \\
    & q(x,a,b),\; \neg(a \bvult x) \; \rightarrow \; r(a,b), \\
    & r(a,b),\; a_1 = a \bvxor b,\; b_1 = b \bvxor a_1,\; \neg(a_1 \bvxor b_1 \bvult b_1) \; \rightarrow \; \bot
\; \}
\end{aligned}
}%
\]
\centering (b)
\end{minipage}

\vspace{0.5em} % small vertical gap

% Bottom: Panel 3
\begin{minipage}[h]{\textwidth}
\[
{\footnotesize 
\begin{aligned}
\IAfragment = \{ \;
    & x' > y',\; y' > 0,\; \restrBV{x'},\; \restrBV{y'} \; \rightarrow \; \upredicateIA{p}(x',y',0,0), \\
    & \upredicateIA{p}(x',y',a',b'),\; b' < y' \; \rightarrow \; \upredicateIA{p}(x',y',a'+1,b'+1), \\
    & \upredicateIA{p}(x',y',a',b'),\; \neg(b' < y') \; \rightarrow \; \upredicateIA{q}(x',a',b'), \\ 
    & \upredicateIA{q}(x',a',b'),\; a' < x' \; \rightarrow \; \upredicateIA{q}(x',a'+1,b'), \\
    & \upredicateIA{q}(x',a',b'),\; \neg(a' < x') \; \rightarrow \; \upredicateIA{r}(a',b')
\; \} \\[1ex]
\BVfragment = \{ \;
    & \upredicateBV{r}(a,b),\; a_1 = a \bvxor b,\; b_1 = b \bvxor a_1,\; \neg(a_1 \bvxor b_1 \bvult b_1) \; \rightarrow \; \bot
\; \} \\[1ex]
\interfaces = \{\; &\interface{\upredicateIA{r}}{\upredicateBV{r}}\; \}
\end{aligned}
}%
\]
\centering (c)
\end{minipage}

\end{minipage}

\caption{The \textsf{swap} benchmark.
All variables and constants in $\chcs$ have sort $\sortBV{n}$ for some $n \in \mathbb{N}^+$.
For any $a,b$, the expressions $a_1 \bvxor b_1$ and $b_1$ evaluate to $b$ and $a$, respectively, where $a_1 = a \bvxor b$ and $b_1 = b \bvxor a_1$.}
\end{figure}

%% file: benchmarks/swap_sum.tex
\begin{figure}[t]
\centering

% Left column: Panel 1
\begin{minipage}[h]{0.32\textwidth}
\begin{lstlisting}[style=compactC]
void f(int x,int y) {
  assume(y<=INT_MAX-x);
  assume(y>0);
  int a=0,i=0;
  while(i<x) {
    a++;
    i++;
  }
  i=0;
  while(i<y) {
    a++;
    i++;
  }
  y=y^a;
  a=a^y;
  y=y^a;
  assert(y>=a);
}
\end{lstlisting}
\centering (a)
\end{minipage}
\hfill
% Right column: Panels 2 and 3 stacked vertically
\begin{minipage}[h]{0.67\textwidth}

% Top: Panel 2
\begin{minipage}[h]{\textwidth}
\[
{\footnotesize
\begin{aligned}
\chcs = \{ \;
    & y \bvule (2^n-1) \bvsub x \; \rightarrow \; p(x,y,0,0), \\
    & p(x,y,a,i),\; i \bvult x \; \rightarrow \; p(x,y,a \bvadd 1,i \bvadd 1), \\
    & p(x,y,a,i),\; \neg(i \bvult x) \; \rightarrow \; q(y,a,0), \\
    & q(y,a,i),\; i \bvult y \; \rightarrow \; q(y,a \bvadd 1,i \bvadd 1), \\
    & q(y,a,i),\; \neg(i \bvult y) \; \rightarrow \; r(y,a), \\
    & r(y,a),\; y_1 = y \bvxor a,\; a_1 = a \bvxor y_1,\; \neg(y_1 \bvxor a_1 \bvuge a_1) \; \rightarrow \; \bot
\; \}
\end{aligned}
}%
\]
\centering (b)
\end{minipage}

\vspace{0.5em} % small vertical gap

% Bottom: Panel 3
\begin{minipage}[h]{\textwidth}
\[
{\footnotesize 
\begin{aligned}
\IAfragment = \{ \;
    & y' \leq (2^n-1) - x',\; \restrBV{x'} ,\; \restrBV{y'} \; \rightarrow \; \upredicateIA{p}(x',y',0,0), \\
    & \upredicateIA{p}(x',y',a',i'),\; i' < x' \; \rightarrow \; \upredicateIA{p}(x',y',a'+1,i'+1), \\
    & \upredicateIA{p}(x',y',a',i'),\; \neg(i' < x') \; \rightarrow \; \upredicateIA{q}(y',a',0), \\
    & \upredicateIA{q}(y',a',i'),\; i' < y' \; \rightarrow \; \upredicateIA{q}(y',a'+1,i'+1), \\
    & \upredicateIA{q}(y',a',i'),\; \neg(i' < y') \; \rightarrow \; \upredicateIA{r}(y',a')
\; \} \\[1ex]
\BVfragment = \{ \;
    & \upredicateBV{r}(y,a),\; y_1 = y \bvxor a,\; a_1 = a \bvxor y_1,\; \neg(y_1 \bvxor a_1 \bvuge a_1) \; \rightarrow \; \bot
\; \} \\[1ex]
\interfaces = \{\; &\interface{\upredicateIA{r}}{\upredicateBV{r}}\; \}
\end{aligned}
}%
\]
\centering (c)
\end{minipage}

\end{minipage}

\caption{The \textsf{swap-sum} benchmark.
All variables and constants in $\chcs$ have sort $\sortBV{n}$ for some $n \in \mathbb{N}^+$.
For any $y,a$, the expressions $y_1 \bvxor a_1$ and $a_1$ evaluate to $a$ and $y$, respectively, where $y_1 = y \bvxor a$ and $a_1 = a \bvxor y_1$.
}
\end{figure}

%% file: benchmarks/turn_off_rm1.tex
\begin{figure}[t]
\centering

% Left column: Panel 1
\begin{minipage}[h]{0.32\textwidth}
\begin{lstlisting}[style=compactC]
void f(int x) {
  assume(x>0);
  int y=0,z=x;
  while(y<x-1) {
    y++;
    z--;
  }
  assert(z&(z-1)==0);
}
\end{lstlisting}
\centering (a)
\end{minipage}
\hfill
% Right column: Panels 2 and 3 stacked vertically
\begin{minipage}[h]{0.67\textwidth}

% Top: Panel 2
\begin{minipage}[h]{\textwidth}
\[
{\footnotesize
\begin{aligned}
\chcs = \{ \;
    & x \bvugt 0 \; \rightarrow \; p(x,0,x), \\
    & p(x,y,z),\; y \bvult x \bvsub 1 \; \rightarrow \; p(x,y \bvadd 1,z \bvsub 1), \\
    & p(x,y,z),\; \neg(y \bvult x \bvsub 1) \; \rightarrow \; q(z), \\
    & q(z),\; \neg((z \bvand (z \bvsub 1)) = 0) \; \rightarrow \; \bot
\; \}
\end{aligned}
}%
\]
\centering (b)
\end{minipage}

\vspace{0.5em} % small vertical gap

% Bottom: Panel 3
\begin{minipage}[h]{\textwidth}
\[
{\footnotesize 
\begin{aligned}
\IAfragment = \{ \;
    & x' > 0,\; \restrBV{x'} \; \rightarrow \; \upredicateIA{p}(x',0,x'), \\
    & \upredicateIA{p}(x',y',z'),\; y' < x' - 1 \; \rightarrow \; \upredicateIA{p}(x',y'+1,z'-1), \\
    & \upredicateIA{p}(x',y',z'),\; \neg(y' < x' - 1) \; \rightarrow \; \upredicateIA{q}(z')
\; \} \\[1ex]
\BVfragment = \{ \;
    & \upredicateBV{q}(z),\; \neg((z \bvand (z \bvsub 1)) = 0) \; \rightarrow \; \bot
\; \} \\[1ex]
\interfaces = \{\; &\interface{\upredicateIA{q}}{\upredicateBV{q}}\; \}
\end{aligned}
}%
\]
\centering (c)
\end{minipage}

\end{minipage}

\caption{The \textsf{turn-off-rm} benchmark.
All variables and constants in $\chcs$ have sort $\sortBV{n}$ for some $n \in \mathbb{N}^+$.
Given a bit-vector $z$, the expression $z \bvand (z \bvsub 1)$ clears the rightmost 1-bit of $z$, yielding $0$ if $z$ contains no 1-bits.}
\end{figure}

%% file: benchmarks/turn_on_lsb.tex
\begin{figure}[t]
\centering

% Left column: Panel 1
\begin{minipage}[h]{0.32\textwidth}
\begin{lstlisting}[style=compactC]
void f(int x) {
  assume(x>0);
  int a=0,y=x,b=0;
  while(a<x) {
    a++;
  }
  while(y>0) {
    y--;
    b--;
  }
  assert(((a+b)|1)==1);
}
\end{lstlisting}
\centering (a)
\end{minipage}
\hfill
% Right column: Panels 2 and 3 stacked vertically
\begin{minipage}[h]{0.67\textwidth}

% Top: Panel 2
\begin{minipage}[h]{\textwidth}
\[
{\footnotesize
\begin{aligned}
\chcs = \{ \;
    & x \bvsgt 0 \; \rightarrow \; p(x,0), \\
    & p(x,a),\; a \bvslt x \; \rightarrow \; p(x,a \bvadd 1), \\
    & x \bvsgt 0 \; \rightarrow \; q(x,x,0), \\
    & q(x,y,b),\; y \bvsgt 0 \; \rightarrow \; q(x,y \bvsub 1,b \bvsub 1), \\
    & p(x,a),\; \neg(a \bvslt x),\; q(x,y,b),\; \neg(y \bvsgt 0) \; \rightarrow \; r(a, b), \\
    & r(a,b),\; \neg(((a \bvadd b) \bvor 1) = 1) \; \rightarrow \; \bot
\; \}
\end{aligned}
}%
\]
\centering (b)
\end{minipage}

\vspace{0.5em} % small vertical gap

% Bottom: Panel 3
\begin{minipage}[h]{\textwidth}
\[
{\footnotesize 
\begin{aligned}
\IAfragment = \{ \;
    & x' > 0,\; \restrBV{x'} \; \rightarrow \; \upredicateIA{p}(x',0), \\
    & \upredicateIA{p}(x',a'),\; a' < x' \; \rightarrow \; \upredicateIA{p}(x',a'+1), \\
    & x' > 0 \; \rightarrow \; \upredicateIA{q}(x',x',0), \\
    & \upredicateIA{q}(x',y',b'),\; y' > 0 \; \rightarrow \; \upredicateIA{q}(x',y'-1,b'-1), \\
    & \upredicateIA{p}(x',a'),\; \neg(a' < x'),\; \upredicateIA{q}(x',y',b'),\; \neg(y' > 0) \; \rightarrow \; \upredicateIA{r}(a', b')
\; \} \\[1ex]
\BVfragment = \{ \;
    & \upredicateBV{r}(a,b),\; \neg(((a \bvadd b) \bvor 1) = 1) \; \rightarrow \; \bot
\; \} \\[1ex]
\interfaces = \{\; &\interface{\upredicateIA{r}}{\upredicateBV{r}}\; \}
\end{aligned}
}%
\]
\centering (c)
\end{minipage}

\end{minipage}

\caption{The \textsf{turn-on-lsb} benchmark.
All variables and constants in $\chcs$ have sort $\sortBV{n}$ for some $n \in \mathbb{N}^+$.
Given a bit-vector $x$, the expression $x \bvor 1$ sets the least significant bit of $x$ to~1.}
\end{figure}